\renewcommand{\ltiff}[1]{~\mbox{iff~#1}}
\renewcommand{\ruledefNamed}[4]{
	\begin{minipage}{#1}
	\begin{equation}
		\label{rule:#3}
			#4
	\end{equation}
	\end{minipage}
}
\newcommand{\shorteqn}[2]{
	\begin{minipage}[b]{#1}
	\begin{equation}
	#2
	\end{equation}
	\end{minipage}
}
\newcommand{\htripp}[2]{\htrip{p}{#1}{#2}}
\newcommand{\htrippq}[1]{\htripp{#1}{q}}
\renewcommand{\step}[1]{$#1$}
\renewcommand{\trans}[1]{$#1$&}
\newcommand{\explanation}[1]{& #1 \\}
\definecolor{dkgreen}{rgb}{0,0.6,0}
\definecolor{gray}{rgb}{0.5,0.5,0.5}
\definecolor{mauve}{rgb}{0.58,0,0.82}
\renewcommand{\csc}{\cunder{\SCmm}}
\newcommand{\ctso}{\cunder{\TSOmm}}
\renewcommand{\Meaning}[1]{[\!\![ #1 ]\!]}
\renewcommand{\Meaning}[1]{\llbracket #1 \rrbracket}
\newcommand{\effword}{{\sf eff}}
\newcommand{\eff}[1]{\effword(#1)}
\newcommand{\effa}{\eff{\aca}}
\newcommand{\effc}{\eff{\cmdc}}
\newcommand{\State}{\Sigma}
\renewcommand {\refth}[1] {Thm.~\ref{theorem:#1}}
\renewcommand {\refth}[1] {Theorem~\ref{theorem:#1}}
\renewcommand{\refeq}{=}
\newcommand{\bigunion}{{\textstyle\bigcup}}
\newcommand{\bigintersect}{{\textstyle\bigcap}}
\newcommand{\bigintU}[1]{\underset{#1}{\bigintersect}}
\newcommand{\bigintUn}{\bigintU{n}}
\let\Oldcat\cat
\renewcommand{\cat}{\mathsmaller{\mathsmaller{\Oldcat}}}
\renewcommand{\True}{True}
\renewcommand{\False}{False}
\newcommand{\isLoad}[1]{\mathsf{isLoad}(#1)}
\newcommand{\isStore}[1]{\mathsf{isStore}(#1)}
\newcommand{\isLoada}{\isLoad{\aca}}
\newcommand{\isLoadb}{\isLoad{\acb}}
\newcommand{\isStorea}{\isStore{\aca}}
\newcommand{\isReg}[1]{\mathsf{isReg}(#1)}
\newcommand{\isRega}{\isReg{\aca}}
\newcommand{\ppseq}[1]{\mathbin{\raisebox{0.0pt}{$\overset{\raisebox{0.0pt}{$\mathsmaller{\mathsmaller{#1}}$}}{\semicolon}$}}}
\newcommand{\ppseqm}{\ppseq{\mm}}
\newcommand{\ppseqc}{\mathbin{\ppseq{}}}
\newcommand{\ppseqs}{\ppseq{\SCmm}}
\newcommand{\ppseqt}{\ppseq{\TSOmm}}
\newcommand{\ppseqA}{\ppseq{\ARMmm}}
\newcommand{\ppseqG}{\ppseq{\Gmm}}
\newcommand{\ppseqGz}{\ppseq{\Gzmm}}
\newcommand{\ppseqRV}{\ppseq{\RVmm}}
\renewcommand{\scomp}{\mathbin{.{}\semicolon}}
\newcommand{\guardf}{\guard{f}}
\renewcommand{\refeqn}[1]{(\ref{eqn:#1})}
\renewcommand{\refeqns}[2]{\refeqn{#1} and~\refeqn{#2}}
\renewcommand{\WHbc}{\While b~\Do~c}
\renewcommand{\reftable}[1]{Table~\ref{table:#1}}
\renewcommand{\refsect}[1]{Sect.~\ref{sect:#1}}
\newcommand{\iterate}[2]{#1^\omega_#2}
\newcommand{\iteratec}[1]{\iterate{\cmdc}{#1}}
\newcommand{\iteratecm}{\iteratec{\mm}}
\newcommand{\finiterate}[3]{#1^{#3}_{#2}}
\newcommand{\finiteratec}[2]{\finiterate{\cmdc}{#1}{#2}}
\newcommand{\finiteratecm}[1]{\finiteratec{\mm}{#1}}
\newcommand{\finiteratecmn}{\finiteratecm{n}}
\newcommand{\labellaw}[1]{\label{law:#1}}
\newcommand{\labeltable}[1]{\label{table:#1}}
\newcommand{\labeldefn}[1]{\label{defn:#1}}
\renewcommand{\refmm}[1]{{\bf M\ref{mm:#1}}}
\newcommand{\refMM}[1]{{\bf Model~\ref{mm:#1}}}
\newcommand{\scat}{\cat}
\renewcommand{\refrule}[1]{(\ref{rule:#1})}
\renewcommand{\refrules}[2]{(\ref{rule:#1}, \ref{rule:#2})}
\renewcommand{\store}[2]{(#1 , #2)}
\renewcommand{\storex}[1]{\store{x}{#1}}
\renewcommand{\storexv}{\storex{v}}
\newcommand{\tsofence}{\mathtt{mfence}}
\newcommand{\tsolfence}{\mathtt{lfence}}
\renewcommand{\eseq}{[]}
\renewcommand{\Update}[3]{{#1}_{[#2 \smallasgn #3]}}
\renewcommand{\ttdef}{\mathbin{:\!:\!=}}
\renewcommand{\cbar}{\mathbin{~|~}}
\renewcommand{\If}{\mathrel{\keywordfont{if}}}
\renewcommand{\Then}{\mathrel{\keywordfont{then}}}
\renewcommand{\While}{\mathrel{\keywordfont{while}}}
\renewcommand{\Do}{{\mathrel{\keywordfont{do}}}}
\newcommand{\wptrans}[1]{wp(#1)}
\renewcommand{\wpc}{\wptrans{\cmdc}}
\newcommand{\refstoRO}[1]{\overset{#1}{\refsto}}
\newcommand{\tracesRO}[2]{\overset{#1}{\Meaning{#2}}}
\newcommand{\refstoROm}{\refstoRO{\mm}}
\newcommand{\tracesROm}[1]{\tracesRO{\mm}{#1}}
\renewcommand\notin{\mathbin{\overlay{\hskip0.7pt/}{$\in$}}}
\renewcommand{\asgn}{\mathbin{\mathtt{:\!=}}}
\renewcommand{\load}[2]{\mathtt{\LoadWord~#1 , #2}}
\def\keywordfont{\mathbf}
\renewcommand{\ffence}{\keywordfont{fence}}
\renewcommand{\cfence}{\fencep{\mathsf{ctrl}}}
\renewcommand{\wwfence}{\fencep{\mathsf{ww}}}
\newcommand{\llfence}{\fencep{\mathsf{ll}}}
\newcommand{\localss}[1]{\locals{\sigma}{#1}}
\newcommand{\localssc}{\localss{\cmdc}}
\renewcommand{\loadDistinct}[2]{\mx{\sv{#1}}{\sv{#2}}}
\begin{document}

\title{Parallelized sequential composition, pipelines, and hardware weak memory models}

\author{Robert J. Colvin}

\affiliation{
	\institution{
	Defence Science and Technology Group, Australia 
	} 
	\department{
	The University of Queensland
	}
}
\email{r.colvin@uq.edu.au}

\OMIT{
\author{Graeme Smith}

\affiliation{
	\institution{
	Defence Science and Technology Group, Australia
	}
	\department{
	The University of Queensland
	}
}
}

\begin{abstract}

Since the introduction of the CDC 6600 in 1965 and its ``scoreboarding'' technique processors have not (necessarily) executed instructions in program order.
Programmers of high-level code may sequence independent instructions in arbitrary order, and it is a matter of significant programming abstraction and
computational efficiency that the processor can be relied upon to make sensible parallelizations/reorderings of low-level instructions to take advantage of, for
instance, multiple arithmetic units.  At the architectural level such reordering is typically implemented via a per-processor pipeline, into which instructions
are fetched in order, but possibly committed out of order depending on local considerations, provided any reordering preserves sequential semantics from that
processor's perspective.  However, multiprocessing and multicore architectures, where several pipelines run in parallel, can expose these processor-level
reorderings as unexpected, or ``weak'', behaviours.  Such weak behaviours are hard to reason about, and (via speculative execution) underlie at least one class
of widespread security vulnerability.  

In this paper we introduce a novel program operator, \emph{parallelized sequential composition}, which can be instantiated with a function $\mm$ that controls
the reordering of atomic instructions.  The operator generalises both standard sequential composition and parallel composition, and when appropriately
instantiated exhibits many of the weak behaviours of the well-known hardware weak memory models TSO, Release Consistency, ARM, and RISC-V.  We show that the use
of this program-level operator is equivalent to sequential execution with reordering via a pipeline.  Encoding reordering as a programming language operator
admits the application of established compositional techniques (such as Owicki-Gries) for reasoning about weak behaviours, and is convenient for abstractly
expressing properties from the literature such as sequential consistency.  The semantics and theory is encoded and verified in the Isabelle/HOL theorem prover,
and we give an implementation of the pipeline semantics in the Maude rewriting engine and use it empirically to show conformance of the semantics against
established models of ARM and RISC-V, and elucidate some stereotypical weak behaviours from the literature.

\end{abstract}

\begin{CCSXML}
<ccs2012>
<concept>
<concept_id>10011007.10011006.10011039.10011311</concept_id>
<concept_desc>Software and its engineering~Semantics</concept_desc>
<concept_significance>500</concept_significance>
</concept>
<concept>
<concept_id>10011007.10011006.10011008.10011009.10011014</concept_id>
<concept_desc>Software and its engineering~Concurrent programming languages</concept_desc>
<concept_significance>300</concept_significance>
</concept>
<concept>
<concept_id>10010147.10010148.10010162</concept_id>
<concept_desc>Computing methodologies~Computer algebra systems</concept_desc>
<concept_significance>100</concept_significance>
</concept>
</ccs2012>
\end{CCSXML}

\ccsdesc[500]{Software and its engineering~Semantics}
\ccsdesc[300]{Software and its engineering~Concurrent programming languages}
\ccsdesc[100]{Computing methodologies~Computer algebra systems}

\keywords{Semantics, pipelines, weak memory models}




\maketitle

\section{Introduction}

The 1960s saw significant improvements in processor efficiency, including allowing out-of-order instruction execution in cases
where program semantics would not be lost (\eg, the ``scoreboarding'' technique of the CDC 6600 \cite{CDC6600}) and maximising use of multiple
computation units (\eg,
Tomasulo's algorithm \cite{Tomasulo67},
implemented in the IBM System360/91 \cite{IBM360/91}).
These advances meant that instructions could be
distributed in parallel among several subunits and their results combined, provided the computation of one did not depend on an incomplete result of another.
Furthermore,
interactions with main memory can be relatively slow in comparison to local computation,
and so allowing independent loads and stores to proceed in parallel also improved throughput.
Even more complex is speculative execution, in the sense of
guessing the result of a branch condition evaluation and transiently executing down that path.
These features had the effect of greatly increasing the speed of processors, and without any visible intrusion on
programmers: the conditions under which parallelization could take place ensured the \emph{sequential semantics} of any computation was
maintained.  

When concurrency is used, either explicitly as threads sharing a single processor or via multicore architectures,
the effect of out-of-order execution may be exposed, as the reordering of 
accesses of shared memory
can dramatically change concurrent behaviour.
This has provided a challenge for developing efficient, correct and secure concurrent software for modern processors that
communicate via shared memory \cite{SharedMemModelsTutorial,Hill98,Batty2015}.
Order can be restored by injecting artificial dependencies between instructions (usually called fences or barriers), but the performance cost is significant;
for instance, performance concerns hamper the 
widespread mitigation of the Spectre class of security vulnerabilities, despite their seriousness \cite{Spectre,SpectrePrime}.

To reason about the impact of reorderings on program behaviour we introduce \emph{parallelized sequential composition} as a
primitive operator of an imperative language.  The program `$c_1 \ppseqm c_2$', for some function on instructions $\mm$, may execute $c_1$ and $c_2$ in order, or it may
interleave actions of $c_2$ with those of $c_1$ provided $\mm$ allows it.  We give the weakest $\mm$ such that $c_1 \ppseqm c_2$ preserves the intended
sequential semantics of a single process despite executing some instructions out of order, and show how modern \emph{memory models} are strengthenings of this concept, with the
possible addition of further instruction types to restore order.  Based on restrictions built into processors as early as the 1960s, we show that this concept of
thread-local reorderings explains many of the behaviours observed on today's multicore architectures, such as TSO \cite{Intel64ArchManual} and ARM \cite{ARMv8A-Manual}, and
conforms to RISC-V \cite{RISC-V-ISAManual2017}.  We derive language-level properties of \pseqc, and use these to explain possibly unexpected behaviours algebraically.  In
particular, under some circumstances the possible reorderings can be reduced to a nondeterministic choice between sequential behaviours, and then existing techniques for
reasoning about concurrency, such as the Owicki-Gries method \cite{OwickiGries76}, can be employed directly.  The language, its semantics, and the properties in the paper
are encoded and machine-checked in the Isabelle/HOL theorem prover \cite{IsabelleHOL,Paulson:94} (see supplementary material).  


\OMIT{
For instance, a classic example of weak memory models producing unexpected behaviour is the ``store buffer'' pattern below \cite{LitmusTests}.
Assume that all variables are initially 0, that $r_1$ and $r_2$ are thread-local variables, and that $x$ and $y$ are shared variables.
\begin{equation}
	(x \asgn 1 \cbef r_1 \asgn y)
	\pl 
	(y \asgn 1 \cbef r_2 \asgn x)
\end{equation}
It is possible to reach a final state in which $r_1 = r_2 = 0$ in several weak memory models:  the two assignments in each
process are independent (they reference different variables), and hence can be reordered.  From a sequential semantics perspective,
reordering the assignments in process 1, for example, preserves the final values
for $r_1$ and $x$.
}



In \refsect{overview} we provide a foundation for instruction reordering, and provide a range
of theoretical memory models.
In \refsect{pipeline} we give a straightforward abstract semantics for a hardware pipeline with reordering.
In \refsect{syntax} we fully define the syntax and semantics of an imperative language, \impro, that includes conditionals, loops, and \pseqc. 
In \refsect{eff-semantics}
we show how standard Hoare logic and weakest preconditions can be used to reason about \impro.
We then define TSO (\refsect{tso}), Release Consistency (\refsect{rc-model}), ARM (\refsect{armv8}) and RISC-V (\refsect{riscv}) as instances of \pseqc,
showing conformance via both general properties and empirically (ARM and RISC-V).
We discuss related work in \refsect{related-work}.

\section{Foundations of instruction reordering}
\labelsect{overview}
\labelsect{overview-reordering}
\labelsect{foundations}

In this section we describe instruction reordering in a simple language containing a minimal set of instruction types and 
the novel operator \pseqc (we give a richer language in \refsect{syntax}).  Instances of this operator are parameterized by a function on instructions,
which for convenience we call a \emph{memory model}.%
\footnote{
Memory models may embrace global features in addition to weak behaviours due to pipelining; since many behaviours of weak memory models are explained 
by pipeline reordering we use this more general term.
}
We use this foundation to explore theoretically significant memory models that underlie modern processors.

We start with a basic imperative language containing just assignments and guards as actions (type $\Instr$) with parallelized sequential
composition as combinator.
\begin{equation*}
	\aca \sspace \ttdef \sspace x \asgn e \csep \guarde 
	\qquad \qquad
	c \sspace \ttdef \sspace \Skip \csep \aca \csep c_1 \ppseqm c_2
\end{equation*}
An assignment $x \asgn e$ is the typical notion of an update, encompassing
stores and loads, while a guard action $\guarde$ represents a test on the state (does expression $e$ evaluate to $\True$) which can be used to
model conditionals/branches.
A command is either the terminated command $\Skip$ (corresponding to a \T{no-op}), an action $\aca$, or the composition of two commands
according to some memory model $\mm$.

The intention is that a simple command $\aca \ppseqm \acb$ is free to execute instruction $\acb$ before instruction $\aca$ (possibly with modifications due to forwarding, 
described below) provided the constraints of $\mm$ are
obeyed.  
Clearly this potential reordering of instructions may destroy the programmer's intention if unconstrained; however the
reordering (or parallelization) of \emph{independent} instructions can potentially be more efficient than the possibly arbitrary order
specified by the programmer.
For example, consider a load followed by an update $r \asgn x \ppseqm y \asgn 1$.  If $x$ is a shared variable then retrieving its value from
main memory may take many processor cycles.  Rather than idle the independent instruction $y \asgn 1$ can be immediately issued without
compromising the programmer's intention \emph{assuming a single-threaded context}.
In general, two assignments can be reordered if they preserve the sequential semantics on a single thread.

A memory model $\mm$ is of type $\Instr \fun \power (\Instr \cross \Instr)$.
However we will typically express a memory model as a binary relation on instructions, the reordering relation, with the 
implicit application of a forwarding function, which is
either a straightforward variable substitution or the identity function.  
We first consider
the reordering relation and then forwarding, which
significantly complicates matters.

\paragraph{Reorderings}
First consider the base of a memory model, the ``reordering relation'', which is a binary relation on instructions.
We write $\aca \rom \acb$ if 
$\acb$ may be reordered before instruction $\aca$ according to the reordering relation of memory
model $\mm$, and $\aca \nrom \acb$ otherwise.
We define the special case of the \emph{sequentially consistent} memory model as that where no reordering is possible.
\begin{mmdef}[$\SCmm$]
\labelmm{sc}
For all $\aca,\acb \in \Instr$,
$
	\aca \nroSC \acb 
$.
\end{mmdef}
We may now explicitly state the notion of a \emph{sequential} model \cite{LamportSC}, which is the minimal property that any practical memory model 
$\mm$ should establish.  Letting the `effect' function $\effword$ return the relation between pre- and post-states for a program 
(see \refsect{eff-semantics})
we can state this property as follows.
\begin{definition}
\labeldefn{seqmm}
\labeleqn{eff-pseqc}
$\mm$ is \emph{sequential} if
$
	\eff{c_1 \ppseqm c_2} \subseteq \eff{c_1 \ppseqs c_2}
$.
\end{definition}
That is, $\mm$ is sequential if its reorderings give the same results (on a single thread) as when executed
in program order.
The weakest sequential memory model is one that allows reordering exactly when sequential semantics is maintained.
\begin{mmdef}[$\Effmm$]
\labelmm{wpmm}
$
	\aca \roEff \acb \sspace iff \sspace
		\eff{\acb \ppseqs \aca} \subseteq \eff{\aca \ppseqs \acb} 
$
\end{mmdef}
\begin{theorem}%
\labelth{eff-is-seq}
$\Effmm$ is \emph{sequential}.
\end{theorem}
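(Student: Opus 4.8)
The plan is to show that every behaviour of $c_1 \ppseq{\Effmm} c_2$ is already a behaviour of the in-order composition $c_1 \ppseqs c_2$, by exploiting the fact that each reordering $\Effmm$ permits is, by its defining condition (\refmm{wpmm}), effect-preserving, and that local effect-preservation lifts through relational composition to whole executions.

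First I would record two structural facts about the effect function (from \refsect{eff-semantics}): that it is compositional on sequential composition, $\eff{c \ppseqs d} = \eff{c} \comp \eff{d}$ (forward relational composition of state relations, in program order), and that $\comp$ is monotone in each argument, so that $R \subseteq R'$ gives $S \comp R \comp T \subseteq S \comp R' \comp T$. In the $\SCmm$ case no reordering occurs, so both are immediate from the definition of $\eff$. Note that this top-level compositionality holds even when $c$ and $d$ themselves contain $\ppseq{\Effmm}$, since $\eff{c}$ and $\eff{d}$ already absorb their internal reorderings as black boxes.

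The heart of the argument is a single-swap lemma. A reordering step under $\Effmm$ promotes some atomic action $\acb$ ahead of a preceding action $\aca$, and \refmm{wpmm} licenses this exactly when $\aca \roEff \acb$, i.e. $\eff{\acb \ppseqs \aca} \subseteq \eff{\aca \ppseqs \acb}$. Placing this pair in an arbitrary context of an already-executed prefix $p$ and a pending suffix $s$, compositionality and monotonicity give $\eff{p \ppseqs (\acb \ppseqs \aca) \ppseqs s} = \eff{p} \comp \eff{\acb \ppseqs \aca} \comp \eff{s} \subseteq \eff{p} \comp \eff{\aca \ppseqs \acb} \comp \eff{s} = \eff{p \ppseqs (\aca \ppseqs \acb) \ppseqs s}$. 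Hence no single permitted swap enlarges the effect. I would then lift this to full runs by induction on the operational semantics of $\ppseq{\Effmm}$ (\refsect{syntax}): every trace of $c_1 \ppseq{\Effmm} c_2$ is reachable from the in-order trace of $c_1 \ppseqs c_2$ by finitely many adjacent swaps, each licensed by $\roEff$. Chaining the single-swap lemma with transitivity of $\subseteq$ yields $\eff{\tau} \subseteq \eff{c_1 \ppseqs c_2}$ for each such trace $\tau$, and since $\eff{c_1 \ppseq{\Effmm} c_2}$ is the union of $\eff{\tau}$ over its traces, we obtain $\eff{c_1 \ppseq{\Effmm} c_2} \subseteq \eff{c_1 \ppseqs c_2}$ — exactly sequentiality (Def.~\ref{defn:seqmm}).

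The main obstacle is this lifting step: faithfully relating the operational rule for $\ppseqm$, which may promote an action past a whole pending block in one move, to a chain of individually-licensed adjacent transpositions, and arguing that each step of the decomposition still satisfies the $\roEff$ side-condition (an induction on the structure of $c_1$, rather than on a flattened trace, is cleanest here, precisely because $c_1$ and $c_2$ may themselves contain $\ppseq{\Effmm}$). A secondary complication, flagged in the text, is forwarding: when $\acb$ is rewritten to a forwarded variant as it overtakes $\aca$, the swap is no longer a bare transposition, so the single-swap inequation must be stated for the forwarded instruction and $\roEff$ read together with its forwarding function; under identity forwarding this subtlety disappears and the argument above goes through unchanged.
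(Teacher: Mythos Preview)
Your proposal is correct and is essentially a detailed elaboration of the paper's one-line proof, which reads in full: ``The property $\aca \roEff \acb$ lifts to commands.'' You supply the mechanics the paper leaves implicit --- compositionality of $\effword$ over $\scomp$, monotonicity of relational composition, the single-swap lemma, and the inductive lifting to arbitrary command structure --- so the two approaches coincide. Your flagged concern about forwarding is also well-placed but moot here: the theorem is stated and proved in the text \emph{before} forwarding is introduced (the ``Forwarding'' paragraph follows it), so $\Effmm$ at this point is a bare reordering relation with identity forwarding, and your main argument applies directly without the secondary complication.
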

\begin{proof}
The property $\aca \roEff \acb$ lifts to commands.
\end{proof}

$\Effmm$ is impractical since processors cannot 
make semantic judgements about the overall effect of two instructions dynamically;
however there are some simple syntactic constraints which guarantee sequential semantics.
As such we propose the following memory model as the weakest possible that is of practical use.
\begin{mmdef}[$\Gzmm$]
\labelmm{G0}
$
	\aca \roGz \acb ~iff~  \mbox{$\mx{\wv{\aca}}{\fv{\acb}}$ ~ and ~ $\mx{\wv{\acb}}{\fv{\aca}}$ }
$
\end{mmdef}
\refMM{G0} (abbreviated \refmm{G0}) allows instructions to be reordered based on a simple \emph{syntactic} test, namely, is any variable that $\acb$ references ($\fv{\acb}$)
modified by $\aca$,
and vice versa, using naming conventions below.
\begin{align}
	\fve, \fva
	&\sspace
	\mbox{Free variables in expr. $e$ or action $\aca$}
	\labeleqn{defn-fve}
	\\
	\wv{\aca}, \rva &\sspace \mbox{Write, read variables}
	\labeleqn{defn-wv}
	\\
	\mx{s_1}{s_2} \sdef &\sspace 
		\mbox{$s_1 \cap s_2 = \ess$ (mutual exclusion)} 
	\labeleqn{defn-vars-mutex}
\end{align}
Reorderings eliminated by \refmm{G0} 
include 
$
		(x \asgn 1 \nroGz x \asgn 2)
$, 
$
		(x \asgn 1 \nroGz r \asgn x)
$
and
$
		(r \asgn x \nroGz x \asgn 1)
$.
\OMIT{
For instance, 
simplifying the constraint of \refmm{G0} for two assignments,
$x \asgn e \roGz y \asgn f$, gives
\begin{equation}
\labeleqn{reordering-principle}
	\mbox{%
		(i) $x \neq y$
		\ \ (ii) $x \notin \fv{f}$
		\ \ and \ \ (iii) $y \notin \fv{e}$}
\end{equation}
}
In general, if $\wva \subseteq \rvb$, for $\aca, \acb \in \Instr$, then there is a \emph{data dependency} between $\aca$ and $\acb$,
that is, the value of a variable that $\acb$ depends on is being computed by $\aca$.
It is straightforward that
$
	(\aca \roGz \acb)
	\imp
	(\aca \roEff \acb)
$ \ie,
$\Gzmm$ is stronger than $\Effmm$.
We can therefore infer that $\Gzmm$ is \emph{sequential}.
\begin{theorem}
If 
$ \mm$ is stronger than $ \Effmm $
then $\mm$ is \emph{sequential}.
\end{theorem}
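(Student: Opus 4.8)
The plan is to factor the argument through the weakest sequential model $\Effmm$ and reuse \refth{eff-is-seq}. First I would unfold the hypothesis: by the same reading used for ``$\Gzmm$ is stronger than $\Effmm$'', the assumption that $\mm$ is stronger than $\Effmm$ means exactly that the reordering relation of $\mm$ is contained in that of $\Effmm$, i.e.\ $(\aca \rom \acb) \imp (\aca \roEff \acb)$ for all $\aca, \acb \in \Instr$. Thus $\mm$ sanctions at most the reorderings sanctioned by $\Effmm$.

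The core is a monotonicity lemma for parallelized sequential composition with respect to the reordering relation: enlarging the reordering relation can only enlarge the set of reachable behaviours, and hence the effect. Concretely, since $\rom$ is contained in $\roEff$, I would show $\eff{c_1 \ppseqm c_2} \subseteq \eff{c_1 \ppseq{\Effmm} c_2}$. This mirrors the ``lifting'' step of \refth{eff-is-seq}: the memory model enters the semantics only as the side-condition $\aca \rom \acb$ guarding a reordering transition, so every transition available under $\mm$ is also available under $\Effmm$, every trace of $c_1 \ppseqm c_2$ is a trace of $c_1 \ppseq{\Effmm} c_2$, and the final-state relations are correspondingly contained. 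Chaining with \refth{eff-is-seq}, which supplies $\eff{c_1 \ppseq{\Effmm} c_2} \subseteq \eff{c_1 \ppseqs c_2}$, and using transitivity of $\subseteq$ yields $\eff{c_1 \ppseqm c_2} \subseteq \eff{c_1 \ppseqs c_2}$, which is precisely the definition of $\mm$ being \emph{sequential}.

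The main obstacle is the monotonicity lemma, and in particular promoting the pointwise containment of $\rom$ in $\roEff$ from instruction pairs to whole commands. I expect this to go by induction on the structure of commands and on the derivation of each trace: the $\Skip$ and single-action cases are identical for the two models, the in-order composition step is model-independent, and the only reordering step is guarded by $\aca \rom \acb$, which by hypothesis implies the corresponding $\Effmm$-guard, so the step is replayable under $\Effmm$. Forwarding requires a little care, but since the two models share the same implicit forwarding function it is applied identically along matching traces and does not disturb the inclusion of reordering relations.
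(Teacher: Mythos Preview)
Your proposal is correct and follows essentially the same approach as the paper: the paper's one-line proof ``A stronger model admits fewer behaviours \refrule{pseqcB}'' is precisely your monotonicity lemma (every $\mm$-transition is also an $\Effmm$-transition because the model enters only as the side-condition of the reordering rule), after which the chaining through \refth{eff-is-seq} is implicit. You have simply unpacked the argument that the paper compresses into a single sentence; the later \reflaw{mono-pseqc} is the general form of the monotonicity step you sketch.
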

\begin{proof}
A stronger model admits fewer behaviours
\refrule{pseqcB}.
\end{proof}

\OMIT{
On a single processor register operations can be parallelized by utilising multiple ALUs.  In a program like
$
	r1 \asgn r2 + r3 \ppseqm r4 \asgn r5 \times r6
	$
it has been the case since the 1950s that these are executed in parallel where possible; the order given is simply the whim of the programmer and could
easily have been given in the opposite order, and hence can also be parallelized.  Registers operations do not need to interact with main memory and
this is call instruction-level parallelism; it is a special case of memory models.
}

The memory model $\Gzmm$ is lacking in the age of multicore processors because 
it does not require two consecutive loads of the same \emph{shared} variable to be performed in order. For instance, consider the following 
concurrent processes. 
\begin{equation}
\labeleqn{2loads}
	r_1 \asgn x \ppseqGz r_2 \asgn x
	\quad
	\pl
	\quad
	x \asgn 1 \ppseqGz x \asgn 2
\end{equation}
The two loads should read values of $x$ in a globally ``coherent'' manner, that is, the value
for $x$ loaded into $r_1$
must occur no later than that loaded by $r_2$.
Hence program \refeqn{2loads} should not reach the
final state $r_1 = 2 \land r_2 = 1$.
However, although $x \asgn 1 \nroGz x \asgn 2$, we have $r_1 \asgn x \roGz r_2 \asgn x$ in the first thread.

To cope with shared variables and coherence we divide the set of variables, $\AllVars$, into mutually exclusive and exhaustive sets $\SharedVars$ and $\LocalVars$,
with specialised definitions below.
\begin{align}
	\sva, \rsva, & \wsva
	\sspace \mbox{As \refeqn{defn-fve}, \refeqn{defn-wv}, restricted to $\SharedVars$}
	\labeleqn{defn-sve}
	\\
	\OMIT{
	\fva \int \SharedVars,
	\sspace
	\rva \int \sva,
	\sspace
	\wva \int \sva,
	\sspace
	\mbox{respectively}
	\\
	}
	\isStorea \sdef &\sspace \wsva \neq \ess \land \rsva = \ess 
	\labeleqn{isStore}
	\\
	\isLoada \sdef &\sspace \wsva = \ess \land \rsva \neq \ess 
	\labeleqn{isLoad}
	\\
	\isRega~\mbox{iff} &\sspace \mbox{$\aca$ is an assign. or guard and $\sv{\aca} = \ess$}
	\labeleqn{wsv} 
\end{align}
To maintain ``coherence per location'' we extend $\Gzmm$ to $\Gmm$ by adding a constraint on the loaded shared variables.
Additionally, since we are now explicitly concerned with concurrent behaviour, we add
a fence instruction type to restore order, \ie,
$
	\aca \ttdef \ldots \csep \ffence
$.
We call this an ``artificial'' constraint, since it is not based on ``natural''
constraints arising from the preservation of sequential semantics.
\begin{mmdef}[$\Gmm$]
\labelmm{G}
$
	\aca \roG \acb \ \ \ltiff{\ $\aca \roM{\Gzmm} \acb \land \mx{\rsv{\aca}}{\rsv{\acb}}$}
$,
except \\
$
	\ 
	\qquad
	\sspace
	\aca \nroG \ffence \nroG \alpha  
$.
\end{mmdef}
When specifying a memory model we typically give the base relation first, and then list the ``exceptions'', which take precedence.
\refmm{G} strengthens the condition of \refmm{G0} to require loads from main memory to be kept in program order per shared variable.
In addition fences block reordering, reinstating program-order execution explicitly if desired by the programmer (at the potential cost 
of efficiency).
We let $\aca \rom \acb \rom \acc$ abbreviate $\aca \rom \acb \land \acb \rom \acc$, and similarly for $\nrom$.

We consider $\Gmm$ to be the weakest memory model of practical consideration in a concurrent system as it
maintains both coherence-per-location and sequential semantics.
\begin{definition}
Model $\mm$ is \emph{coherent} if it is stronger than $\Gmm$.
\end{definition}
Most modern processors are coherent.
A memory model that is not coherent, and is the obvious weakest dual of \refmm{sc},
is one that allows any instructions to be reordered under any circumstances.  If we disallow forwarding in this model (discussed in the next section),
this weakest memory model
corresponds to parallel composition.
\begin{mmdef}[$\PARmm$]
\labelmm{PARmm}
For all $\aca, \acb \in \Instr$,
$
	\aca \roPAR \acb 
$
\end{mmdef}
We may now define
$
	c \pl d \sdef c \ppseq{\PARmm} d
$,
lifting instruction-level parallelism to thread-level parallelism.

\newcommand{\scprop}{sequential\xspace consistency\xspace}


The key point about the $\SCmm$ memory model (\refmm{sc}) is that reasoning is ``straightforward'', or classical, in that all the accepted techniques work.
This is the property of \emph{\scprop} \cite{LamportSC}, formalised below.
\begin{definition}
\labeldefn{undermm}
Command $\cunderm$ is structurally identical to $\cmdc$ but every \pseqc,
except for instances of $\PARmm$,
is paramet\-erized by $\mm$.
\end{definition}
\begin{definition}[Sequentially consistent]
\labeldefn{seqcst}
A memory model $\mm$ is sequentially consistent if, for any programs $\cmdc$ and $\cmdd$, ignoring local variables,
$
	\undermm{\cmdc}{\mm} \pl \undermm{\cmdd}{\mm} ~~~\refeq ~~~
	\undermm{\cmdc}{\SCmm} \pl \undermm{\cmdd}{\SCmm} 
$.
\end{definition}
\OMIT{
The notation $\cunderm$ lets us consider programs executed under some specific model $\mm$
(parallel composition is left unchanged as we are typically only interested in
thread-local effects).
In our framework the notion of equivalence is based on sets of behaviours (traces; see \refsect{trace-semantics}), 
in which local variables can be hidden in the usual way.
}
By definition $\SCmm$ is sequentially consistent, however even 
sequentially consistent uniprocessors are not as strong as \refmm{sc}, for instance,  some speculate loads and reissue them if a change is detected 
(relatively quickly via the cache \cite{MIPSR10000}). 
Note the difference between \emph{sequential} and \emph{sequentially~consistent}: a sequentially consistent memory model is sequential, but not vice versa.  
None of TSO, ARM or RISC-V are sequentially consistent in general, but are for programs in a particular form, \eg,
where shared variables are accessed according to a lock-based programming discipline.

\OMIT{
None of TSO, ARM or RISC-V are sequentially consistent.  A key property of programming language-level memory models is that sequential consistency holds for
programs of a particular form, e.g., where shared variables are accessed according to a lock-based programming discipline.
In this paper we reason about a program $c$ executed under memory model $\mm$ using three approaches: show that the constraints of $\mm$ mean that $\cunderm \refeq
\cundersc$, and so existing techniques for reasoning about concurrent sequential programs can be applied; enumerate the possible reorderings of $\cunderm$ so that it
can be approached as a (nondeterministic) choice between concurrent sequential programs, and again standard techniques can be applied; or find a particular ordering
(behaviour) $c'$ of $\cunderm$
that contradicts some desired property $P$ of $\cunderm$, and from this infer that $P$ does not hold for $\cunderm$.
}

\OMIT{
This is given by the following weaker memory model $\POwmm$.
\begin{mmdef}{$\POwmm$}
\labelmm{po-wmm}
\begin{eqnarray}
	\aca \nroPOw &\fence& \nroPOw \aca
	\ltif{$\sv{\aca} \neq \ess$}
	\labeleqn{powmm-fence}
	\\
	\aca \roPOw \acb &iff& 
		\mbox{$\aca \roG \acb$ \sspace and\sspace  $\sva = \ess \lor \svb = \ess$}
	\labeleqn{powmm-gen}
\end{eqnarray}
\end{mmdef}
In $\POwmm$ fences only block instructions that reference \SharedVars variables \refeqn{powmm-fence},
and other reorderings are allowed provided it is not the case that both reference \Svars (and the usual dependencies hold) \refeqn{powmm-gen}.
This memory model allows local optimisations and even speculative execution, provided
none are dependent on (unresolved) interactions with the global state.  
\begin{theorem}
\POwmm is sequentially consistent for \SharedVars-only processes.
\end{theorem}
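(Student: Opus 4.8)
The plan is to avoid reasoning about $\mm$ directly and instead factor through $\Effmm$, whose sequentiality is already in hand (\refth{eff-is-seq}). Here ``stronger than'' means that the reordering permitted by $\mm$ is contained in that permitted by $\Effmm$, i.e.\ $\aca \rom \acb \imp \aca \roEff \acb$ for all $\aca, \acb \in \Instr$ (a stronger model permits fewer reorderings). The goal is the inclusion $\eff{c_1 \ppseqm c_2} \subseteq \eff{c_1 \ppseqs c_2}$ that defines $\mm$ being \emph{sequential}, and I would obtain it as a composite of two simpler inclusions passing through $\Effmm$.

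The crux is a monotonicity lemma: if $\mm_1$ is stronger than $\mm_2$, then $\eff{c_1 \ppseq{\mm_1} c_2} \subseteq \eff{c_1 \ppseq{\mm_2} c_2}$ for all commands. Intuitively a stronger model admits fewer behaviours, because the only rule that separates $\ppseq{\mm_1}$ from $\ppseq{\mm_2}$ is the out-of-order rule \refrule{pseqcB}, whose side condition is exactly the reordering relation of the parameter; shrinking that relation can only disable reorder transitions, never create them. Granting this lemma, I would instantiate it with $\mm_1 = \mm$ and $\mm_2 = \Effmm$ to get $\eff{c_1 \ppseqm c_2} \subseteq \eff{c_1 \ppseq{\Effmm} c_2}$, then apply \refth{eff-is-seq} (which gives $\eff{c_1 \ppseq{\Effmm} c_2} \subseteq \eff{c_1 \ppseqs c_2}$), and compose the two to conclude.

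The main obstacle is discharging the monotonicity lemma rigorously, which leans on the operational semantics of \pseqc introduced later (\refsect{syntax}). A memory model is in fact a function $\Instr \fun \power(\Instr \cross \Instr)$ returning, for each leading instruction, the pairs consisting of an instruction allowed to reorder before it and that instruction's forwarded form; reading ``stronger than'' as pointwise containment of these pair-sets, any reorder step enabled under $\mm$ --- together with the exact instruction it yields after forwarding --- is literally enabled under $\Effmm$. I would then prove the lemma by induction on derivation length: in-order steps are parameter-independent and carry over verbatim, while each reorder step transfers because its side condition is the containing pair-set. Finally, since $\effword$ is a union over terminating traces and hence monotone in the trace set, the inclusion of behaviours descends to the required inclusion of pre/post-state relations.
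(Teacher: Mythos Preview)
You are proving the wrong property. The theorem asserts that $\POwmm$ is \emph{sequentially consistent} (\refdefn{seqcst}), which is a statement about concurrent trace equivalence: $\undermm{\cmdc}{\POwmm} \pl \undermm{\cmdd}{\POwmm} \refeq \undermm{\cmdc}{\SCmm} \pl \undermm{\cmdd}{\SCmm}$. Your proposal instead targets \emph{sequentiality} (\refdefn{seqmm}), the single-thread effect inclusion $\eff{c_1 \ppseqm c_2} \subseteq \eff{c_1 \ppseqs c_2}$. These are distinct notions; the paper explicitly notes that a sequentially consistent model is sequential but not conversely. Your factoring through $\Effmm$ and the monotonicity lemma would indeed establish sequentiality of any model stronger than $\Effmm$ --- that is essentially the content of the earlier theorem whose proof reads ``A stronger model admits fewer behaviours \refrule{pseqcB}'' --- but it does not touch the concurrent equivalence required here. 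In particular $\Effmm$ itself is sequential by \refth{eff-is-seq} yet is not sequentially consistent, so passing through it cannot yield sequential consistency.

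The paper's argument instead exploits the specific form of $\POwmm$ together with the hypothesis that processes are $\SharedVars$-only. By \refeqn{powmm-gen}, $\aca \roPOw \acb$ requires $\sva = \ess \lor \svb = \ess$: at least one of the two instructions must reference no shared variables. In a $\SharedVars$-only process every assignment and every nontrivial guard references a shared variable, so no two such instructions can be reordered; the program order on all shared-variable accesses is preserved exactly as under $\SCmm$. The only reorderings $\POwmm$ still permits involve instructions with $\sv{\cdot} = \ess$, which are invisible to other processes and already constrained by $\Gmm$ to preserve local sequential semantics. Hence each process's externally visible behaviour coincides with its $\SCmm$ behaviour, and the parallel composition is unchanged. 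Your monotonicity-through-$\Effmm$ machinery never uses the $\SharedVars$-only hypothesis, which is a sign that it cannot be addressing the right goal.
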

\begin{proof}
The program order of instructions involving shared variables is maintained, and the reorderings allowed on locals-only instructions are permitted
only if sequential behaviour is preserved locally.
\end{proof}
}

\paragraph{Forwarding}
\labelsect{forwarding}

We now complicate matters significantly by considering \emph{forwarding}, where the effect of an earlier operation 
can be taken into account when deciding if instructions can be reordered.%
\footnote{We use the term ``forwarding" from ARM and POWER \cite{HerdingCats}, sometimes referred to as ``bypassing'' in TSO \cite{x86-TSO}.}    
For instance, given a program
$
	x \asgn 1 \ppseqG r \asgn x
$,
we have $x \asgn 1 \nroG r \asgn x$ because $\wv{x \asgn 1} = \{x\} \subseteq \fv{r \asgn x}$,
violating \refmm{G0}.
In practice however it is possible to \emph{forward} the new value of $x$ to the later instruction -- it is clear that the value assigned to $r$ will be 1 if $x$ is local, and in
any case is a valid possible assignment to $r$ even if $x$ is shared.
We define 
$\fwd{\alpha}{\beta}$, representing the effect of forwarding the (assignment) instruction $\alpha$ to $\beta$, 
where the expression $\Repl{f}{x}{e}$ is $f$ with references to $x$ replaced by $e$.
\begin{definition}[Forwarding]
\labeldefn{forwarding}
	$\fwd{\aca}{ \acb} = \acb$, except
\begin{gather*}
	\fwd{x \asgnlbl e}{(y \asgn f)} = y \asgn (\Repl{f}{x}{e})
		\qquad
	\fwd{x \asgnlbl e}{\guardf} = \guard{\Repl{f}{x}{e}}
		\qquad
	\labeleqn{fwd-g}
\end{gather*}
\end{definition}

\OMIT{
\begin{equation}
\labeleqn{assignment-forwarding}
	\fwd{x \asgnsmall e}{(y \asgn f)}
	~~=~~
	y \asgn (\Repl{f}{x}{e})
\end{equation}
}

Forwarding and reordering are combined to form
a memory model as follows, where 
the effect of forwarding is taken into account \emph{before} calculating reordering.
\begin{gather}
	\mm 
	\sspace
	\sdef 
	\sspace
	\lambda \aca. 
		\{(\fwd{\aca}{\acb} , \acb) | 
			\aca \rom \fwdab 
		\}
\labeleqn{mm-def}
\\
\labeleqn{defn-roa}
\robpab
\sspace
\sdef
\sspace
(\acb', \acb) \in \mm(\aca)
\end{gather}
Thus a memory model $\mm$ for a given action $\aca$ returns a set of pairs $(\acb', \acb)$ where
$\acb$ reorders with $\aca$, after the effect of forwarding $\aca$ to $\acb$ ($\acb'$) is taken into account.
For convenience we sometimes use the notation $\robpab$ which notationally conveys the bringing forward of $\acb$ with respect 
$\aca$.
For example, since
$\fwd{x \asgnsmall 1}{(r \asgn x)} = (r \asgn (\Repl{x}{x}{1})) = r \asgn 1$, 
the load $r \asgn x$ ``reorders'' with $x \asgn 1$, becoming $r \asgn 1$.
\begin{equation}
\labeleqn{rotrip-example}
	\rocmd{r \asgn 1}{x \asgn 1}{r \asgn x}{\Gzmm}
\end{equation}
Forwarding is significant because it can change the orderings allowed in a non-standard way, since
a later instruction that was blocked by $r \asgn x$ may no longer be blocked,
and potentially can now also be reordered before $x \asgn 1$.
Of course, this is potentially a significant efficiency gain, because local computation can proceed using the value 1 for $x$ without waiting for the update
to propagate to the rest of the system.

Memory models with out-of-order execution typically use forwarding.
(As noted above, one exception is $\PARmm$ for interleaving parallel).
In $\Gmm$ and $\Gzmm$ reordering is symmetric, however when calculated after the effect of forwarding is applied there are instructions 
that may be reordered in one direction but not the other.  In general a reordering relation is neither reflexive nor transitive.

\OMIT{
We can make the following statement about reordering in the $\Gzmm$ model taking forwarding into account, stated as a triple from
\refeqn{defn-roa}.
Part (ii) is a weakening of the equivalent from \refeqn{reordering-principle}, 
being an expansion of $x \notin \fv{\Repl{f}{x}{e}}$.  
\begin{equation*}
\labeleqn{reordering-principle-with-forwarding}
    \roact{y \asgn \Repl{f}{x}{e}}{x \asgn e}{y \asgn f}{\Gzmm}
    \quad
    \mbox{~~iff~~}
    \quad
    \mbox{%
        (i) $x \neq y$
        \ \ (ii) $x \in \fvf \imp x \notin \fve$
        and \ \ (iii) $y \notin \fv{e}$}
\end{equation*}
}


\OMIT{
For $\Gmm$ we get
\begin{equation}
    \roact{y \asgn \Repl{f}{x}{e}}{x \asgn e}{y \asgn f}{\Gmm}
    \mbox{~~iff~~}
    \mbox{
		\begin{tabular}{l}
        (i) $x \neq y$
        (ii) $x \in \fvf \imp x \notin \fve$
        (iii) $y \notin \fv{e}$
		\\ 
        (iv) $x \in \fvf \imp \sv{e} = \ess$
		and
		\\
        (v) $x \notin \fvf \imp \mx{\sv{e}}{\sv{f}}$
		\end{tabular}
	}
\end{equation}
}

\paragraph{Memory model refinement}

We define model refinement as a strengthening per action.
\begin{definition}
	\labeldefn{roref}
$
	\mm_1 \roref \mm_2
	\sdef 
	\forall \aca @ \mm_2(\aca) \subseteq \mm_1(\aca)
	$
\end{definition}
\OMIT{
Hence, 
\[
	\mm_1 \roref \mm_2
	\qquad iff \qquad
	\forall \acb', \aca, \acb @
	\roact{\acb'}{\aca}{\acb}{\mm_2} \imp
	\roact{\acb'}{\aca}{\acb}{\mm_1} 
\]
}
As we explore in the rest of the paper,
the Total Store Order memory model ($\TSOmm$) strengthens $\Gmm$ considerably (or alternatively, weakens 
$\SCmm$ for the particular case of stores and loads),
while $\ARMmm$ strengthens $\Gmm$ to prevent stores from coming before branches.
$\ARMmm$, $\RVmm$, and the release consistency models $\RCmm$ and $\RCSCmm$ are related as below, focusing on their common instruction types;
since each introduces unique instruction types a direct comparison is not possible.
\begin{theorem}[Hierarchy of models]~\\
$
	\Effmm \roref \Gzmm \roref \Gmm \roref \RCmm \roref \RVmm \roref \RCSCmm \roref \ARMmm \roref \TSOmm \roref \SCmm
$.
\end{theorem}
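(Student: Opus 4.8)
The plan is to exploit the transitivity of the refinement order $\roref$ and prove the chain one link at a time. Transitivity is immediate: since $\mm_1 \roref \mm_2$ unfolds (Definition~\ref{defn:roref}) to the pointwise inclusion $\forall \aca @ \mm_2(\aca) \subseteq \mm_1(\aca)$, composing two such inclusions is just transitivity of $\subseteq$ applied at each $\aca$. Hence it suffices to establish the eight consecutive refinements $\Effmm \roref \Gzmm$, $\Gzmm \roref \Gmm$, $\Gmm \roref \RCmm$, $\RCmm \roref \RVmm$, $\RVmm \roref \RCSCmm$, $\RCSCmm \roref \ARMmm$, $\ARMmm \roref \TSOmm$, and $\TSOmm \roref \SCmm$, each read as restricted to the instruction types the two adjacent models share.

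For each link I reduce the set-level inclusion $\mm_{i+1}(\aca) \subseteq \mm_i(\aca)$ to a containment of the underlying reordering relations. Given a pair $(\acb', \acb) \in \mm_{i+1}(\aca)$, by \refeqn{mm-def} we have $\acb' = \fwd{\aca}{\acb}$ and $\aca \roM{\mm_{i+1}} \acb'$. Because forwarding is a fixed function independent of the model, the same $\acb'$ witnesses membership in $\mm_i(\aca)$ as soon as the implication $\aca \roM{\mm_{i+1}} \acb' \imp \aca \roM{\mm_i} \acb'$ holds. Thus every link reduces to showing that the stronger model's reordering condition logically implies the weaker model's on the common instructions.

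The two endpoints and the first two links are then essentially immediate. $\Effmm \roref \Gzmm$ is exactly the implication $(\aca \roGz \acb) \imp (\aca \roEff \acb)$ already recorded above; $\Gzmm \roref \Gmm$ holds because \refmm{G} is \refmm{G0} conjoined with the extra coherence constraint $\mx{\rsv{\aca}}{\rsv{\acb}}$ together with the fence exceptions, all of which only remove reorderings. At the far end, $\TSOmm \roref \SCmm$ is trivial: \refmm{sc} permits no reordering at all, so $\SCmm(\aca) = \ess \subseteq \TSOmm(\aca)$ for every $\aca$.

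The substance, and the main obstacle, lies in the inner links running from $\Gmm$ through to $\TSOmm$. Each requires unfolding the definitions of these richer models (given in Sections~\ref{sect:rc-model}, \ref{sect:riscv} and \ref{sect:armv8}) and checking, constraint by constraint, that the added restrictions only ever eliminate reorderings relative to the preceding model: the ordering of acquire/release annotations for $\RCmm$, the per-address and fence rules for $\RVmm$, the additional store-before-branch block for $\ARMmm$, and the store-buffer-only reordering of $\TSOmm$. The care needed here is twofold. First, one must restrict attention to the instruction vocabulary common to both models, since each introduces unique instruction types for which no direct comparison is meaningful. Second, one must confirm that the forwarding behaviour agrees across the two models, so that the reduction of the second paragraph applies unchanged; where a model restricts forwarding, the witness $\acb'$ must be re-derived rather than reused. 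Once each inner implication is verified by inspection of the respective model definitions, transitivity closes the chain.
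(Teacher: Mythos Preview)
Your proposal is correct and follows the same approach as the paper, which simply states ``Straightforward from definitions.'' You have essentially written out what that one-line proof sketch would expand to: exploit transitivity of $\roref$, reduce each link to an implication between reordering conditions via the uniform forwarding function, and verify the implications by inspecting each model's exception list against its predecessor's on the shared instruction vocabulary.

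One remark: the link $\RVmm \roref \RCSCmm$ is the only place where your description ``the added restrictions only ever eliminate reorderings relative to the preceding model'' does not apply mechanically, since $\RVmm$ and $\RCSCmm$ each add a \emph{different} exception on top of $\RCmm$ ($\RVmm$ blocks guard-before-store via \refeqn{RV:g<u}, while $\RCSCmm$ blocks release-before-acquire). Neither exception list contains the other, so this link cannot be read off as a simple syntactic strengthening; it depends on exactly which instruction types are taken as ``common'' in the comparison, a point the paper flags informally but does not make precise in the theorem statement. Your outline acknowledges this subtlety in principle (your first caveat about restricting to the common vocabulary), but when you carry out the inspection you should not expect this particular link to fall out as automatically as the others.
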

\begin{proof}
	Straightforward from definitions.
\end{proof}
This result is similar to other classifications \cite{HierarchyWMM,FrigoLuchangco98,ReleaseConsistency90}.
Note that $\PARmm$ does not fit into the hierarchy because it does not allow forwarding.

\paragraph{Well-behaved models}
A memory model could theoretically allow arbitrary reorderings and effects of forwarding;
however from a reasoning perspective we make the following definition of well-behaved
memory models.
\begin{definition}
\labeldefn{well-behaved}
A memory model $\mm$ is \emph{well-behaved} if it is $\SCmm$ or if:
\OMIT{
\begin{equation*}
	i)~
		\roactm{\acb'_1}{\aca}{\acb} \land 
		\roactm{\acb'_2}{\aca}{\acb} 
		\entails
		\acb'_1 = \acb'_2
		\qquad
	ii)~
		\roactm{\acb'}{\aca}{\acb} 
		\entails
		\acb' = \fwdab \lor \acb' = \acb
		\qquad
	iii)~
		\aca \rom \tau 
		\labeleqn{mm-cst}
\end{equation*}
}
i) the result of reordering is deterministic; ii) any resulting action $\acb'$ must 
arise from the application of the forwarding function (\refdefn{forwarding}), or have no change at all, \ie, we do not allow arbitrary effects of forwarding;
and 
iii) if an action allows any reordering then it must allow reordering with \emph{internal} (``silent'') steps.
\end{definition}
Conditions i) and ii) ensure determinacy and sequential semantics, while condition iii) simplifies reasoning
(silent steps are defined in
in \refsect{semantics}).
All the memory models we consider are well-behaved, with the exception of \PARmm.

\OMIT{
Combining ii) and iii) we have $\roact{\tau}{\aca}{\tau}{\mm}$, from which we can infer
for any command $c$, $\rocmdm{\tau}{c}{\tau}$.
Property iii) says that if a processor allows, for instance, a load to be executed early, then it must also allow silent (internal) steps of the processor
to be executed early; this is required due to our weak semantics.
}

\OMIT{
A consequence of iii) is that in a
program $c_1 \ppseqm c_2$, if $c_2$ diverges, \eg, is an infinite loop with no visible steps, then the composition may diverge.  
We take this as a natural negative side effect of allowing computation to proceed in parallel, that is, theoretically, unless measures are taken explicitly by a
processor to prevent it, future simplifications could run ahead indefinitely and prevent progress.  In real systems the pipeline is finite and infinite speculative
computation is unlikely to happen.
}

\section{Pipeline semantics}
\labelsect{pipeline}

\newcommand{\pipelineword}{\mathbf{pline}}
\newcommand{\plt}{\T{t}}
\newcommand{\plcat}{\cat }
\newcommand{\pipeline}[3]{(\pipelineword_{#1}~#2 :~ #3)}
\newcommand{\pline}[3]{\pipeline{#1}{#2}{#3}}
\newcommand{\plinem}[2]{\pline{\mm}{#1}{#2}}
\newcommand{\plinemp}[1]{\plinem{\plt}{#1}}
\newcommand{\plinempc}{\plinemp{\cmdc}}
\newcommand{\plinemepc}{\plinem{\epl}{\cmdc}}

\newcommand{\epl}{\eseq}

Before defining a full language we consider how a reordering relation can be used to define
the semantics of a processor pipeline where instructions can be reordered.
The command $\plinempc$ represents the execution of $\cmdc$ with $\plt$ a sequence of instructions that are currently (in) the pipeline;
instructions in $\plt$ can be issued to the wider system in order, or out-of-order according to $\mm$.
Assume that $\cmdc$ is executed sequentially, (\ie, is of the form $\cundersc$),
then the behaviour of the pipeline is
as follows.

\ruledefNamed{0.95\columnwidth}{Pipeline - fetch}
{pline-fetch}{
	\Rule{
		c \tra{\aca} c'
	}{
		\plinempc
		\tra{\tau} 
		\plinem{\plt \cat \aca}{c'}
	}
}

\ruledefNamed{0.95\columnwidth}{Pipeline - commit}
{pline-commit}{
	\Rule{
		\rocmdm{\aca'}{\plt_1}{\aca}
	}{
		\plinem{\plt_1 \cat \aca \cat \plt_2}{c}
		\tra{\aca'} 
		\plinem{\plt_1 \cat \plt_2}{c}
	}
}

\OMIT{
\ruledefNamed{0.48\columnwidth}{Pipeline - commit}
{pline-commit}{
		\plinem{\aca \cat \plt}{c}
		\tra{\aca} 
		\plinempc
}
\ruledefNamed{0.46\columnwidth}{Pipeline - propagate}
{pline-propagate}{
	\Rule{
		c \tra{\aca} c'
		\qquad
		\rocmdm{\aca'}{\plt}{\aca}
	}{
		\plinempc
		\tra{\aca'}
		\plinemp{c'}
	}
}
}


Rule \refrule{pline-fetch} states that the next instruction in $\cmdc$ in sequential order can be ``fetched'' and placed at the end of the pipeline;
$c$ is effectively the ``code base''.  
We use `$\cat$' for appending lists, and for convenience allow it to apply to individual elements as well.
The promoted step $\tau$ represents an internal action of the processor, which is ignored by the rest of the system.
Rule \refrule{pline-commit} states that an instruction $\aca$ somewhere in the pipeline can be ``committed'' out of order, provided it can be reordered
with all prior instructions currently in the pipeline. The notation  
$\rocmdm{\aca'}{\plt}{\aca}$ (cf. \refeqn{defn-roa}) is a shorthand for $(\aca', \aca) \in \mm(\plt)$, \ie, lifting $\mm$ from a function on
instructions to sequences of instructions,
that is,
$
	\mm(\eseq) = \id 
$
and 
$
	\mm(\aca \cat \plt) = \mm(\aca) \comp \mm(\plt)
$,
where `$\eseq$' is the empty list, `$\id$' is the identity relation, and `$\comp$' is relational composition.

Consider executing the program $r_1 \asgn x \scomp r_2 \asgn y \scomp \cmdc$ in a pipeline.  
Both loads can be fetched (in order) into the pipeline by \refrule{pline-fetch}, 
but then, assuming 
$
r_1 \asgn x \rom r_2 \asgn y 
$,
$r_2 \asgn y$ may be committed before $r_1 \asgn x$ by \refrule{pline-commit} (or further instructions from $\cmdc$ could be fetched
and potentially committed).
The fetch/commit rules abstract from other stages in a typical pipeline (see, e.g., \cite{PipeCheck}), for instance, the two loads above would be \emph{issued} to the system in order,
with the out-of-order commit corresponding to the second load being serviced earlier by the memory system.

We encoded this semantics straightforwardly in the Maude rewriting engine \cite{Maude,SOSMaude} as a model checker (see supplementary material).
We find this processor-level view to be convenient for showing equivalence with other models of reordering, for instance, the store buffer model of TSO
(\refsect{tso-buf})
and an axiomatic specification of ARM (\refsect{arm:axiomatic}).
However it is not directly amenable to established techniques for reasoning about programs such as 
Hoare Logic \cite{Hoare69}, Owicki-Gries (OG) \cite{OwickiGries76} or rely/guarantee \cite{Jones-RG1},
which are over the structure of a program.
As such we now consider embedding reordering into a typical imperative language, where the following theorem holds
(recall \refdefn{undermm}).
\begin{theorem}
	\labelth{pl=pseqc}
For well-behaved $\mm$,
$
	\cunderm
	~=~
	\plinem{\epl}{\cundersc}
$
\end{theorem}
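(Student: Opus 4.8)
The plan is to prove that the two sides denote the same set of behaviours (trace sets) by exhibiting a weak bisimulation between pipeline configurations $\plinem{\plt}{c}$ and $\ppseqc$-commands, and then specialising to the empty pipeline $\epl$. The bridge is a translation $\Phi$ that reads a pipeline configuration back as a command: a buffer $\plt = \aca_1 \cat \cdots \cat \aca_k$ sitting in front of a code base $c$ (which is in $\SCmm$ form) is read as $\aca_1 \ppseqm \cdots \ppseqm \aca_k \ppseqm \undermm{c}{\mm}$, i.e.\ the already-fetched instructions become a nested $\ppseqc$ prefix while the remaining code base is reparameterised from $\SCmm$ to $\mm$ (recall \refdefn{undermm}). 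Specialising to $\plt = \epl$ and $c = \cundersc$ gives $\Phi(\epl, \cundersc) = \cunderm$, so the theorem follows once $\Phi$ is shown to relate bisimilar configurations for arbitrary $\plt$ and $c$. The argument is then a case analysis on the two pipeline rules, with the matching command-side step in each case, and symmetrically the simulation of each command-side step by the pipeline.

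First I would handle \emph{fetch}. Rule \refrule{pline-fetch} moves the next in-order instruction $\aca$ of the $\SCmm$ code base (with $c \tra{\aca} c'$) into the buffer via a silent $\tau$ step. Under $\Phi$ this is exactly a reassociation of the nested $\ppseqc$: the leading action $\aca$ of $\undermm{c}{\mm}$ is pulled out and appended to the prefix, so $\Phi(\plt, c)$ and $\Phi(\plt \cat \aca, c')$ coincide by associativity of $\ppseqm$, which leaves the behaviour set unchanged; the $\tau$ label matches this silent restructuring. That the code base is kept in $\SCmm$ form is precisely what forces fetch to be in program order, mirroring the in-order consumption of actions by the leading (sequential) part of a command. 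Conditionals and loops are covered by the same clause: their branching/unfolding happens inside the $\SCmm$ code base via $c \tra{\aca} c'$, and $\Phi$ commutes with these unfoldings since $\cunderm$ is structurally identical to $\cundersc$.

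Next \emph{commit}. Rule \refrule{pline-commit} emits $\aca'$ where $\rocmdm{\aca'}{\plt_1}{\aca}$, i.e.\ $(\aca', \aca) \in \mm(\plt_1)$ for the prior buffer contents $\plt_1$. Using the lifting $\mm(\aca \cat \plt) = \mm(\aca) \comp \mm(\plt)$, I would unfold this composed reordering-with-forwarding into an iterated single-step bring-forward of $\aca$ past $\aca_1, \dots, \aca_j$ in the nested prefix produced by $\Phi$, matching one application of the early-execution rule of $\ppseqc$ per prior buffer element (cf.\ \refrule{pseqcB} and \refeqn{defn-roa}). The instructions after $\aca$ in the buffer ($\plt_2$) and the code base play no role, as they follow $\aca$ in both readings, so the residual configuration $\plinem{\plt_1 \cat \plt_2}{c}$ is again in the image of $\Phi$. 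Here the well-behavedness conditions of \refdefn{well-behaved} do the real work: condition~(i) makes the reordered action $\aca'$ unique, and condition~(ii) guarantees it is exactly the forwarded action of \refdefn{forwarding}, so that the forwarding accumulated through the composite $\mm(\plt_1)$ agrees, step by step, with the nested forwarding of the command semantics.

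The main obstacle is this commit case: reconciling the \emph{composed} forwarding of the lifted relation $\mm(\plt_1)$ with the \emph{nested}, one-step-at-a-time forwarding of the $\ppseqc$ operator, and doing so in the weak setting where fetch interposes $\tau$ steps. This is exactly why the theorem is restricted to well-behaved $\mm$: conditions~(i) and~(ii) pin the forwarding down to the single function of \refdefn{forwarding}, so the induction on $|\plt_1|$ goes through, and condition~(iii) ensures that silent steps of later commands may themselves be brought forward, so that the $\tau$ steps generated by fetch on the pipeline side are matched on the command side and no behaviours (including divergences) are gained or lost. With the bisimulation established for arbitrary $\plt$ and $c$, setting $\plt = \epl$ and $c = \cundersc$ yields $\cunderm = \plinem{\epl}{\cundersc}$, as required.
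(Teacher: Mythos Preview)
Your proposal is correct and fleshes out precisely what the paper's one-line proof (``By induction on traces'') leaves implicit: a step-by-step correspondence between pipeline transitions and $\ppseqc$-transitions, mediated by your translation $\Phi$. The paper gives no further detail, so your case split on fetch versus commit, the associativity argument for fetch, the unfolding of the composite $\mm(\plt_1)$ into iterated applications of \refrule{pseqcB} for commit, and the appeal to well-behavedness for forwarding determinacy and $\tau$-reordering are exactly the ingredients any completion of that sketch would need.
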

\begin{proof}
By induction on traces.
\end{proof}

\section{An imperative language with \pseqc}
\labelsect{syntax}

In this section we give the syntax and semantics for an imperative programming language, ``\impro'', 
which uses \pseqc (extending that of \refsect{overview-reordering}).

\paragraph{Syntax}

The syntax of \impro is given below.
\begin{align}
\omit\rlap{
$
	\aca 
	\sspace\ttdef\sspace
		x \asgn e
		\cbar
		\guarde
		\cbar
		\fencepf
		$
}
\notag
\\
\omit\rlap{
$
	\cmdc 
	\sspace\ttdef\sspace
		\Skip 
		\cbar
		\aca
		\cbar
		\cmdc_1 \ppseqm \cmdc_2
		\cbar
		\cmdc_1 \choice \cmdc_2
		\cbar
		\iteratecm
		\notag
$
}
\\
	\also
	\cmdc_1 \scomp \cmdc_2 
	\sdef
	\cmdc_1 \ppseq{\SCmm} \cmdc_2 
	\sspace&\quad\sspace
	\cmdc_1 \pl \cmdc_2 
	\sdef
	\cmdc_1 \ppseq{\PARmm} \cmdc_2 
	\labeleqn{pl-def}
	\\
	\OMIT{
	\locals{\sigma}{\cmdc} 
	&\sdef&
	\varss{\cmdc} \qquad \mbox{where $\dom(\sigma) = \LocalVars \cap \fv{\cmdc}$}
	\labeleqn{process-structure}
	\\
	\globals{\sigma}{\cmdc} 
	&\sdef&
	\varss{\cmdc} \qquad \mbox{where $\dom(\sigma) = \SharedVars \cap \fv{\cmdc}$}
	\labeleqn{globals-def}
	\\
	}
	\finiteratecm{0} \sdef \Nil
	\sspace&\quad\sspace
	\finiteratecm{n+1} \sdef c \ppseqm \finiteratecmn
	\labeleqn{finiter-def}
	\\
	\also
	(\IFbc)_\mm 
	\sspace&\sdef\sspace
	\guard{b} \ppseqm \cmdc_1 \ \ \choice \ \  \guard{\neg b} \ppseqm \cmdc_2
	\labeleqn{defn-if}
	\\
	(\WHbc)_\mm 
	\sspace&\sdef\sspace
	\iterate{(\guardb \ppseqm \cmdc)}{\mm} \ \ppseqm \ \guard{\neg b}
\labeleqn{defn-while}
\end{align}

The basic actions of a weak memory model are
an update $x \asgn e$, a guard $\guarde$, and a barrier.
(We give an atomic expression evaluation semantics for assignments and guards, which is typically reasonable for assembler-level
instructions.)
The barrier instruction type can be instantiated for some data type which we leave unspecified at this point; particular memory models typically introduce their own
barrier/fence types and we define them in later sections.  We assume that every model has at least a ``full'' fence, and define $\ffence \sdef
\fencep{\fullfencespec}$.
The special
instruction $\tau \sdef \guard{\True}$ is a \emph{silent step} (defined later), having no effect on the state, possibly corresponding to some internal actions of a
microprocessor.

A command $\cmdc$ may be the terminated command $\Skip$, a single instruction,
the parallelized sequential composition of two commands
(parameterised by a memory model), a binary choice, 
or an iteration.
Iterations are parameterised by a memory
model, as they implicitly contain sequencing.

\labelsect{syntax-abbreviations}

In \refeqn{pl-def}
we use \pseqc to define
`$\scomp$' as the usual notion of sequential composition (see \refmm{sc}), 
and 
`$\pl$' as the usual notion of parallel composition (see \refmm{PARmm}).
Finite iteration of a command, $\finiteratecmn$, is the $n$-fold composition of $\cmdc$ with reordering according to $\mm$ \refeqn{finiter-def}.
Conditionals are modelled using guards and choice (where false branches are never executed) \refeqn{defn-if}.
By allowing instructions in $\cmdc_1$ or $\cmdc_2$ to be reordered before the guards one can model \emph{speculative execution}, i.e., early execution of instructions which occur after a branch point \cite{PrimerMemoryConsistency11}.
We define a while loop using iteration \refeqn{defn-while} following \cite{FischerLadner79,KozenKAT}. 
Both conditionals and loops are parameterised by 
a memory model since they include a parallelized sequential composition.

\OMIT{
A well-formed system is structured as the parallel composition of processes within the scope of a shared global state 
$\sigma$, that maps all shared variables to their values.
Each process is structured as a (possibly empty) local state $\sigma$ encompassing command $\cmdc$, i.e., a term
$\locals{\sigma}{\cmdc}$, where $\sigma$ refers only to local variables 
referenced in $\cmdc$ (and $\cmdc$ does not further scope any variables).
\begin{equation}
\labeleqn{system-structure}
	\globals{\sigma}{
		\locals{\sigma_1}{\cmdc_1}
		\pl
		\locals{\sigma_2}{\cmdc_2}
		\pl
		\ldots
	}
\end{equation}
}

\paragraph{Operational semantics}
\labelsect{semantics}

\newcommand{\doms}{\dom(\sigma)}

The meaning of \impro is formalised using an operational semantics, given below.
The operational semantics generates a sequence of syntactic instructions
(as opposed to Plotkin-style pairs-of-states \cite{Plotkin}), allowing syntactic analysis of instructions to decide on reorderings.
We show how to convert straightforwardly to pairs-of-states style in \refsect{eff-semantics}.

\begin{centering}
\def\colwidthA{0.59\columnwidth}
\def\colwidthB{0.38\columnwidth}

\ruledefNamed{\colwidthA}{Action}
{action}{
	\aca \tra{\aca} \Skip
}
\ruledefNamed{\colwidthB}{Iterate}
{iterate}{
	\iteratecm
	\tra{\tau} 
	\finiteratecmn
}
\ruledefNamed{\colwidthA}{Choice}
{nondetL}{
	\cmdc \choice \cmdd \tra{\tau} \cmdc
}
\ruledefNamed{\colwidthB}{Choice}
{nondetR}{
	\cmdc \choice \cmdd \tra{\tau} \cmdd
}
\ruledefNamed{\colwidthA}{\Pseqc}
{pseqcA}{
    \Rule{
        c_1 \tra{\aca} c_1'
    }{
        c_1 \ppseqm c_2 \tra{\aca} c_1' \ppseqm c_2
    }
}
\ruledefNamed{\colwidthB}{\Pseqc}
{pseqcC}{
		\Nil \ppseqm c_2 \tra{\tau} c_2 
}
\ruledefNamed{\columnwidth}{\Pseqc}
{pseqcB}{
	    \Rule{
        c_2 \tra{\acb} c_2'
        \qquad
        \rocmdm{\acb'}{c_1}{\acb}
    }{
        c_1 \ppseqm c_2 \tra{\acb'} c_1 \ppseqm c_2'
    }
}

\OMIT{
\ruledefNamed{88mm}{Loop}
{loop-unfold-rule}{
	\begin{array}{rl}
	&
	\WHbc 
	\\
	\tra{\tau} 
	&
	(\guard{b} \cbef \cmdc \scomp \,\, \WHbc) \choice (\guard{\neg b} \cbef \Skip)
	\end{array}
}
}

\end{centering}

\paragraph{Sequential fragment}
\labelsect{overview-rule}

The operational semantics of an instruction
is simply a step labelled by the instruction itself \refrule{action}.
After executing the corresponding step an instruction $\aca$ is terminated, \ie, becomes $\Skip$.
The semantics of loops is given by unfolding a nondeterministically-chosen finite number of times \refrule{iterate} (recall \refeqn{finiter-def}).%
\footnote{
We ignore infinite loops to avoid the usual complications they introduce,
since they do not add anything 
to the discussion of instruction reordering.
}
A nondeterministic choice (the \emph{internal choice} of CSP \cite{CSP}) can choose either branch \refrules{nondetL}{nondetR}.
A \pseqc $\cmdc_1 \ppseqm \cmdc_2$ 
can take a step if $\cmdc_1$ can take a step \refrule{pseqcA}, 
and continues with $\cmdc_2$ when $\cmdc_1$ has terminated \refrule{pseqcC},
as in standard sequential composition.
Together these rules give a standard sequential semantics for imperative programs, and we refer to them as the ``sequential fragment''.

\OMIT{
\begin{theorem}
The traces of $\csc$ are equivalent to executing $\cmdc$ with just the sequential fragment of the semantics.
\end{theorem}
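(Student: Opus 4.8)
The plan is to show that the two trace-generating transition systems are literally identical, by proving that the only rule lying outside the sequential fragment, namely \refrule{pseqcB}, can never fire once the memory model everywhere is $\SCmm$. Since $\csc$ is $\cmdc$ with every parallelized sequential composition parameterized by $\SCmm$ (\refdefn{undermm}), it suffices to establish that the reordering side-condition of \refrule{pseqcB} is unsatisfiable under $\SCmm$; then every derivable step of $\csc$ must use one of \refrule{action}, \refrule{iterate}, \refrule{nondetL}, \refrule{nondetR}, \refrule{pseqcA} or \refrule{pseqcC}, which are precisely the sequential fragment.

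First I would record the base fact that $\SCmm$ admits no reordering: by \refmm{sc} we have $\aca \nroSC \acb$ for all $\aca,\acb \in \Instr$, and hence, by \refeqn{mm-def}, $\SCmm(\aca) = \ess$ for every instruction $\aca$. The central lemma is that this emptiness lifts to commands: for every $c_1$ that can perform an action step $c_1 \tra{\acb} c_1'$ there is no $\acb'$ with $\rocmd{\acb'}{c_1}{\acb}{\SCmm}$. I would prove this by induction on the derivation of the action step of $c_1$: whatever leading action $\aca$ the command $c_1$ can issue, the first reordering obligation already demands $\aca \roSC (\cdot)$, which fails, and the forwarding step causes no difficulty because $\SCmm$ rejects a reordering regardless of whether $\aca$ is forwarded into $\acb$. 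Consequently the second premise of \refrule{pseqcB} fails for every subterm $c_1 \ppseqs c_2$ reachable in $\csc$, so \refrule{pseqcB} is disabled.

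Next I would check closure under transitions: none of the sequential-fragment rules introduces a composition annotated by a model other than $\SCmm$ — \refrule{pseqcA} and \refrule{pseqcC} preserve the annotation, \refrule{iterate} unfolds into $\ppseqs$-compositions, and \refrule{nondetL}/\refrule{nondetR} merely discard a branch — so every command reachable from $\csc$ is again fully $\SCmm$-annotated and the disabling of \refrule{pseqcB} persists along every execution. This gives that the full semantics on $\csc$ and the sequential fragment on $\csc$ induce the same labelled transition system, hence the same traces. Finally, since the sequential fragment never inspects the memory-model parameter (only \refrule{pseqcB} reads $\mm$), the sequential-fragment traces of $\csc$ and of $\cmdc$ coincide, yielding the chain of equalities between the full-semantics traces of $\csc$, the sequential-fragment traces of $\csc$, and the sequential-fragment traces of $\cmdc$.

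The one delicate point, and the main obstacle, is the degenerate case in which $c_1$ is terminated: the sequence lifting gives $\SCmm(\eseq) = \id$, so \refrule{pseqcB} could in principle fire with $\acb' = \acb$ across an empty $c_1$. I would argue this is harmless, since such a step is available only when \refrule{pseqcC} can also fire (collapsing $\Nil \ppseqs c_2$ to $c_2$ via a silent $\tau$-step), and the two routes reach the same continuation $c_2'$ with the same visible label, so they are identified once the trace semantics abstracts from silent steps. Pinning this down requires care about the precise command-level definition of $\rocmd{\cdot}{c_1}{\cdot}{\SCmm}$ on terminated $c_1$ and about the treatment of $\tau$ in the trace semantics; everything else is routine bisimulation-style bookkeeping.
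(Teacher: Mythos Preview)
Your proposal is correct and matches the paper's approach, which is simply ``Straightforward by \refmm{sc}'': the whole argument is that $\SCmm$ forbids every reordering, so \refrule{pseqcB} is inert and only the sequential-fragment rules remain. Your write-up is far more detailed than the paper's one-liner and correctly flags the $\Nil$ edge case; the only wrinkle is a notational slip in your central lemma (you write $c_1 \tra{\acb} c_1'$ with the same $\acb$ you then try to reorder through $c_1$, where you presumably meant ``$c_1$ non-terminated'' as the hypothesis and $\acb$ an arbitrary action of $c_2$), and the proof is cleaner by structural induction on $c_1$ showing $\SCmm(c_1)=\ess$ whenever $c_1$ contains an action, rather than by induction on a derivation.
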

\begin{proof}
Straightforward by \refmm{sc}.
\end{proof}
}

\paragraph{Parallelized sequential composition}
We now consider a further rule, unique to \impro, that allows reordering of instructions.
Rule \refrule{pseqcB}
states that given a program $\cmdc_1 \ppseqm \cmdc_2$, an instruction $\acb$ of $\cmdc_2$ can happen before the instructions
of $\cmdc_1$, provided that
 $\rocmdm{\acb'}{\cmdc_1}{\acb}$,
\ie, $\acb$ is not dependent on instructions in $\cmdc_1$ (according to the rules of model $\mm$), 
and the result of (accumulated) forwarding of instructions in $\cmdc_1$ to $\acb$ results in $\acb'$.
This is given by lifting a model $\mm$ to commands, defined inductively below (recall \refeqns{mm-def}{defn-roa}).
\labelsect{reordering-forwarding}
\labelsect{roc-fwd}
\begin{gather}
	\mm(\Nil) = \id 
	\quad
	\mm_1(c_1 \ppseq{\mm_2} c_2) = \mm_1(c_1) \comp \mm_1(c_2)
	\labeleqn{roc-pseqc}
	\\
	\mm(c_1 \nondet c_2) = \mm(c_1) \int \mm(c_2)
	\quad
	\mm(\iteratecm) = \bigintUn \mm(\finiteratecmn)
	\labeleqn{roc-choice}
	\OMIT{
	\\
	\rocmd{\acb''}{c_1 \ppseq{\mm_2} c_2}{\acb}{\mm_1}
	\ if \
	\exists \acb' \dot
		\rocmd{\acb''}{c_1}{\acb'}{\mm_1}
		\land
		\rocmd{\acb'}{c_2}{\acb}{\mm_1}
	\labeleqn{roc-pseqc}
	\\
	\rocmdm{\acb'}{c_1 \choice c_2}{\acb}
	\ if \
		\rocmdm{\acb'}{c_1}{\acb}
		\land
		\rocmdm{\acb'}{c_2}{\acb}
	\labeleqn{roc-choice}
	}
\end{gather}
Any instruction may reorder with the empty command $\Nil$.
Reordering according to memory model $\mm_1$ over $c_1 \ppseq{\mm_2} c_2$ is the relational 
composition of the orderings of $c_1$ and $c_2$ with respect to $\mm_1$ 
(independent of $\mm_2$)
\refeqn{roc-pseqc}.
\OMIT{
(although in most practical situations they will be the same).  
For instance, in the command
$
	(c \scomp d) \ppseqm e
$,
all instructions of $d$ occur after instructions of $c$, but instructions of $e$ may be interleaved with them according to $\mm$; 
the sequential composition should not have a non-local effect on other reorderings (for instance if $\mm$ was parallel
composition it should not be constrained by ordering in another ``thread'').
}
Reordering over a choice is possible only if reordering can occur over both branches \refeqn{roc-choice}
(but choices can be resolved via \refrules{nondetL}{nondetR}).
Reordering over an iteration is possible only if reordering is possible over all possible
unrollings.


\OMIT{
The rules for parallel composition \refeqn{pl-def} can be derived from \refrule{pseqc} specialised for $\PARmm$, and they reduce to the usual rules for 
interleaving parallel.
\begin{gather*}
	\Rule{
		\cmdc_1 \tra{\aca} \cmdc_1'
	}{
		\cmdc_1 \pl \cmdc_2 \tra{\aca} \cmdc_1' \pl \cmdc_2
	}
	\qquad
	\Rule{
		\cmdc_2 \tra{\aca} \cmdc_2'
	}{
		\cmdc_1 \pl \cmdc_2 \tra{\aca} \cmdc_1 \pl \cmdc_2'
	}
	\qquad
	\begin{array}{l@{~}c@{~}l}
		c \pl \Nil &\tra{\tau}& c \\
		\Nil \pl d &\tra{\tau}& d 
	\end{array}
\end{gather*}
}

\OMIT{
Reordering over iterations is nontrivial to define in the general case where an iteration of the loop may forward to an instruction.
However, iterations can be unfolded to give a finite sequence of command separated by \pseqc
according to \refrule{iterate}, and therefore we can derive the following, letting $c^*$ stand for \emph{finite} iteration of $c$.
\begin{equation}
	\rocmdm{b}{c}{b} \imp \rocmdm{b}{c^*}{b}
\end{equation}
This states that if an instruction can be reordered before a command $c$ unchanged then it can be reordered before any finite iteration of $c$.
Specific cases where instructions can be reordered with forwarding, or where instructions from one iteration can intermingle with a previous
iteration, must be handled after explicit unrolling.
}


\paragraph{Trace semantics}
\labelsect{trace-semantics}

\newcommand{\Visible}[1]{{\sf visible}~#1}
\newcommand{\Silent}[1]{{\sf silent}~#1}
\newcommand{\Infeasible}[1]{{\sf infeasible}~#1}
\newcommand{\Visa}{\Visible{\aca}}
\newcommand{\Silenta}{\Silent{\aca}}
\newcommand{\Infeasa}{\Infeasible{\aca}}

Given a program $\cmdc$ the operational semantics generates a \emph{trace}, that is, a finite sequence of steps
$\cmdc_0 \tra{\aca_1} \cmdc_1 \tra{\aca_2} \ldots$ where the labels in the trace are actions.
We write $c \xtra{t} c'$ to say that $c$ executes the actions in trace $t$ and evolves to $c'$.
\OMIT{
inductively constructed below.
The base case for the induction is given by $c \xtra{\eseq} c$.
\begin{eqnarray}
	c \tra{\aca} c' \land \Visa \land c' \xtra{t} c'' &\imp& c \xtra{\aca \cat t} c''
	\labeleqn{ptrace-visible}
	\\
	c \tra{\aca} c' \land \Silenta \land c' \xtra{t} c'' &\imp& c \xtra{t} c''
	\labeleqn{ptrace-silent}
	\\
	\also
	\Meaning{c} ~~\sdef~~  \{t | c \xtra{t} \Nil \}
	\labeleqn{def-meaning}
	\quad\qquad
	c \refsto d &\sdef & \Meaning{d} \subseteq \Meaning{c}
	\labeleqn{def-refsto}
	\quad\qquad
	c \refeq d ~~\sdef~~  c \refsto d \land d \refsto c
	\labeleqn{def-refeq}
\end{eqnarray}
}
Traces of \emph{visible} actions are accumulated into the trace,
and \emph{silent} actions (such as $\tau$) are discarded, \ie, we have a ``weak'' notion of equivalence \cite{CCS}.  
A visible action is any action with a visible effect, for instance, fences, assignments, and guards with free variables.
Silent actions include any guard which is $\True$ in any state \emph{and} contains no free variables; for instance, $\guard{0 = 0}$ is silent while
$\guard{x = x}$ is not.
A third category of actions, $\Infeasa$, includes exactly those guards $\guard{b}$ where
$b$ evaluates to $\False$ in every state.  This includes actions such as $\guard{\False}$ and $\guard{x \neq x}$.
The meaning of a command $\cmdc$ is its set of all terminating behaviours, written $\Meaning{c}$, with
behaviours containing infeasible actions being excluded from consideration.  

\paragraph{Refinement}
We take the 
usual (reverse) subset inclusion definition of refinement, 
\ie, $c \refsto d$ if every behaviour of $d$ is a behaviour of $c$; our notion of command \emph{equivalence} is refinement in both direction.
From this definition we can derive expected properties for the standard operators such as
$\Meaning{c \choice d} = \Meaning{c} \union \Meaning{d}$ 
and 
$\Meaning{c \scomp d} = \Meaning{c} \cat \Meaning{d}$ (overloading `$\cat$' to mean pairwise concatenation of sets of lists).

\OMIT{
\\
\shorteqn{0.49\columnwidth}{
	\cmdc_1 \ppseqm \cmdc_2
	\refsto
	\cmdc_1 \scomp \cmdc_2
	\label{law:keep-order}
}
\shorteqn{0.47\columnwidth}{
	c \choice d 
	\refsto
	c
	\labellaw{chooseL}
}
\OMIT{
&
\shorteqn{0.32\columnwidth}{
	\tau \ppseqm c 
	\refeq
	c
	\labellaw{elim-tau-step}
}
}
\\
}

Properties for \pseqc are, of course, more interesting, and we provide some below that
we make use of in the rest of the paper.
\begin{eqnarray}
    \cmdc_1 \ppseqm \cmdc_2
    &\refsto&
    \cmdc_1 \scomp \cmdc_2
    \label{law:keep-order}
	\\
    c \choice d
    &\refsto&
    c
    \labellaw{chooseL}
\\
	(c_1 \ppseqm c_2) \ppseqm c_3
	&\refeq&
	c_1 \ppseqm (c_2 \ppseqm c_3)
	\labellaw{pseqc-assoc}
	\\
	(\aca \scomp \cmdc) \pl \cmdd
	&\refsto&
	\aca \scomp (\cmdc \pl \cmdd)
	\label{law:fix-interleaving}
	\\
	(c_1 \choice c_2) \pl d 
	&\refeq&
	(c_1 \pl d) \choice (c_2 \pl d )
	\labellaw{dist-choice-pl}
\end{eqnarray}
\reflaw{keep-order} states that sequential composition is always a refinement of \pseqc.
\reflaw{chooseL} is the standard resolution of a choice to its left operand (a symmetric law holds for the right operand).
\Pseqc is associative by \reflaw{pseqc-assoc}, provided both instances are parameterised by the same model $\mm$.
The interleaving semantics allows us to derive 
\reflaw{fix-interleaving}, a typical interleaving law, and
\reflaw{dist-choice-pl} for distributing choice over parallel composition; these laws are important for reasoning about the effects of different 
instruction reorderings.

Now consider the case of two instructions in sequence. 
\begin{eqnarray}
	\aca \nrom \acb
	\entails \sspace \sspace
	\aca \ppseqm \acb 
	&\refeq&
	\aca \scomp \acb
	\labellaw{2actions-keep-order}
	\\
	\roabab 
	\entails \sspace \sspace
	\aca \ppseqm \acb 
	&\refsto&
	\acb' \scomp \aca
\labellaw{2actions-swap-order}
	\\
	\roabab 
	\entails \sspace \sspace
	\aca \ppseqm \acb 
	& \refeq &
	(\aca \scomp \acb) \choice (\acb' \scomp \aca)
\labellaw{2actions-reduce}
\\
	c_1 \ppseqm \ffence \ppseqm c_2 
	&\refeq&
	c_1 \scomp \ffence \scomp c_2 
	\labellaw{fence-to-seqc}
\end{eqnarray}
\reflaw{2actions-keep-order} states that if $\acb$ cannot be reordered according
to $\mm$ then they are executed sequentially.  \reflaw{2actions-swap-order} states that if reordering is allowed then that is one possible behaviour.
\reflaw{2actions-reduce}
composes these two rules to reduce \pseqc to a choice over sequential compositions,
eliminating the memory model.
Similarly,
a full fence restores order and hence sequential reasoning as in \reflaw{fence-to-seqc}.



\newcommand{\reorderings}[2]{{\sf ro}_{#2}(#1)}
\newcommand{\reorderingsc}[1]{\reorderings{\cmdc}{#1}}
\newcommand{\reorderingscm}{\reorderingsc{\mm}}

\newcommand{\pseqTrace}{\mathrel{{;}\!\!\!|}}

\newcommand{\xtraM}[2]{\xtra{#2}_{#1}}
\newcommand{\xtram}[1]{\xtraM{\mm}{#1}}
\newcommand{\taug}{t^+}

\paragraph{Monotonicity}
Monotonicity (congruence) holds for the standard operators of \impro, but
monotonicity of \pseqc contains a subtlety in that the allowed traces of $c_1 \ppseqm c_2$ are dependent on the reorderings allowed by $c_1$ with respect to $\mm$
(Rule \refrule{pseqcB}). 
To handle this we need a stronger notion of refinement, written
	$c \refstoROm c'$,
where traces are augmented to track 
the reorderings allowed,%
\footnote{
Similarly to refusal
sets in CSP's failures/divergences model \cite{Roscoe98}.
} allowing strengthening only.
\begin{theorem}
	\labellaw{mono-pseqc}
	\labellaw{pseqc-monoR}
	\labellaw{pseqc-monoL}
	\labeleqn{pseqc-monoM}
$
	c \ppseq{\mm} d 
	\refsto
	c' \ppseq{\mm'} d'
$ 
if
	$c \refstoROm c'$, $d \refsto d'$, and $\mm \roref \mm'$.
\end{theorem}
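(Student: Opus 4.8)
The plan is to factor the combined statement into three one-sided monotonicities and then chain them by transitivity of $\refsto$ (immediate, since $\refsto$ is reverse inclusion of trace sets). Concretely I would show
\[
	c \ppseqm d
	\refsto
	c' \ppseqm d
	\refsto
	c' \ppseqm d'
	\refsto
	c' \ppseq{\mm'} d',
\]
where the first step consumes $c \refstoROm c'$, the second $d \refsto d'$, and the third $\mm \roref \mm'$. Each step fixes two of the three parameters, so it suffices to establish left-, right-, and model-monotonicity in isolation; all three go by induction on traces. This decomposition matches the three sub-laws the statement bundles together.

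The model step is routine. From $\mm \roref \mm'$, that is, $\mm'(\aca) \subseteq \mm(\aca)$ for every $\aca$ (\refdefn{roref}), an induction on command structure using the liftings \refeqns{roc-pseqc}{roc-choice} gives $\mm'(e) \subseteq \mm(e)$ for every command $e$, since relational composition and intersection are monotone. The top-level model parameter enters the semantics only through the reordering side-condition of \refrule{pseqcB}, so every step of $c' \ppseq{\mm'} d'$ is reproducible, unchanged, by $c' \ppseqm d'$, giving $c' \ppseqm d' \refsto c' \ppseq{\mm'} d'$. The right step is similarly direct: a trace of $c' \ppseqm d'$ interleaves $c'$-steps (\refrule{pseqcA}) with $d'$-steps brought forward over the current residual of $c'$ (\refrule{pseqcB}), and the condition $\rocmdm{\acb'}{\cdot}{\acb}$ depends only on that residual and the emitted action, not on the structure of $d'$. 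Projecting onto the $d'$-steps yields a terminating trace of $d'$, hence by $d \refsto d'$ a trace of $d$; replaying the same interleaving with $d$ for $d'$ and the identical reordering annotations (the $c'$-residual at each position is unchanged) reconstructs the behaviour in $c' \ppseqm d$.

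The left step is the crux, and is exactly where $\refstoROm$ is required. Plain refinement $c \refsto c'$ guarantees only that $c$ emits the same action traces as $c'$; it says nothing about whether, at the instant a $d$-step $\acb$ is brought forward by \refrule{pseqcB}, the residual of $c$ both permits the reordering and forwards $\acb$ to the same $\acb'$ as the residual of $c'$ did. The relation $\refstoROm$ carries precisely this extra information: alongside each action an augmented trace records the lifted reordering relation $\mm(\cdot)$ of the current left-residual, and $c \refstoROm c'$ demands, at matched positions, the relational inclusion $\mm(c'_{\mathrm{cur}}) \subseteq \mm(c_{\mathrm{cur}})$ (``strengthening only''). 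Because this inclusion is between relations of pairs $(\acb',\acb)$ rather than bare reorderable actions, it secures both that $c$ permits the reordering and that it produces the same forwarded $\acb'$.

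I would then prove left-monotonicity by exhibiting $\refstoROm$ as a weak simulation between $c \ppseqm d$ and $c' \ppseqm d$, relating configurations with identical $d$-components and $\refstoROm$-related left components. An \refrule{pseqcA} step of $c'$ is matched by the corresponding step of $c$; a \refrule{pseqcC} termination is matched because $\refstoROm$ preserves reachability of $\Nil$; and the \refrule{pseqcB} step --- the obstacle --- is matched using the relational inclusion above to furnish $\rocmdm{\acb'}{\cdot}{\acb}$ in $c$'s residual from the corresponding fact about $c'$'s residual. The single genuine difficulty, and the only place the stronger hypothesis is consumed, is verifying that $\refstoROm$ is closed under matched steps, so that the inclusion $\mm(c'_{\mathrm{cur}}) \subseteq \mm(c_{\mathrm{cur}})$ persists to every later $d$-step and not merely the first; this is the simulation invariant that carries the induction through.
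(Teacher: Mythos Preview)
Your proposal is correct and aligns with the paper's approach: the paper's proof is a two-sentence sketch (``By induction on traces: the requirement for the left argument is a consequence of \refrule{pseqcB}; and strengthening of the memory model reduces the number of possible traces''), and you have simply supplied the details, including the three-way factorisation and the identification of the left-argument case as the one genuinely needing the augmented-trace hypothesis $\refstoROm$. Your observation that closure of $\refstoROm$ under matched steps is the simulation invariant carrying the induction is exactly the content the paper gestures at with ``consequence of \refrule{pseqcB}''.
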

\begin{proof}
By induction on traces: the requirement for the left argument is a consequence of \refrule{pseqcB}; and strengthening of the memory model reduces
the number of possible traces.
\end{proof}


\OMIT{
Ideally we could define a compositional operator on traces such that 
$\Meaning{c \ppseqm d} = \Meaning{c} \oplus \Meaning{d}$.
However, forwarding, and hence reordering, can weaken
reordering constraints (we give an example below), and since by \reflaw{2actions-swap-order} reordering is a refinement, then reordering 
can allow more behaviours, breaking monotonicity in the left argument.
We can restore monotonicity with a stronger constraint than just trace refinement, by augmenting traces with the reordering 
constraints of the program that generated each step in the trace.
The \emph{reordering set} for a command $c$ is defined as
$
	\reorderingscm \sspace \sdef\sspace \{(\acb', \acb) | \rocmdm{\acb'}{\cmdc}{\acb} \}
$.
We inductively define a transition relation $c \xtram{\taug} c'$
where $\taug$ is an augmented trace (cf. the failures-divergences model of CSP \cite{FailuresDivergences85,vanGlabbeek}).
\OMIT{
\begin{eqnarray}
	c \tra{\aca} c' \land \Visa \land c' \xtram{\taug} c''  
	&\imp& 
	c \xtra{(\aca, \reorderingscm) \cat \taug}_\mm c''
	\labeleqn{ptraceM-visible}
	\\
	\tracesROm{c} 
	~~\sdef~~
	\{t | c \xtram{\taug} \Nil\}
	&&
	\qquad
	c \refstoROm c' 
	~~\sdef~~
	\tracesROm{c'} \subseteq \tracesROm{c}
	\labeleqn{tracesM-defs}
\end{eqnarray}

We can straightforwardly define an operator `$\pseqTrace$' so that for augmented trace $\taug_1$ and trace $t_2$, $\taug_1 \pseqTrace t_2$ is an interleaving of steps in $t_2$ with 
those of $\taug_1$ provided the reordering is allowed according to that recorded in $\taug_1$.
\begin{theorem}[Traces of \pseqc]
$\Meaning{c \ppseqm d} = \tracesROm{c} \pseqTrace \Meaning{d}$
\end{theorem}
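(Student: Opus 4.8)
The plan is to prove the equality by two set inclusions, each by induction on the length of the visible trace, driven by a single-step correspondence between the operational semantics of $c \ppseqm d$ and the structure of the interleaving operator $\pseqTrace$. By construction $\pseqTrace$ is meant to mirror the three \pseqc rules: consuming the head of the augmented trace for $c$ corresponds to a left step \refrule{pseqcA}; interleaving a head of the trace for $d$ corresponds to bringing a $d$-action forward \refrule{pseqcB}; and exhausting the $c$-trace corresponds to the hand-off \refrule{pseqcC}, after which $d$ proceeds sequentially. Thus the bulk of the work is to establish this correspondence for a single step and then lift it.

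First I would prove a step lemma: a single visible transition of $c \ppseqm d$ arises in exactly one of two ways. Either $c \tra{\aca} c'$, giving $c \ppseqm d \tra{\aca} c' \ppseqm d$ (a left contribution, which the augmented trace tags with the current reordering set $\reorderingscm$); or $d \tra{\acb} d'$ with $\rocmdm{\acb'}{c}{\acb}$, giving $c \ppseqm d \tra{\acb'} c \ppseqm d'$ (a $d$-action admitted by the current left reordering set and recorded in forwarded form $\acb'$). Silent steps are discarded from traces on both sides; condition iii) of \refdefn{well-behaved} guarantees that internal steps are reorderable whenever any reordering is possible, so they never obstruct the match. For the forward inclusion I would take $t \in \Meaning{c \ppseqm d}$, peel off its first visible step via the step lemma, apply the induction hypothesis to the residual, and reassemble: the left-contributed steps read in order with their recorded reordering sets form an element of $\tracesROm{c}$, the un-forwarded $d$-actions form an element of $\Meaning{d}$, and the interleaving witnessed by the derivation is precisely a member of their image under $\pseqTrace$.

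The backward inclusion runs the same induction in reverse. Given $\taug \in \tracesROm{c}$, some $t_2 \in \Meaning{d}$, and $t \in \taug \pseqTrace t_2$, I would replay each interleaving decision as the corresponding \pseqc rule, using the reordering set carried at the head of the residual augmented trace to discharge the side condition of \refrule{pseqcB} and to recover the forwarded action $\acb'$, and using $\mm(\Nil) = \id$ once the $c$-trace is exhausted to match the sequential tail produced by \refrule{pseqcC}. The base case (empty trace, requiring both $c$ and $d$ to have terminated) and the choice/iteration cases (resolved through the guards and unfoldings of the sequential fragment) are routine once the step lemma is available.

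The hard part will be the bookkeeping of reordering sets together with forwarding. The side condition of \refrule{pseqcB} is checked against the current residual left command, whose reordering set is $\reorderingscm$; so $\pseqTrace$ must consult the reordering set tagged on the next pending left step of the augmented trace (or $\id$ once none remain) when admitting a $d$-action and computing its forwarded image. Making this both sound and complete requires a subsidiary lemma that the command-lifted reordering of \refeqn{roc-pseqc} and \refeqn{roc-choice} evolves consistently under left steps -- that reordering $\acb$ past $c$ agrees with reordering it past the first instruction and then past the residual, accumulating forwarding -- so that the single set recorded at each point exactly characterises which interleaved $d$-actions are legal and what they become. This alignment of accumulated forwarding with the recorded reordering sets is the crux; the remaining manipulations are mechanical.
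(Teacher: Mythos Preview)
The paper gives essentially no proof in the text: it states that $\pseqTrace$ is defined to mirror \refrules{pseqcA}{pseqcB}, remarks that ``arriving at a form that gives equality was nontrivial'', and defers to supplementary material. Your plan is consistent with that sketch and considerably more detailed; the two-inclusion argument by induction on trace length, driven by a step lemma matching each transition of $c \ppseqm d$ to a $\pseqTrace$-move, is the natural structure, and you correctly locate the crux in the bookkeeping of reordering sets under forwarding.

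One point where your plan understates the difficulty the paper flags: your subsidiary lemma says reordering past $c$ factors as reordering past its first instruction and then the residual, so that the recorded set ``exactly characterises'' which $d$-actions are legal. But the augmented trace records $\reorderingscm$ only at \emph{visible} left steps, and silent left steps --- choice resolution \refrules{nondetL}{nondetR}, iteration unfolding \refrule{iterate}, and the hand-off \refrule{pseqcC} --- can strictly enlarge the reordering set (e.g.\ $\mm(c_1 \choice c_2) = \mm(c_1) \cap \mm(c_2) \subseteq \mm(c_1)$). So the set recorded at the next visible step is in general larger than the set the operational rule \refrule{pseqcB} checks against at intermediate points, and your factoring lemma, phrased in terms of ``the first instruction'', does not cover this. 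The repair is a separate monotonicity observation: every silent left step weakly increases the reordering set. With that in hand, the forward inclusion goes through because any $d$-action admitted at an intermediate $c$ is also admitted by the (larger) recorded set; and the backward inclusion goes through because one may always eagerly perform all pending silent left steps before interleaving a $d$-action, so the operational check coincides with the recorded set. This is very likely the content behind the paper's remark that getting \emph{equality}, rather than mere inclusion, was nontrivial.
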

\begin{proof}
Operator $\pseqTrace$ is defined similarly to \refrules{pseqcA}{pseqcB}.
However, arriving at a form
that gives equality was nontrivial;
see supplementary material.
\end{proof}
}
The set of augmented traces of $\cmdc$ with respect to $\mm$ is given by $\tracesROm{c}$, and we use this to define a stronger notion of refinement,
$c \refstoROm c'$.
Hence 
$c \refstoROm c'$ means that every behaviour of $c'$ is a behaviour of $c$, and that additionally
$c'$ does not allow any additional reorderings that $c$ did not allow.
Although a stronger constraint, the majority of structural rules that apply for `$\refsto$' transfer to `$\refstoROm$', \eg,
$
	c \choice d \refstoROm c
$.

We can now derive that
\pseqc is monotonic in its right-hand argument for all models with respect to trace inclusion, but needs
the stronger condition for the left-hand side.  We can also strengthen the memory model.
}

\section{Reasoning about \impro}
\labelsect{eff-semantics}

So far we have considered trace-level properties; now we turn attention to state- and predicate-based reasoning.
The action-trace semantics can be converted into a typical pairs-of-states semantics straightforwardly.
\begin{gather}
	\eff{x \asgn e} ~~ = ~~ \{(\sigma, \Update{\sigma}{x}{\evalse})\}
	\quad
	\labeleqn{effasgn-def}
	\\
	\eff{\guarde}  = \{(\sigma, \sigma) | \sigma \in e\}
	\qquad
	\eff{\fencepf} ~~ = ~~ \id
	\labeleqn{effa-def}
	\\
	\eff{\eseq} ~~=~~ \id 
	\qquad
	\quad
	\eff{a\cat t} ~~=~~ \effa \comp \eff{t} 
	\labeleqn{efft-def}
	\\
	\effc  = \bigunion \{\eff{t} | t \in \Meaning{c} \}
	\labeleqn{effc-def}
	\\
	\wpc ~~ \sdef ~~ \lambda q. \{\sigma | \forall \sigma' @ (\sigma, \sigma') \in \effc \imp \sigma' \in q \}
	\labeleqn{def-wpre}
	\\
	\htpcq ~~\sdef~~ p \imp \wpcq
	\labeleqn{def-htrip}
\end{gather}

Let the type
$\State$ be the set of total mappings from variables to values, and let the effect function $\effword: \Instr \fun \power (\State \cross \State)$ return a relation on states
given an
instruction.
We let `$\id$' be the identity relation on states, and
given a Boolean expression $e$ we write $\sigma \in e$ if $e$ is $\True$ in state $\sigma$. 
The effect of actions is straightforward from \refeqns{effasgn-def}{effa-def}, giving the trivial case
$\eff{\tau} = \id$.
The relationship with standard Plotkin style operational semantics \cite{Plotkin} is straightforward.
\begin{equation}
		c \tra{\aca} c' 
		\land
		(\sigma,\sigma') \in \effa
		\sspace \entails \sspace
		\langle c, \sigma \rangle \tra{} \langle c',\sigma' \rangle
\end{equation}
The advantage of our approach is that syntax of the action $\aca$ can be used to reason about allowed reorderings using 
\refrules{pline-commit}{pseqcB}, whereas in general one cannot reconstruct the action from a pair of states.
Mapping $\effword$ onto a trace $t$, $\mathsf{map}(\effword, t)$, yields the sequence of relations corresponding to the set of sequences of pairs of
states in a Plotkin-style trace.
We can lift $\effword$ to traces by inductively composing such a sequence of relations \refeqn{efft-def}, 
and we define the overall effect of a command by the union of the effect of its traces \refeqn{effc-def}.

The predicate transformer for weakest precondition semantics is given in \refeqn{def-wpre}.
A predicate is a set of states, so that
given a command $\cmdc$ and predicate $q$, $\wpc(q)$ returns the set of (pre) states $\sigma$ where every post-state related to $\sigma$ by $\effc$ satisfies $q$
(following, \eg, \cite{DijkstraScholten90}).
Given these definitions we can show the following.
\begin{theorem}
For \emph{Sequential} $\mm$,
$
	\wptrans{c_1 \ppseqm c_2} = \wptrans{c_1 \scomp c_2}
$
\end{theorem}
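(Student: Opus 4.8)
The plan is to reduce the predicate-transformer equality to an equality of the underlying effect relations, and then to read off the two required inclusions from facts already established. Since $\wptrans{\cmdc}$ is defined (\refeqn{def-wpre}) purely in terms of $\effc$, two commands with the same effect relation induce the same weakest-precondition transformer; hence it suffices to prove $\eff{c_1 \ppseqm c_2} = \eff{c_1 \scomp c_2}$.

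First I would dispatch the inclusion $\eff{c_1 \ppseqm c_2} \subseteq \eff{c_1 \scomp c_2}$. This is nothing more than the defining property of a \emph{sequential} model (\refdefn{seqmm}), once we recall that $\ppseqs$ and $\scomp$ both denote $\ppseq{\SCmm}$, so that $c_1 \ppseqs c_2$ is literally $c_1 \scomp c_2$. The theorem's hypothesis that $\mm$ is sequential therefore gives this direction immediately.

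For the reverse inclusion $\eff{c_1 \scomp c_2} \subseteq \eff{c_1 \ppseqm c_2}$ I would invoke \reflaw{keep-order}, namely $c_1 \ppseqm c_2 \refsto c_1 \scomp c_2$, which by the definition of refinement means $\Meaning{c_1 \scomp c_2} \subseteq \Meaning{c_1 \ppseqm c_2}$. Because the effect of a command is the union of the effects of its traces (\refeqn{effc-def}), the operator $\effword$ is monotone in the set of traces: a union taken over a subset of traces is contained in the union over the superset. This lifts the behavioural inclusion to the relational inclusion $\eff{c_1 \scomp c_2} \subseteq \eff{c_1 \ppseqm c_2}$. Combining the two inclusions yields $\eff{c_1 \ppseqm c_2} = \eff{c_1 \scomp c_2}$, and the $\wptrans{}$ equality follows.

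The argument is essentially mechanical once the reduction to effect relations is made, and notably requires no induction on command structure, since both the sequential property and \reflaw{keep-order} are already stated for arbitrary $c_1, c_2$. The only point needing any care — the main, and rather minor, obstacle — is the monotonicity of $\effword$ with respect to trace-set inclusion, as this is precisely the step that converts the behavioural refinement of \reflaw{keep-order} into the relational inclusion the effect-level calculation needs.
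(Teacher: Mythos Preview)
Your proposal is correct and follows essentially the same approach as the paper: reduce the $\wptrans{}$ equality to an effect-relation equality via \refeqn{def-wpre}, obtain one inclusion from the definition of \emph{sequential} (\refdefn{seqmm}), and the other from the fact that in-order execution is always a behaviour of $\ppseqm$. Your argument is in fact more explicit than the paper's terse justification: you spell out the reverse inclusion via \reflaw{keep-order} together with the monotonicity of $\effword$ in the trace set (\refeqn{effc-def}), whereas the paper's citation of \refth{eff-is-seq} for this step is somewhat opaque (and \refth{refsto-eff}, which packages exactly the monotonicity you need, appears only after this theorem).
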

\begin{proof}
By \refdefn{seqmm}, \refth{eff-is-seq} and \refeqn{def-wpre}.
\end{proof}
We define Hoare logic judgements using weakest preconditions \refeqn{def-htrip}
(note that we deal only with partial correctness as we consider only finite traces).
From these definitions we can derive the standard rules of weakest preconditions and Hoare logic for commands such as
nondeterministic choice and sequential composition, 
but there are no general compositional rules for \pseqc.
\begin{gather}
	\htrip{\Repl{q}{x}{e}}{x \asgn e}{q} 
	\quad
	\htrip{q \land e}{\guarde}{q} 
	\quad
	\htrip{q}{\ffence}{q} 
	\labellaw{htrip-asgn}
	\\
	\htrippq{c_1 \scomp c_2} \iff 
	\htripp{c_1}{r} \land
	\htrip{r}{c_2}{q}
	\labellaw{htrip-seqc}
	\\
	\htrippq{c_1 \choice c_2} \iff 
	\htrippq{c_1} \land
	\htrippq{c_2} 
	\labellaw{htrip-choice}
	\\
	\htrippq{c_1 \ppseqm \ffence \ppseqm c_2} \iff 
	\htripp{c_1}{r} \land
	\htrip{r}{c_2}{q}
	\labellaw{htrip-pseqc-fence}
\end{gather}
\reflaw{htrip-asgn} follows from \refeqn{effa-def} and \refeqn{def-wpre}, while 
\reflaws{htrip-seqc}{htrip-choice} are straightforward by definition.
\reflaw{htrip-pseqc-fence} is a key rule that shows how, for any $\mm$ where $\ffence$
acts as a full fence, inserting a fence restores sequential reasoning in that a mid-point (predicate $r$) can be used
for compositional reasoning.

\begin{theorem}
\labelth{refsto-eff}
If $c \refsto c'$ then
$
	\eff{c'} \subseteq \effc$ ,
	$\wpc \subseteq \wptrans{c'}$,
	and 
	$
	\htpcq \imp \htrip{p}{c'}{q}
$
\end{theorem}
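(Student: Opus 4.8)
The plan is to push the single hypothesis $c \refsto c'$ through three successive layers of definition, treating each conclusion as a consequence of the previous one. Since refinement is reverse subset inclusion of meanings, $c \refsto c'$ unfolds to $\Meaning{c'} \subseteq \Meaning{c}$, so the whole argument reduces to tracking how this one inclusion propagates through $\effword$, $wp$ \refeqn{def-wpre}, and the Hoare-triple encoding \refeqn{def-htrip}.

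First I would establish the effect inclusion. Unfolding \refeqn{effc-def}, $\effc = \bigunion \{\eff{t} \mid t \in \Meaning{c}\}$ and likewise $\eff{c'} = \bigunion \{\eff{t} \mid t \in \Meaning{c'}\}$; because the index set $\Meaning{c'}$ is contained in $\Meaning{c}$, the union over the smaller family is contained in the union over the larger one, giving $\eff{c'} \subseteq \effc$. This step is pure monotonicity of $\bigunion$ and needs nothing about the internal structure of individual traces.

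Next I would derive $\wpc \subseteq \wptrans{c'}$, reading the inclusion of predicate transformers pointwise: for each postcondition $q$ it suffices to show $\wpc(q) \subseteq \wptrans{c'}(q)$. Unfolding \refeqn{def-wpre}, a state in $\wpc(q)$ forces every $\effc$-successor into $q$; since $\eff{c'} \subseteq \effc$ only restricts the set of successors, the same state forces every $\eff{c'}$-successor into $q$, placing it in $\wptrans{c'}(q)$. The Hoare conclusion is then immediate: by \refeqn{def-htrip} the judgement $\htpcq$ amounts to $p \subseteq \wpc(q)$, and chaining with $\wpc(q) \subseteq \wptrans{c'}(q)$ yields $p \subseteq \wptrans{c'}(q)$, that is, $\htrip{p}{c'}{q}$.

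The only conceptually noteworthy point, and the single place where the inclusion direction reverses, is the $wp$ step: refinement shrinks the behaviour set and hence the effect relation, yet a smaller effect relation \emph{weakens} (enlarges) the weakest precondition, because fewer reachable post-states make a universally quantified postcondition easier to guarantee. Once this contravariance is made explicit, no genuine obstacle remains; the residual care is purely notational, namely confirming that $\subseteq$ on predicate transformers is intended pointwise and that the $\imp$ of \refeqn{def-htrip} denotes set inclusion of predicates.
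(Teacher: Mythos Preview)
Your proposal is correct and is precisely the unfolding that the paper gestures at with ``Straightforward from definitions'': you push the trace-set inclusion $\Meaning{c'} \subseteq \Meaning{c}$ through the union defining $\effword$, then through the contravariant $wp$ transformer, and finally through the Hoare encoding. The paper gives no further detail, so your argument is simply the explicit version of its one-line proof.
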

\begin{proof}
Straightforward from definitions.
\end{proof}

We use two key theorems to establish (or deny) properties of programs executing under weak memory models.
\begin{theorem}[Verification]
\labelth{reasoning}
\begin{gather*}
	(i)~~
	\mbox{If} \sspace
	\cmdc \refeq \undersc{\cmdc'}
	\sspace\mbox{then} \sspace
	\htpcq \iff \htrip{p}{\undersc{\cmdc'}}{q}
	\\
	(ii)~~
	\mbox{If} \sspace
	\cmdc \refsto \cmdc'
	\sspace \mbox{and} \sspace
	\htrip{p}{\cmdc'}{\neg q}
	\sspace\mbox{then}\sspace 
	\neg \htrip{p}{\cmdc}{q}
\end{gather*}
\end{theorem}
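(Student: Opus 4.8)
The plan is to reduce both parts to \refth{refsto-eff}, which already transports a refinement $\cmdc \refsto \cmdc'$ into the three facts $\eff{\cmdc'} \subseteq \effc$, $\wpc \subseteq \wptrans{\cmdc'}$, and the Hoare-triple implication $\htrip{p}{\cmdc}{q} \imp \htrip{p}{\cmdc'}{q}$. Part (i) is then purely a matter of applying the last fact in both directions, while part (ii) needs one extra step: turning the ``every terminating successor lies in $\neg q$'' information carried by $\htrip{p}{\cmdc'}{\neg q}$ into a concrete witness that breaks $\htrip{p}{\cmdc}{q}$. Everything else is bookkeeping with the set-theoretic definitions of $\effc$ and $\wpc$ in \refeqn{def-wpre} and \refeqn{def-htrip}.

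For part (i), I would first unfold command equivalence into its two constituent refinements: $\cmdc \refeq \undersc{\cmdc'}$ amounts to $\cmdc \refsto \undersc{\cmdc'}$ together with $\undersc{\cmdc'} \refsto \cmdc$. Applying the Hoare-triple clause of \refth{refsto-eff} to the first gives $\htrip{p}{\cmdc}{q} \imp \htrip{p}{\undersc{\cmdc'}}{q}$, and applying the same clause (with the roles of the two commands swapped) to the second gives $\htrip{p}{\undersc{\cmdc'}}{q} \imp \htrip{p}{\cmdc}{q}$. Conjoining the two implications yields the required biconditional, with no reasoning about traces or states beyond what \refth{refsto-eff} already packages.

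For part (ii), I would work directly with the effect relations. From $\cmdc \refsto \cmdc'$ and \refth{refsto-eff} we have $\eff{\cmdc'} \subseteq \effc$. Unfolding $\htrip{p}{\cmdc'}{\neg q}$ via \refeqn{def-htrip} and \refeqn{def-wpre} says precisely that for every $\sigma \in p$ and every $\sigma'$ with $(\sigma,\sigma') \in \eff{\cmdc'}$ we have $\sigma' \notin q$. Taking such a witnessing pair $(\sigma,\sigma')$ --- a terminating behaviour of $\cmdc'$ from a $p$-state --- the inclusion gives $(\sigma,\sigma') \in \effc$ with $\sigma \in p$ and $\sigma' \notin q$, so $\sigma \notin \wpc(q)$. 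Hence $p$ is not contained in $\wpc(q)$, i.e.\ $\neg\htrip{p}{\cmdc}{q}$, as required.

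The main obstacle is the one genuinely delicate point hiding in part (ii): since we reason about partial correctness over finite traces, a Hoare triple is vacuously true over any state from which the command has no terminating behaviour, so the contradiction only materialises once we know that $\cmdc'$ actually reaches a $\neg q$-state from some $p$-state. This holds automatically in the intended use --- $\cmdc'$ is a concrete reordering of $\cmdc$ exhibited precisely because it violates $q$ --- and I would make the feasibility of such a witnessing behaviour (equivalently, that $p$ is satisfiable and $\eff{\cmdc'}$ is defined on some $p$-state) explicit in the statement, since without it the conclusion of (ii) can fail. The remaining steps are immediate from the definitions and \refth{refsto-eff}.
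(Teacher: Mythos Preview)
Your approach matches the paper's: both parts are derived from \refth{refsto-eff} together with the definitions \refeqn{def-wpre} and \refeqn{def-htrip}, and the paper's proof is simply ``Straightforward from definition and \refth{refsto-eff}.''

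Your final paragraph is worth keeping, though, because it identifies a genuine side-condition the paper leaves implicit. As you observe, the hypothesis $\htrip{p}{\cmdc'}{\neg q}$ is a \emph{universal} statement about the successors of $p$-states under $\eff{\cmdc'}$, so it does not by itself supply the witnessing pair $(\sigma,\sigma')$ you need to refute $\htrip{p}{\cmdc}{q}$; if $p$ is empty or $\cmdc'$ has no terminating behaviour from any $p$-state, both $\htrip{p}{\cmdc'}{\neg q}$ and $\htrip{p}{\cmdc}{q}$ hold vacuously and (ii) fails. The paper's uses of (ii) (e.g.\ in \refth{SBtso}) always exhibit a concrete terminating trace of $\cmdc'$ from a satisfiable $p$, so the side-condition is met in practice, but you are right that it deserves to be stated.
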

\begin{proof} 
Straightforward from definition and \refth{refsto-eff}.
\end{proof} 
\refth{reasoning}(i) is simply monotonicity of a Hoare triple, but we make the reduction to a sequential form explicit;
once this has happened standard compositional rules from Hoare logic can be applied
to establish a property of the original program.
\refth{reasoning}(ii) applies when a particular reordering of instructions in $c$ (typically $c'$ will be a sequence of instructions) contradicts some
postcondition, from which we can determine that the original program does not establish that postcondition (for a given $p$).
Hoare logic is used as the basis for reasoning about concurrent programs in the Owicki-Gries method \cite{OwickiGries76},
and so \refth{reasoning}(i) enables the application of standard techniques for concurrent programs.


We now encode some well-known memory models in our framework and show how properties and behaviours can be derived from 
the base we have provided.

\section{Total store order (TSO)}
\labelsect{tso}

The ``total store order'' memory model (used in Intel, AMD and SPARC processors; a history of its development is given in 
\cite{x86-TSO}) maintains program order on stores but is relaxed in the sense that it allows loads to come before stores.
\begin{mmdef}[$\TSOmm$]
\labelmm{tso}
$	\aca \nroT \acb $
~ except, for $x \in \SharedVars, r \in \LocalVars$, ~
\begin{gather}
	x \asgn e \roT r \asgn f
	\ltif{$x \notin \svf$ and $r \notin \fve$}
\end{gather}
\end{mmdef}
TSO allows loads to come before independent stores, and, due to forwarding, for dependent loads to ``bypass''
(recall \refeqns{defn-fve}{defn-sve}).
That is, even though by \refmm{tso} we have $x \asgn 1 \nroT r \asgn x$, due to forwarding we have
$
	\rocmd{r \asgn 1}{x \asgn 1}{r \asgn x}{\TSOmm}
$.
Note that \TSOmm allows independent register operations to also be reordered before stores.
\OMIT{
That is, since
$
	\fwd{x \smallasgn 1}{r \asgn x} = r \asgn 1
$
from \refdefn{forwarding}, we have
$
	x \asgn 1 \cbef r \asgn x
	\refsto
	r \asgn 1 \scomp x \asgn 1
	$
by \reflaw{2actions-swap-order}.
Note that the instruction type changes from a load ($r \asgn x$) to a simple update to a local register ($r \asgn 1$), and hence is not affected by any
earlier stores to $x$.  
}
We define
$\tsofence \sdef \ffence$, which is x86-TSO's primary way to restore order.%
\footnote{
x86-TSO also has store and load fences, which we discuss in the context of later memory models, but
these are effectively no-ops for TSO; however TSO's $\tsolfence$ blocks speculative execution \cite{Intel64ArchManual}.
}

\newcommand{\crod}[1]{\overset{\mathsmaller\curvearrowleft}{#1}}
\newcommand{\crodc}[1]{\crod{\cmdc_{#1}}}
\newcommand{\SB}{\T{SB}}
\newcommand{\MP}{\T{MP}}
\newcommand{\MPannot}{\MP^+}
\newcommand{\MPrc}{\undermm{\MP}{\RCmm}^+}
\newcommand{\SBtso}{\undermm{\SB}{\TSOmm}}
\newcommand{\ctson}[1]{\undermm{c_{#1}}{\TSOmm}}
\newcommand{\sbpost}{\neg (r_1 = r_2 = 0)}

\newcommand\sbactaa{x \asgn 1}
\newcommand\sbactab{r_1 \asgn y}
\newcommand\sbactba{y \asgn 1}
\newcommand\sbactbb{r_2 \asgn x}

The classic behaviours defining \TSOmm as opposed to \SCmm can be summarised by the equations below.
\begin{align}
	x \asgn 1 \ppseqt y \asgn 1 \ ~~~\refeq~~~& \ x \asgn 1 \scomp y \asgn 1
	\qquad \qquad
	\labeleqn{tso-ex1}
	\\
	x \asgn 1 \ppseqt r \asgn x ~\refsto~& \ r \asgn 1 \scomp x \asgn 1
	\labeleqn{tso-ex2}
	\\
	x \asgn 1 \ppseqt r \asgn y ~\refeq~& \ (x \asgn 1 \scomp r \asgn y) \choice (r \asgn y \scomp x \asgn 1)
	\labeleqn{tso-ex3}
\end{align}
Stores are kept in program order by \TSOmm \refeqn{tso-ex1} (an instance of \reflaw{2actions-keep-order}).  
A load of $x$ preceded by a store can use the stored value immediately \refeqn{tso-ex2} (an instance of \reflaw{2actions-swap-order}); 
only later will the store become visible to the rest of the system --
the classic \emph{bypassing} behaviour. 
A load of $y$ preceded by a store of $x$, for $x \neq y$, could be executed in either order \refeqn{tso-ex3}.
Perhaps the simplest system which can observe this behaviour is the classic ``store buffer'' ($\SB$) test
\cite{AdveHill93}.
\begin{equation*}
	\SB \sspace\sdef\sspace 
	x \asgn 1 \scomp r_1 \asgn y 
	\sspace \pl \sspace
	y \asgn 1 \scomp r_2 \asgn x
\end{equation*}
First note that in a sequential system at least one register must read the value 1.
\begin{theorem}
\labelth{htrip-SB}
$
	\htrip{x = y = 0}{\SB}{\sbpost}
$.
\end{theorem}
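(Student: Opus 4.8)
The plan is to reduce the sequentially consistent concurrent program $\SB$ to a finite nondeterministic choice of straight-line interleavings and then discharge the triple on each branch by elementary weakest-precondition reasoning. Every composition inside $\SB$ is either $\scomp$ or $\pl$, so $\SB$ is already of the form $\undersc{\SB}$; by \refth{reasoning}(i) the ordinary (sequential) Hoare-logic machinery applies and no forwarding or coherence subtleties arise.

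First I would expand the parallel composition. Each thread is a two-instruction sequence, so there are exactly six interleavings respecting program order within each thread. Using the interleaving law \reflaw{fix-interleaving} (and its right-hand symmetric form) together with \reflaw{dist-choice-pl}, I would rewrite $\SB$ as the choice $I_1 \choice \cdots \choice I_6$, where each $I_i$ is a four-assignment straight-line program, e.g.\ $I_1 = x \asgn 1 \scomp r_1 \asgn y \scomp y \asgn 1 \scomp r_2 \asgn x$.

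By the finite generalisation of \reflaw{htrip-choice} it then suffices to prove $\htrip{x = y = 0}{I_i}{\sbpost}$ for each $i$. As each $I_i$ is straight-line, \reflaw{htrip-seqc} and the assignment rule \reflaw{htrip-asgn} reduce the obligation to propagating $\sbpost$ backwards through four substitutions and checking the result is implied by $x = y = 0$. The short case analysis is the mathematical heart: whichever thread performs its load \emph{after} the other thread's store reads $1$, and in every interleaving at least one of the two loads is so scheduled, so $\neg(r_1 = r_2 = 0)$ holds on each branch. Equivalently, assuming a terminating trace with $r_1 = r_2 = 0$ forces, via \refeqn{effasgn-def}, both $r_1 \asgn y$ before $y \asgn 1$ and $r_2 \asgn x$ before $x \asgn 1$ in the (total) interleaving order, while program order gives $x \asgn 1$ before $r_1 \asgn y$ and $y \asgn 1$ before $r_2 \asgn x$; chaining these four orderings produces a cycle in a total order, a contradiction.

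The main obstacle is the expansion step rather than the per-branch check. The paper states \reflaw{fix-interleaving} only as a refinement and \reflaw{dist-choice-pl} as an equivalence, whereas reducing to $I_1 \choice \cdots \choice I_6$ requires the full interleaving expansion as an \emph{equivalence}, so that $\Meaning{\SB}$ is exactly the union of the $\Meaning{I_i}$ and \reflaw{htrip-choice} may be used to split the goal. This is the standard expansion law for interleaving parallel and is derivable from the operational rules \refrules{pseqcA}{pseqcB} specialised to $\PARmm$; once it is available, the remaining six four-step backward substitutions are mechanical.
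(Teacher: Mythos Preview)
Your proposal is correct but takes a different route from the paper. The paper's own proof is essentially a citation: it invokes the Owicki-Gries proof given by Lahav and Vafeiadis and notes that this was replicated in Isabelle/HOL. That is, the paper treats $\SB$ compositionally, annotating each atomic action with an assertion and discharging local correctness plus interference-freedom, rather than expanding the parallel composition.

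Your approach instead enumerates the six sequentially-consistent interleavings and checks each by straight-line weakest-precondition calculation (or, equivalently, by the order-cycle contradiction you sketch). This is more elementary and entirely self-contained for a program this small; it avoids the need for auxiliary assertions and interference checks. The trade-off is the usual one: interleaving expansion does not scale, whereas Owicki-Gries does, which is why the paper reaches for the latter as the method it wishes to advertise for reasoning once a program has been reduced to $\undersc{\cdot}$ form. Your observation that the full expansion law for $\pl$ is needed as an \emph{equivalence} (not merely the refinement direction stated in \reflaw{fix-interleaving}) is the right technical caveat, and your remark that it follows from the $\PARmm$-specialised operational rules is accurate.
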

\begin{proof}
Lahav and Vafeaidis \cite{OG-WMMs-Lahav}
provide an Owicki-Gries proof, which we replicated using Isabelle/HOL \cite{OGinIsabelle}.
\end{proof}
However this behaviour is not ruled out under \TSOmm. 
\begin{theorem}
\labelth{SBtso}
$
	\neg \htrip{x = y = 0}{\SBtso}{\sbpost}
$.
\end{theorem}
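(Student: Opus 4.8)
The plan is to invoke \refth{reasoning}(ii). Since $\sbpost$ is $\neg(r_1 = r_2 = 0)$, to prove $\neg \htrip{x = y = 0}{\SBtso}{\sbpost}$ it suffices to exhibit a command $\cmdc'$ with $\SBtso \refsto \cmdc'$ and $\htrip{x = y = 0}{\cmdc'}{r_1 = r_2 = 0}$. The witness I would use is the fully sequential behaviour in which \emph{both loads fire first}, reading the initial zeroes, and only afterwards do the two stores take effect, namely $\cmdc' \sdef r_1 \asgn y \scomp r_2 \asgn x \scomp x \asgn 1 \scomp y \asgn 1$. This is precisely the execution that is impossible under $\SCmm$ (where program order forces each store ahead of its sibling load, producing a cyclic ordering constraint), and hence the behaviour that distinguishes $\TSOmm$ from $\SCmm$ and contradicts the sequential result \refth{htrip-SB}.

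The first step is to show $\SBtso \refsto \cmdc'$, where $\SBtso = (x \asgn 1 \ppseqt r_1 \asgn y) \pl (y \asgn 1 \ppseqt r_2 \asgn x)$. Within each thread the load bypasses the store, an instance of \reflaw{2actions-swap-order} (concretely \refeqn{tso-ex3}), giving $x \asgn 1 \ppseqt r_1 \asgn y \refsto r_1 \asgn y \scomp x \asgn 1$ and symmetrically $y \asgn 1 \ppseqt r_2 \asgn x \refsto r_2 \asgn x \scomp y \asgn 1$. By monotonicity of $\pl$ (a standard congruence, cf. \reflaw{mono-pseqc}) I reduce $\SBtso$ to the parallel composition of the two reordered threads, and I would finish by repeated application of the interleaving law \reflaw{fix-interleaving} (with \reflaw{pseqc-assoc} and commutativity of $\pl$) to select exactly the interleaving that places both loads before both stores, yielding $\cmdc'$.

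The second step, $\htrip{x = y = 0}{\cmdc'}{r_1 = r_2 = 0}$, is a routine weakest-precondition calculation via \reflaw{htrip-asgn} and \reflaw{htrip-seqc}: from $x = y = 0$ the leading loads assign $0$ to $r_1$ and $r_2$, and the trailing stores to $x$ and $y$ do not touch the register values, so $r_1 = r_2 = 0$ holds on termination. Combining the two steps through \refth{reasoning}(ii) delivers the result.

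The main obstacle is the reordering justification in the first step; everything else is bookkeeping. Concretely I must verify that $\TSOmm$ genuinely permits $x \asgn 1 \roT r_1 \asgn y$ by checking the side conditions of \refmm{tso} --- $x \in \SharedVars$, $r_1 \in \LocalVars$, $x \notin \sv{y}$ (which holds as $x \neq y$), and $r_1 \notin \fv{1}$ --- and that forwarding is inert here, i.e. $\fwd{x \asgn 1}{(r_1 \asgn y)} = r_1 \asgn y$ because $x$ does not occur in $y$, so the bypassed load $\acb'$ is unchanged. It is exactly the availability of this reordering, and its unavailability under $\SCmm$, that makes the store-buffer behaviour possible.
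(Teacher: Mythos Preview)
Your proposal is correct and follows essentially the same approach as the paper: both invoke \refth{reasoning}(ii) with the witness $r_1 \asgn y \scomp r_2 \asgn x \scomp x \asgn 1 \scomp y \asgn 1$, obtained by reordering each thread via the \TSOmm\ store/load rule and then fixing an interleaving with \reflaw{fix-interleaving}, and both discharge the final Hoare triple by routine sequential reasoning. The only cosmetic difference is that the paper first expands each thread via the equivalence \refeqn{tso-ex3} into a choice, distributes choice over $\pl$ by \reflaw{dist-choice-pl} to obtain four sequential cases, and then selects the offending one, whereas you go directly via the one-sided refinement \reflaw{2actions-swap-order} and monotonicity of~$\pl$; your route is marginally more economical but the argument is the same.
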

\begin{proof}
Abbreviate
$
	c_1 \sdef \sbactab \scomp \sbactaa
$
and
$
	c_2 \sdef \sbactbb \scomp \sbactba
$, and hence
$
	\SBtso \refeq \ctson1 \pl \ctson2
$.
Also let
$
	\crodc{i} 
$
represent 
$
	c_i 
$
with its instructions reordered.
\\
\begin{tabular}{cll}
&
	\step{
		\SBtso
	\sspace
	= 
	\sspace
		\ctson1
		\pl
		\ctson2
	}
	\\

	\trans{\refeq}
	\step{
		(c_1 \choice \crodc1)
		\pl
		(c_2 \choice \crodc2)
	}
	\explanation{By \refeqn{tso-ex3}}

	\trans{\refeq}
	\step{
		(c_1 \pl c_2 )
		\choice
		(\crodc1 \pl c_2)
		\choice
		(c_1 \pl \crodc2 )
		\choice
		(\crodc1 \pl \crodc2 )
	}
	\explanation{\reflaw{dist-choice-pl}}
\end{tabular}

We have reduced $\SBtso$ to four concurrent, sequential programs representing each possible combination of reorderings.
By \reflaw{htrip-choice} we can complete the proof  by showing any one of the four violates the postcondition;
we already know that the first reordering does establish the pre/post condition by \refth{htrip-SB}, however the other three all violate it, 
as we demonstrate below for the fourth case.

\begin{tabular}{rll}
	\step{
		\crodc1 \pl \crodc2
	}
	\
	\trans{\refeq}
	\step{
		\sbactab \scomp \sbactaa
		\pl
		\sbactbb \scomp \sbactba
	}
	\explanation{Def.}

	\trans{\refsto}
	\step{
	\sbactab 
	\scomp
	\sbactbb 
	\scomp \sbactaa \scomp \sbactba
	}
	\explanation{\reflaw{fix-interleaving}}
\end{tabular}

Hoare logic (\reflaws{htrip-asgn}{htrip-seqc}), gives the following.
\[
	\htrip{x = y = 0}{\sbactab 
    \scomp
    \sbactbb 
    \scomp \sbactaa \scomp \sbactba}
	{r_1 = r_2 = 0}
\]
The proof is completed by \refth{reasoning}(ii) -- a possible reordering and interleaving contradicts the postcondition.
\end{proof}

\OMIT{
\[
	\htrip{p}{\SBtso}{q}
	\iff
	\\ \t1
	\htrip{p}{I_1}{q} \land
	\htrip{p}{I_2}{q} \land
	\htrip{p}{I_3}{q} \land
	\htrip{p}{I_4}{q} 
\]
}

\newcommand{\SBtsofenced}{\SBtso^{+\tsofence}}
\newcommand{\SBfenced}{\SB^{+\tsofence}}

To reinstate sequential behaviour under \TSOmm, fences can be inserted in both branches.
\begin{theorem}
\labelth{sbfenced}
Let $\SBfenced$ be $\SB$ with fences inserted into each branch; then
$
	\htrip{x = y = 0}{
	\SBtsofenced
	}{\sbpost}
$
\end{theorem}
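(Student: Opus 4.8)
The plan is to use the inserted fences to collapse each TSO thread back to program order, so that $\SBtsofenced$ becomes equivalent to the sequentially consistent store-buffer program, and then to recycle the sequential result \refth{htrip-SB}.

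First I would observe that $\tsofence \sdef \ffence$ acts as a full fence under TSO, so \reflaw{fence-to-seqc} applies to each branch after re-associating via \reflaw{pseqc-assoc}: the first thread satisfies $x \asgn 1 \ppseqt \tsofence \ppseqt r_1 \asgn y \refeq x \asgn 1 \scomp \tsofence \scomp r_1 \asgn y$, and symmetrically for the second. Lifting these two equivalences through the surrounding $\pl$ by congruence yields $\SBtsofenced \refeq \SBfenced$, and since $\SBfenced$ is built only from $\scomp$ and $\pl$ (the latter exempt by \refdefn{undermm}) we have $\SBfenced = \undersc{\SBfenced}$. I would then invoke \refth{reasoning}(i) with $\cmdc' = \SBfenced$ to conclude $\htrip{x = y = 0}{\SBtsofenced}{\sbpost} \iff \htrip{x = y = 0}{\undersc{\SBfenced}}{\sbpost}$, reducing the TSO obligation to a purely sequential one.

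Finally I would discharge the sequential obligation from \refth{htrip-SB}. The only syntactic difference between $\SBfenced$ and $\SB$ is the two fence instructions, and by \refeqn{effa-def} a fence has effect $\id$; composing an identity at any position of any trace leaves the trace effect unchanged \refeqn{efft-def}, so $\eff{\SBfenced} = \eff{\SB}$. Consequently the weakest preconditions \refeqn{def-wpre}, and hence the Hoare triples \refeqn{def-htrip}, of the two programs coincide, and \refth{htrip-SB} delivers $\htrip{x = y = 0}{\undersc{\SBfenced}}{\sbpost}$, completing the argument.

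The step I expect to be most delicate is the lifting of the per-branch equivalences through $\pl$. Although congruence is routine for the standard operators, the general monotonicity of \pseqc (\reflaw{mono-pseqc}) demands the stronger reordering-aware refinement $\refstoROm$ on the left operand; here this is benign because, after fence elimination, neither branch admits any reordering past the fence (the TSO relation inherits $\aca \nroG \ffence \nroG \alpha$ from \refmm{G}), so the rewritten branches introduce no new reorderings and the replacement is a genuine equivalence rather than a mere refinement. The remaining bookkeeping — that fences are state-identities and therefore carry the sequential result over unchanged — is then immediate.
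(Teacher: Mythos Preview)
Your proposal is correct and essentially matches the paper's argument: the paper invokes \refmm{tso}, \reflaw{htrip-pseqc-fence}, and the reasoning of \refth{htrip-SB}, while you reach the same endpoint via the underlying command equivalence \reflaw{fence-to-seqc} together with \refth{reasoning}(i) and the observation that fences have identity effect; these are the same idea expressed at the command-equivalence level rather than the Hoare-rule level.

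One small correction on your ``delicate step'': the outer composition is $\pl = \ppseq{\PARmm}$, so the left-argument condition in \reflaw{mono-pseqc} is $\refstoRO{\PARmm}$, not a $\TSOmm$-indexed refinement. Under $\PARmm$ every action reorders with every command and there is no forwarding, so $\PARmm(c) = \id$ for every $c$; the augmented-trace refinement $\refstoRO{\PARmm}$ therefore collapses to ordinary trace refinement $\refsto$, and congruence of $\pl$ in both arguments is automatic. Your remark about the fence blocking reorderings (inherited from $\Gmm$) concerns the \emph{inner} model governing each branch, not the model relevant to the lift through $\pl$; so while your conclusion is right, the justification you give is aimed at the wrong parameter.
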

\begin{proof}
By \refmm{tso}, \reflaw{htrip-pseqc-fence} and the reasoning of \refth{htrip-SB}.
\end{proof}

\OMIT{
By \reflaw{fence-to-seqc} we have
\\
\begin{tabular}{c@{}cl}
&
\step{
	\sbactaa \ppseqt \tsofence \ppseqt \sbactab
	\pl
	\sbactba \ppseqt \tsofence \ppseqt \sbactbb
}

\\

\trans{\refeq}
\step{
	\sbactaa \scomp \tsofence \scomp \sbactab
	\pl
	\sbactba \scomp \tsofence \scomp \sbactbb
}

\end{tabular}
\end{proof}
}

Note that reasoning is relatively direct in this framework: we can use properties of the model and the structure of the program to reduce reasoning to
sequential cases where established techniques can be applied (\refth{sbfenced}), or a partcular case that violates a desired property can be enumerated
(\refth{SBtso}).  Other reasoning frameworks typically monitor reorderings with respect to global abstract (graph) data structures, requiring custom assertion
languages and judgements.


\subsection{Equivalence to an explicit store buffer model}
\labelsect{tso-buf}

\newcommand{\llbrace}{\{\!|\!\!}
\newcommand{\rrbrace}{\!\!|\!\}}

\newcommand{\sbufword}{\mathbf{buf}}
\newcommand{\sbf}{\T{s}}
\newcommand{\sbcat}{\cat}
\newcommand{\sbuf}[2]{(\sbufword~#1 @ #2)}
\newcommand{\sbufs}[1]{\sbuf{\sbf}{#1}}
\newcommand{\sbufsc}{\sbufs{c}}
\newcommand{\sbufscp}{\sbufs{c'}}

\newcommand{\ebuf}{\eseq}

\newcommand{\pstorexv}{\storexv}

\renewcommand{\load}[2]{\guard{#1 = #2}}
\renewcommand{\load}[2]{#1 \asgn #2}
\renewcommand{\load}[2]{#1 \asgn #2}
\renewcommand{\loadx}[1]{\load{#1}{x}}
\newcommand{\loadxv}{\loadx{r}}
\newcommand{\loadfwdrv}{\load{r}{v}}

\newcommand{\sload}[2]{#1 \smallasgn #2}
\newcommand{\sloadx}[1]{\sload{#1}{x}}
\newcommand{\sloadxv}{\sloadx{r}}

\newcommand{\getval}[2]{fetch(#1)~from~#2}
\renewcommand{\getval}[2]{#2\llbrace#1\rrbrace}
\newcommand{\getvalx}[1]{\getval{x}{#1}}
\newcommand{\getvalxs}{\getvalx{\sbf}}

One of the best-known formalisations of a weak memory model is the operational model of x86-TSO \cite{x86-TSO,x86-TSO-TPHOLS}.  In that model the code is
executed sequentially, but interacts with
a store buffer that temporarily holds stores before sending them to the storage system, allowing loads that occur in the meantime to use values found in the buffer.  
Below we give an extension of \rowsl to add an explicit
store buffer 
$\sbf$, written $\sbufsc$,
following the semantics in \cite{x86-TSO-TPHOLS}.%
\footnote{
We give a per-process buffer, 
whereas \cite{x86-TSO-TPHOLS} uses a single global buffer, with each write in the buffer tagged by the originating process's id. 
}
\begin{align}
\labelrule{sbuf-store}
		c \ttra{x \smallasgn v} c'
	\imp&~~
		\sbufsc 
		\tra{\tau} 
		\sbuf{\sbf \cat \storexv}{c'}
	\\
\labelrule{sbuf-flush}
&~~
		\sbuf{\pstorexv \cat \sbf}{c}
		\ttra{x \smallasgn v} 
		\sbuf{\sbf}{c}
\\
\labelrule{sbuf-fence}
		c \ttra{\tsofence} c'
	\imp&~~
		\sbuf{\ebuf}{c}
		\ttra{\tsofence}
		\sbuf{\ebuf}{c'}
	\\
\labelrule{sbuf-silent}
		c \tra{\tau} c'
	\imp&~~
		\sbuf{\sbf}{c}
		\tra{\tau}
		\sbuf{\sbf}{c'}
	\\
\omit\rlap{$
		c \ttra{\sloadxv} c'
		\land
		\sbf ~=~ \sbf_1 \scat \pstorexv \scat \sbf_2
		\land
		x \notin \sbf_2
		\imp
$}
\notag
\\
	&~~
		\sbufsc \ttra{r \asgnsmall v} \sbufscp
\labelrule{sbuf-load-bypass}
\\
\omit\rlap{$
		c \ttra{\sloadxv} c'
		\land
		x \notin \sbf
		\imp
$}
\notag
\\
	&~~
		\sbufsc \ttra{\sloadxv} \sbufscp
\labelrule{sbuf-load}
\end{align}


A store is a variable/value pair, $(x,v)$,
and $x \notin \sbf$ means there is no store to $x$ in $\sbf$.
In all rules $c$ is executed using the sequential fragment of the semantics only, and we assume $x \in \SharedVars$ and $r \in \LocalVars$.
If $c$ issues 
a store, it is placed at the end of the store buffer (the system sees only a silent ($\tau$) step) \refrule{sbuf-store}.  
The first store in the buffer can be flushed to the system at any time \refrule{sbuf-flush}.
A fence can only proceed when the buffer is empty \refrule{sbuf-fence}, while
internal steps of $c$ can proceed independently of the state of the buffer \refrule{sbuf-silent}.
The interesting rules are for loads: if $c$ issues a load $\loadxv$ then this can be serviced by the buffer using the most recent value for $x$ (say $v$)
resulting in a step $r \asgn v$, and no interaction with the global system \refrule{sbuf-load-bypass}. 
If $c$ issues a load of $x$ that is not in the buffer then the load
is promoted to the system level \refrule{sbuf-load}.

\OMIT{
In a TSO system processes are structured as 
$\sbuf{\sbf}{\localssc}$ where $\sigma$ captures all local variables in $c$, that is, the buffer sits 
between the registers and the
global storage.  Hence we do not need to consider the interaction of the buffer and local variables (thus we restrict attention to stores, loads, fences and silent
steps).
}



\begin{theorem}
\labelth{sc-buf-pline} 
For any command $\cmdc$, issuing only assembler-level instructions
(stores, loads, fences and register-only operations), 
$
	\sbuf{\ebuf}{\cmdc} \refeq \plinem{\eseq}{\csc}
	$
\end{theorem}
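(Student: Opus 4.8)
The plan is to exhibit a weak bisimulation between store-buffer configurations $\sbuf{\sbf}{c'}$ and pipeline configurations $\plinem{\plt}{c'}$ (the pipeline running the residual program sequentially, as in $\csc$), and then appeal to the fact that $\refeq$ is equality of visible trace sets. Since silent $\tau$-steps are discarded, a \emph{weak} bisimulation --- matching visible steps up to $\tau$-closure --- suffices to establish the equivalence in both directions. Because $\TSOmm$ is well-behaved, \refth{pl=pseqc} already identifies $\plinem{\eseq}{\csc}$ with the program operator $\ctso$, so the same argument doubles as a proof that the explicit buffer realises the program-level $\TSOmm$ semantics.

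First I would pin down the reachable shape of the pipeline under $\TSOmm$. By \refmm{tso} the only permitted reordering is a load or register operation committed ahead of an earlier \emph{store}; no instruction may be committed ahead of an earlier load, and (since a fence matches neither side of the sole $\TSOmm$ exception) a fence blocks all reordering. Consequently the instructions that persist uncommitted in $\plt$ are exactly stores, committed in program order, i.e.\ FIFO. This invariant is the bridge to the buffer: I would define the relation $\mathcal{R}$ so that $\sbuf{\sbf}{c'} \mathrel{\mathcal{R}} \plinem{\plt}{c'}$ holds for \emph{canonical} pipeline states in which the not-yet-executed code $c'$ agrees and the buffer $\sbf$ is exactly the list of uncommitted stores in $\plt$, with their values already forwarded. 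Pipeline fetch-ahead (fetching past an uncommitted instruction) is silent, so it need not be tracked in $\mathcal{R}$ and is absorbed by $\tau$-closure.

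The case analysis then pairs the rules. \refrule{sbuf-store} matches a \refrule{pline-fetch} of the store, which lingers in $\plt$, mirroring the append to $\sbf$. \refrule{sbuf-flush} matches a \refrule{pline-commit} of the oldest pending store, the FIFO flush coinciding with program-order commit of stores under $\TSOmm$. \refrule{sbuf-fence} matches a commit of the fence, which under $\TSOmm$ fires only once no earlier store remains, i.e.\ exactly when the buffer is empty. \refrule{sbuf-silent} matches internal steps by $\tau$-closure. The two load rules are handled by a \refrule{pline-fetch} of the load followed immediately by a \refrule{pline-commit} reordering it past the buffered stores in $\plt_1$: \refrule{sbuf-load-bypass} when some buffered store targets $x$, and \refrule{sbuf-load} when none does.

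The hard part will be the forwarding correspondence for loads, where the lifting $\mm(\aca \cat \plt) = \mm(\aca) \comp \mm(\plt)$ and \refdefn{forwarding} must be reconciled with the buffer's ``most recent value'' semantics. I would isolate this as a sublemma: for a prefix $\plt_1$ of stores in program order and a load $r \asgn x$, the lifted relation $\TSOmm(\plt_1)$ relates $r \asgn x$ to $r \asgn v$ where $v$ is the value written by the \emph{last} store to $x$ in $\plt_1$, and leaves the load unchanged when $\plt_1$ writes no value to $x$. This is proved by induction on $\plt_1$, chasing the composition order in \refeqn{mm-def} and checking that $\fwd{\cdot}{\cdot}$ acts as the identity substitution for a non-$x$ store while the most recent $x$-store overwrites earlier ones --- exactly the behaviour of \refrules{sbuf-load-bypass}{sbuf-load}. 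The remaining obstacles are bookkeeping: confirming that the FIFO discipline of \refrule{sbuf-flush} coincides with commit order on stores, and that fetch-ahead contributes no visible behaviour beyond the canonical states. With the sublemma in hand, $\mathcal{R}$ relates the initial configurations $\sbuf{\ebuf}{\cmdc}$ and $\plinem{\eseq}{\csc}$ (empty buffer, empty pipeline) and is a weak bisimulation, yielding trace equivalence and hence $\refeq$.
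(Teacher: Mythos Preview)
Your proposal is correct and matches the paper's approach: the paper's proof is precisely your forwarding sublemma, stating that $\rocmd{r \asgn v}{t}{r \asgn x}{\TSOmm}$ holds iff $t$ consists only of stores and has the form $t_1 \cat x \asgn v \cat t_2$ with $x \notin \wv{t_2}$, and then observes that the remaining buffer rules reduce similarly. You have supplied the explicit weak-bisimulation scaffolding and the rule-by-rule pairing that the paper leaves implicit, but the key technical content---the invariant that only stores persist in the pipeline under $\TSOmm$ and the characterisation of forwarding over a store prefix as ``take the most recent value''---is identical.
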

\begin{proof}
The main difference between the store buffer semantics and a pipeline is that only stores may be fetched,
while the rules for loads combine fetching and committing in one step.  
All of the preconditions for the store buffer rules reduce to 
an equivalent form of the reordering relation over sequences of stores.
For instance, for \refrule{sbuf-load-bypass} to apply for a load $r \asgn x$, the most recently buffered (fetched) store
$(x,v)$ is used, and the promoted label is $r \asgn v$.
This is exactly the condition for a load to reorder with the equivalent trace via \refrule{pline-commit}, that is,
$
	\rocmd{r \asgn v}{t}{r \asgn x}{\TSOmm}
	$
iff trace (pipeline) $t$ is formed from stores only and is
of the form $t_1 \cat x \asgn v \cat t_2$, with $x \notin \wv{t_2}$, which follows from \refmm{tso} and lifting \TSOmm to traces as in
\refsect{pipeline}.
The other (simpler) cases similarly reduce.
\end{proof}

\begin{theorem}
\labelth{sc-buf-tso} 
For any command $\cmdc$, issuing only assembler-level instructions,
$
	\sbuf{\ebuf}{\cmdc} \refeq \ctso
	$
\end{theorem}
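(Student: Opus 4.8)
The plan is to derive this as an immediate corollary of the two preceding results, composing them through the pipeline model and appealing to transitivity of $\refeq$. The point is that both sides of the desired equivalence have already been tied to a single common object --- the \TSOmm-instantiated pipeline $\plinem{\eseq}{\csc}$ --- so no fresh operational reasoning about buffers or about \pseqc is needed here.

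First I would invoke \refth{sc-buf-pline}, whose side condition (``$\cmdc$ issuing only assembler-level instructions'') is exactly the hypothesis assumed in the present statement; this gives $\sbuf{\ebuf}{\cmdc} \refeq \plinem{\eseq}{\csc}$, with the pipeline instantiated at $\TSOmm$. Next I would specialise \refth{pl=pseqc} to $\mm = \TSOmm$. That theorem is stated for any well-behaved model, and $\TSOmm$ is well-behaved (\refdefn{well-behaved}; every model we consider is well-behaved except \PARmm), so it applies and yields $\ctso \refeq \plinem{\eseq}{\csc}$, recalling that $\ctso = \cunder{\TSOmm}$, $\epl = \eseq$, and $\cundersc = \csc$. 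Chaining the two equivalences and using symmetry and transitivity of $\refeq$ then delivers $\sbuf{\ebuf}{\cmdc} \refeq \ctso$.

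There is no real obstacle to overcome: the substantive work is already discharged in the two cited theorems --- the delicate matching of the buffer rules \refrule{sbuf-flush} and \refrule{sbuf-load-bypass} against the reordering relation lifted to traces lives inside the proof of \refth{sc-buf-pline}, and the trace induction relating the pipeline to \pseqc lives inside the proof of \refth{pl=pseqc}. If I had to flag a point of care, it would be purely bookkeeping: confirming that the two occurrences of $\plinem{\eseq}{\csc}$ denote the identical pipeline (same memory model $\TSOmm$, same code base $\csc$, same empty initial pipeline $\eseq$), and confirming well-behavedness of \TSOmm so that \refth{pl=pseqc} is genuinely applicable. Given those, the theorem is essentially a one-line consequence.
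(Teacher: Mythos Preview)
Your proposal is correct and matches the paper's own proof exactly: the paper's argument is simply ``By \refth{sc-buf-pline} and \refth{pl=pseqc}'', which is precisely the chaining through the \TSOmm-pipeline that you describe. Your additional remark about checking well-behavedness of \TSOmm so that \refth{pl=pseqc} applies is a point the paper leaves implicit, but it is the right thing to verify.
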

\begin{proof}
By \refth{sc-buf-pline} and \refth{pl=pseqc}.
\end{proof}
That is,
the semantics of \pseqc instantiated with reordering (and forwarding) given by \TSOmm gives precisely those behaviours
obtained by sequential execution with an (initially empty) store buffer.

\OMIT{
The rest of this section outlines the proof of this theorem with respect to the semantics of a store buffer;
the proof is completed in Isabelle/HOL \cite{IsabelleHOL}.
Firstly
we note that retrieving the value of a variable from a buffer $\sbf$ can be recast in terms of reordering, \ie,
we define reordering and forwarding of actions over a buffer $\sbf$
as follows, where $x \neq y$:
\begin{gather}
		\robuf{\acb}{\ebuf}{\acb} 
		\qquad \qquad
		\\
		\robuf{\acb''}{\sbf}{\acb'}
		\land
		\robuf{\acb'}{(x,v)}{\acb}
		\sspace \entails \sspace
		\robuf{\acb''}{s \scat (x,v)}{\acb}
		\labeleqn{buf-ind}
		\\
		\robuf{\loadfwdrv}{\pstorexv}{\loadxv}  
		\qquad
	\robuf{\loadxv}{\store{y}{w}}{\loadxv}	
		\labeleqn{buf-load}
\end{gather}
\refEqn{buf-ind} 
inductively defines reordering over a buffer.
\refEqn{buf-load} covers bypassing and promotion of unserviced loads, respectively.
We can then derive the following two equivalences, which cover the preconditions of the corresponding rules for loads.
\begin{gather}
	\robuf{\tau}{\sbf}{\loadxv} 
	~\iff~
	\sbf = (\sbf_1 \scat \storexv \scat \sbf_2) \land x \notin \sbf_2
	\\
	\robuf{\loadxv}{\sbf}{\loadxv} 
	~\iff~
	x \notin \sbf
\end{gather}
We further extend \refeqn{buf-load} so that $(x,v) \nroM{} \tsofence$ and $(x,v) \nroM{} y \asgn u$ 
to generalise Rules~\ref{rule:sbuf-store}-\ref{rule:sbuf-load}.
\begin{equation*}
		c \tra{\aca} c'
		\land
		\robuf{\aca'}{\sbf}{\aca}
	\imp
		\sbufs{c}
		\tra{\aca'}
		\sbufs{c}
\end{equation*}
We also note 
we may
interpret a buffer $\sbf$ as a command in \rowsl, with the straightforward semantics of a sequence of stores, \ie,
$
	\pstorexv \scat \sbf \ttra{x \smallasgn v} \sbf 
$.
Note that there is no reordering \emph{within} $\sbf$, but that other commands may be reordered before it, where we define
$\rocmdm{\acb'}{\sbf}{\acb}$ as in \refeqn{buf-ind}.
A key property is that, given the reordering rules, an intermediate store is interchangeable with the abstract buffer.
\begin{equation}
\labeleqn{tso-buf-buffer}
	\sbf \ppseqt (x \asgn v \ppseqt c) \refeq (\sbf \cat (x , v)) \ppseqt c
\end{equation}
Another key property is that the execution of a buffer command to completion (possibly leaving some writes in a non-empty buffer),
can be composed inductively.
\begin{lemma}
\labellemma{tso-buf-split}
$
	\sbufs{c_1 \ppseqs c_2} \xtra{t} \sbuf{\sbf''}{\Nil}
$
holds iff
there exists $\sbf', t_1,$ $t_2$ where $t = t_1\scat t_2$ and
$
		\sbufs{c_1} \xtra{t_1} \sbuf{\sbf'}{\Nil} 
		\quad and \quad
		\sbuf{\sbf'}{c_2} \xtra{t_2} \sbuf{\sbf''}{\Nil}
$
\end{lemma}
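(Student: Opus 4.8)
The plan is to prove both directions by exploiting the fact that, because $\ppseqs$ is $\ppseq{\SCmm}$ and the store-buffer rules drive the code component using only the sequential fragment of the semantics (so rule \refrule{pseqcB} never fires), the code of $c_1 \ppseqs c_2$ is forced to execute strictly left-to-right: only \refrule{pseqcA} (acting inside $c_1$) and the silent step \refrule{pseqcC} (the transition $\Nil \ppseqs c_2 \tra{\tau} c_2$) apply at the top level until $c_1$ has reduced to $\Nil$. The central observation is that the buffer-level transition relation touches the code wrapper and the buffer $\sbf$ independently: every rule except \refrule{sbuf-flush} carries a code-level premise, whereas \refrule{sbuf-flush} is purely buffer-local, and all side conditions (empty buffer for \refrule{sbuf-fence}, the presence or absence of a store to $x$ for \refrule{sbuf-load-bypass} and \refrule{sbuf-load}) depend only on $\sbf$. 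I will therefore establish a single-step wrapper lemma stating that, for $c_1 \neq \Nil$, one may freely add or remove the wrapper `$\cdot \ppseqs c_2$' around the code component while leaving $\sbf$ and the step label untouched, and then split the trace at the unique point where $c_1$ finishes.

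For the forward direction, given $\sbufs{c_1 \ppseqs c_2} \xtra{t} \sbuf{\sbf''}{\Nil}$, I first note that since the final command is $\Nil$ the code must pass through the form $\Nil \ppseqs c_2$ and take the \refrule{pseqcC} step; lifted to the buffer level this is a silent \refrule{sbuf-silent} step $\sbuf{\sbf'}{\Nil \ppseqs c_2} \tra{\tau} \sbuf{\sbf'}{c_2}$ for the buffer $\sbf'$ reached at that moment (unchanged across the step). This is the \emph{switch point}. Every buffer-level step strictly before it has code component of the form $c_1^{(i)} \ppseqs c_2$ with $c_1^{(i)} \neq \Nil$, and is either a flush or a buffer rule whose code premise is a \refrule{pseqcA} step inside $c_1$; stripping the wrapper via the single-step lemma gives the corresponding step of $\sbufs{c_1}$ with the same buffer and label, so the prefix yields $\sbufs{c_1} \xtra{t_1} \sbuf{\sbf'}{\Nil}$. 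The suffix after the switch already has bare code component $c_2^{(j)}$ and is therefore directly an execution $\sbuf{\sbf'}{c_2} \xtra{t_2} \sbuf{\sbf''}{\Nil}$. Because the switch step is silent and visible labels accumulate in order, $t = t_1 \cat t_2$.

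Conversely, given the two executions I wrap the first: each step of $\sbufs{c_1} \xtra{t_1} \sbuf{\sbf'}{\Nil}$ is replayed on $\sbufs{c_1 \ppseqs c_2}$ by re-inserting the wrapper (a $c_1$ code step becomes a \refrule{pseqcA} step under the same buffer rule, and flushes are copied verbatim), reaching $\sbuf{\sbf'}{\Nil \ppseqs c_2}$; one \refrule{sbuf-silent}/\refrule{pseqcC} step moves to $\sbuf{\sbf'}{c_2}$; and the second execution is appended unchanged, giving $\sbufs{c_1 \ppseqs c_2} \xtra{t_1 \cat t_2} \sbuf{\sbf''}{\Nil}$. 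The main obstacle is verifying that stripping and wrapping genuinely preserve every rule application, in particular the empty-buffer requirement of \refrule{sbuf-fence} and the buffer-content tests of the two load rules; these transfer precisely because the rewriting never alters $\sbf$ and the $c_1$-code step is literally the premise of \refrule{pseqcA}. The only genuine reasoning is thus the single-step wrapper lemma, lifted to traces by induction on their length (equivalently, on the number of buffer transitions) with the switch point identified as above.
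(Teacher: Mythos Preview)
The paper does not actually give a proof of this lemma: it appears (inside an \verb|\OMIT| block) only as a stated auxiliary result on the way to \refth{sc-buf-tso-gen}, with the surrounding text noting that ``the proof is completed in Isabelle/HOL''. So there is no paper proof to compare against; your proposal has to stand on its own.

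Your approach is sound and is the natural one: the store-buffer rules drive the code using only the sequential fragment, so for $\ppseqs$ the top-level code transitions are exactly \refrule{pseqcA} (while $c_1$ is running) followed by a single \refrule{pseqcC}, and the ``wrap/strip the trailing $\ppseqs c_2$'' correspondence goes through step by step because none of the buffer side conditions (empty buffer for \refrule{sbuf-fence}, the $x \in \sbf$ / $x \notin \sbf$ tests for the two load rules) mention the code.

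One small imprecision is worth tightening. You claim that every buffer-level step strictly before the switch has code component $c_1^{(i)} \ppseqs c_2$ with $c_1^{(i)} \neq \Nil$. That is not quite true: after $c_1$ has reduced to $\Nil$ but \emph{before} the \refrule{pseqcC} step is taken, the system may perform one or more \refrule{sbuf-flush} steps, and those have code component $\Nil \ppseqs c_2$. Your single-step wrapper lemma as stated (for $c_1 \neq \Nil$) does not cover them. The fix is immediate --- flushes ignore the code entirely, so stripping the wrapper on those steps gives valid flushes from $\sbuf{\cdot}{\Nil}$ --- but you should either (i) extend the wrapper lemma to also cover the $c_1 = \Nil$ case for flush steps, or (ii) assign any such flushes to $t_1$ by observing they commute past the silent \refrule{pseqcC} step. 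Either way the induction goes through unchanged.
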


We can now state the main equivalence theorem. 
\begin{theorem}
\labelth{sc-buf-tso-gen}
Provided $\cmdc$ is encapsulated by a local state mapping all local variables referenced in $\cmdc$, and $\cmdc$
contains actions corresponding to atomic TSO assembler actions, \ie, 
stores,
loads,
fences, or
register-only operations (including guards), then
$
	\sbuf{\sbf}{\csc} \refeq (\sbf \ppseq{\TSOmm} \ctso)
$
\end{theorem}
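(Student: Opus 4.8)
The plan is to read the buffer $\sbf$ as a command: a sequence of pending stores $\sbf = (x_1,v_1)\cat\cdots\cat(x_k,v_k)$ is identified with the sequential composition $x_1 \asgn v_1 \scomp \cdots \scomp x_k \asgn v_k$, which drains its stores strictly in FIFO order. Both $\sbuf{\sbf}{\csc}$ and $\sbf \ppseqt \ctso$ then denote sets of traces, so it suffices to show $\Meaning{\sbuf{\sbf}{\csc}} = \Meaning{\sbf \ppseqt \ctso}$, establishing each trace inclusion by matching a step of one side with a step of the other. The empty-buffer instance $\sbf = \ebuf$ is exactly \refth{sc-buf-tso}; the additional content here is to thread a nonempty buffer through the induction, and the hypothesis that $\cmdc$ is encapsulated by a local state is used to guarantee that only shared stores ever enter the buffer and that the four admissible action categories exhaust the cases.

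The heart of the argument is to match each buffer transition with a reordering step of $\TSOmm$ lifted to store sequences (\refsect{pipeline}). First I would characterise when a load reorders over a store sequence: under $\TSOmm$ a load $r \asgn x$ can be brought forward past $\sbf$ iff either the most recent store to $x$ is $(x,v)$, that is $\sbf = \sbf_1 \cat (x,v) \cat \sbf_2$ with $x \notin \sbf_2$ --- in which case accumulated forwarding (\refdefn{forwarding}) rewrites the load to $r \asgn v$, after which it no longer mentions $x$ and earlier stores have no further effect --- or there is no store to $x$ in $\sbf$, in which case the load passes through unchanged. These are precisely the side conditions of \refrule{sbuf-load-bypass} and \refrule{sbuf-load}. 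Dually, \refmm{tso} forbids any two stores from reordering and forbids reordering past a fence, so the FIFO flush of \refrule{sbuf-flush} matches committing stores in program order, and the empty-buffer guard of \refrule{sbuf-fence} matches a fence blocking until the store prefix has drained.

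Next I would establish the absorption property that lets a store issued by the code migrate into the buffer, $\sbf \ppseqt (x \asgn v \ppseqt c) \refeq (\sbf \cat (x,v)) \ppseqt c$: since a store cannot be brought forward past the prior stores of $\sbf$, appending it to the buffer removes no behaviour, matching \refrule{sbuf-store}. Using absorption together with the rule-level correspondence I would then prove a decomposition lemma: any terminating execution $\sbuf{\sbf}{c_1 \ppseqs c_2} \xtra{t} \sbuf{\sbf''}{\Nil}$ factors through some residual buffer $\sbf'$, as $\sbuf{\sbf}{c_1} \xtra{t_1} \sbuf{\sbf'}{\Nil}$ followed by $\sbuf{\sbf'}{c_2} \xtra{t_2} \sbuf{\sbf''}{\Nil}$ with $t = t_1 \cat t_2$. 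This decomposition lines the buffer recursion up with associativity of \pseqc (\reflaw{pseqc-assoc}) and its step rules \refrule{pseqcA} and \refrule{pseqcB}, and it drives the induction.

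The equivalence then follows by induction on the trace, case-splitting on the first step of each side and discharging it via the correspondence above, absorption, and the decomposition lemma. I expect the main obstacle to be the structural mismatch between the two semantics: the buffer machine never stores a load and services it in a single combined step, whereas the reordering semantics separates fetch from commit and, read as a pipeline, could in principle retain a load in $\plt$. The reconciliation rests on the observation that under $\TSOmm$ nothing may be brought forward past a load --- $r \asgn x \nroT \acb$ for every action $\acb$ --- so every committed instruction crosses a prefix consisting solely of not-yet-flushed stores; loads are therefore never retained as reorderable prefix elements, and the only genuinely out-of-order commits are loads bypassing earlier stores, exactly \refrule{sbuf-load-bypass}. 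Establishing this collapse of fetch-and-commit for loads, and checking that it is stable under the accumulated forwarding captured by the sequence-lifting of $\TSOmm$, is the crux; the remaining silent-step, fence, and register-operation cases are routine.
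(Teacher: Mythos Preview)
Your proposal is correct and follows essentially the same route as the paper's (omitted) proof outline: interpreting the buffer $\sbf$ as a command, characterising load reordering over a store sequence so that the two cases coincide exactly with the side conditions of \refrule{sbuf-load-bypass} and \refrule{sbuf-load}, establishing the absorption law $\sbf \ppseqt (x \asgn v \ppseqt c) \refeq (\sbf \cat (x,v)) \ppseqt c$, and proving the sequential decomposition lemma that lets the induction on traces go through. The only minor wobble is your phrasing of the ``structural mismatch'' in terms of the pipeline's $\plt$ --- the theorem here is about $\ppseqt$ rather than the explicit pipeline --- but your substantive observation that under $\TSOmm$ nothing reorders past a load (so only stores ever accumulate as a reorderable prefix) is exactly the right justification and is what the paper relies on as well.
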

}

\section{Release consistency}
\labelsect{rc-model}

\newcommand{\rcunspecmm}{\Gmm}
\newcommand{\pseqRA}{\ppseq{\RCmm}}

The \emph{release consistency} memory model \cite{ReleaseConsistency90} has been highly influential, having been implemented in the Dash processor \cite{DASH},
guided the development of the C language memory model
\cite{BoehmAdveC++Concurrency}, and the concepts incorporated into ARM \cite{ARMv8.4} and RISC-V \cite{RISC-V-ISAManual2017}.  The key concept revolves around \emph{release writes} and
\emph{acquire loads}: a release write is stereotypically used to set a flag to indicate a block of computation has ended,
and and an acquire load is correspondingly used to observe a release write.  Code before the release should happen before, and code after the acquire should
happen after; 
conceptually these are weaker (one-way) fences.
Release consistency's motivation was finding an easy-to-implement mechanism for interprocess
communication that is feasible and inexpensive computationally, and relatively straightforward for programmers.

We extend the action syntax of \impro to include 
\emph{ordering constraints} ($\oc$) as annotations 
to any action, though as noted above release store and acquire load are the most commonly used.
\begin{gather}
	\oc ~~\ttdef~~ \release \csep \acquire
	\qquad
	\aca ~~\ttdef~~ \ldots \csep \moda{\oc} 
	\labeleqn{ra-syntax}
	\\
	\fwd{\aca}{(\modb{\oc})} = \modAct{(\fwd{\aca}{\acb})}{\oc}
	\qquad
	\fwd{(\moda{\oc})}{\acb} = \fwd{\aca}{\acb}
	\labeleqn{RC:fwd}
	\OMIT{
	\\
	\relfence 
	\sdef \modR{\tau}
	\qquad
	\acqfence 
	\sdef \modA{\tau}
	\labeleqn{rc-defined-actions}
	}
\end{gather}
Forwarding for the new annotated actions is defined inductively so that the base actions take effect and ignore the annotations \refeqn{RC:fwd};
and we define $\eff{\moda{\oc}} = \eff{\aca}$.

Following \cite{ReleaseConsistency90} we distinguish two models, $\RCmm$ (where $pc$ stands for ``processor consistency'') and $\RCSCmm$
(where $sc$ stands for ``sequential consistency''), the latter of which is a strengthening of the former;
an alternative would be to distinguish $pc/sc$ in the annotations themselves, allowing mixing of the two types in one model 
(cf. ARM's \T{ldar}/\T{ldapr} instructions).
For simplicity we assume that
$\Gmm$ (\refmm{G}) controls reordering outside of annotation considerations, although in the theory of \cite{ReleaseConsistency90}
stronger constraints are possible.
\begin{mmdef}[$\RCmm$]
\labelmm{RCmm}
	$\aca \roRC \acb 
	\ltiff{
	$\aca \roM{\Gmm} \acb$
	}
	$
	except
\begin{eqnarray}
	&
	\aca \nroRC 
	\modR{\acb}
	\roRC \acc
	&
		\ltiff{$\acb \roM{\RCmm} \acc$}
	\labeleqn{RC:rel}
	\\
	&
	\aca \roRC 
	\modA{\acb}
	\nroRC \acc
	&
		\ltiff{$\aca \roM{\RCmm} \acb$}
	\labeleqn{RC:acq}
	\OMIT{
	\\
	&
	\aca 
	\roRC \acb 
	&
	\ltiff{
	$\aca \roM{\rcunspecmm} \acb$
	}
	\labeleqn{RC:default}
	}
\end{eqnarray}
\end{mmdef}
\begin{mmdef}[$\RCSCmm$]
\labelmm{RCSCmm}
	$\aca \roRCSC \acb 
	\ltiff{
	$\aca \roM{\RCmm} \acb$
	}
	$
	except
	$
	\modR{\aca} 
	\nroRCSC 
	\modA{\acb}
	$.
\OMIT{
\begin{eqnarray}
	&
	\modR{\aca} 
	\nroRCSC 
	\modA{\acb}
	&
	\labeleqn{RCSC:rel-acq}
	\\
	\quad \sspace
	&
	\aca \roRCSC \acb 
	&
	\quad
	\ltiff{
	$\aca \roM{\RCmm} \acb$
	}
	\labeleqn{RCSC:default}
\end{eqnarray}
}
\end{mmdef}
\RCmm straightforwardly follows the intuition of \cite{ReleaseConsistency90}, where a release action $\modR{\acb}$ is always blocked from reordering
and hence all earlier instructions must be complete before it can execute, but it does not block later instructions from happening early 
\refeqn{RC:rel} (provided
$\acb$ does not on its own block later instructions, calculated by recursively applying the reordering relation).
An acquire action is the converse \refeqn{RC:acq}.
\RCSCmm strengthens \RCmm by additionally requiring order between release and acquire actions in the one thread
(the reverse direction is already implied).
Consider the behaviour of the classic ``message passing'' pattern ($\MP$).
\begin{equation}
	\labeleqn{def-MP}
	\MP \sspace \sdef \sspace
	x \asgn 1 \scomp y \asgn 1
	\sspace\pl\sspace
	r_1 \asgn y \scomp r_2 \asgn x
\end{equation}
\begin{theorem}
\labelth{MPsc}
$
	\htrip{x = y = 0}{\MP}{r_1 = 1 \imp r_2 = 1}
$
\end{theorem}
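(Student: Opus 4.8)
The plan is to exploit the fact that $\MP$ is already a sequentially consistent program: within each thread the composition is `$\scomp$' (that is, $\ppseqs$, instantiated with $\SCmm$), and the two threads are joined by `$\pl$' ($\PARmm$), so no thread-local reordering is available. Consequently $\MP \refeq \undersc{\MP}$ trivially, and by \refth{reasoning}(i) the Hoare triple reduces to one over an ordinary interleaved concurrent program. I would then discharge it with the Owicki-Gries method \cite{OwickiGries76}, exactly as \refth{htrip-SB} was established for the store buffer, so that the proof can ultimately be replicated in Isabelle/HOL \cite{OGinIsabelle}.

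First I would record the decisive structural observation. The left thread $x \asgn 1 \scomp y \asgn 1$ is the only writer of the shared variables $x$ and $y$, while the right thread $r_1 \asgn y \scomp r_2 \asgn x$ merely reads them into the local registers $r_1, r_2$ and writes nothing shared. This asymmetry means every assertion of the left thread (which mentions only $x, y$) is automatically stable under the right thread, so the whole interference-freedom burden falls on checking the right thread's assertions against the two writes of the left thread.

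The core of the proof is the annotated outline for the right thread, built from \reflaws{htrip-asgn}{htrip-seqc}. After the read of the flag $r_1 \asgn y$ I would assert the cross-thread predicate $r_1 = 1 \imp x = 1$; this follows from the invariant $y = 1 \imp x = 1$, which holds initially (since $x = y = 0$, implied by the global precondition) and is maintained by the left thread precisely because $x \asgn 1$ is sequenced before $y \asgn 1$. Reading $r_2 \asgn x$ then yields the thread postcondition $r_1 = 1 \imp r_2 = 1$, and the Owicki-Gries parallel rule conjoins the two thread postconditions, giving the required postcondition of $\MP$.

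The step I expect to be the real work is interference freedom for the assertion $r_1 = 1 \imp x = 1$. Its stability hinges on the fact that $x$ is \emph{monotone} under the left thread -- once set to $1$ it is never reset -- so the write $x \asgn 1$ only makes the consequent true, while the subsequent write $y \asgn 1$ (whose local precondition already guarantees $x = 1$) leaves $x$ untouched. Everything else -- local correctness of each thread via \reflaws{htrip-asgn}{htrip-seqc}, stability of the left thread's assertions under the (read-only) right thread, and stability of the final postcondition, which names only the local registers -- is routine.
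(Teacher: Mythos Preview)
Your proposal is correct and follows the same approach as the paper, which simply says ``Straightforward by Owicki-Gries reasoning: the stores are executed in the order $x$, $y$, and read in reverse order, hence if the latter is observed the former must have taken effect.'' You have helpfully spelled out the annotation (in particular the key interference-free invariant $y = 1 \imp x = 1$ and its consequence $r_1 = 1 \imp x = 1$) that the paper leaves implicit.
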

\begin{proof}
Straightforward by Owicki-Gries reasoning: the stores are executed in the order $x$, $y$, and read in reverse order, hence if the latter 
is observed the former
must have taken effect.
\end{proof}
Consider using the weaker $\RCmm$ model with annotations.
\begin{equation*}
	\MPannot \sspace \sdef \sspace
	x \asgn 1 \pseqRA \modR{(y \asgn 1)}
	\sspace\pl\sspace
	\modA{(r_1 \asgn y)} \pseqRA r_2 \asgn x
\end{equation*}
Here the release annotation on the write to $y$ means that $y$
acts as a flag that $x$ has been written, and so if the other process sees the modification to $y$
via an acquire it must also see the write to $x$.
\begin{theorem}
\labelth{MPrc}
$
	\htrip{x = y = 0}{\MPannot}{r_1 = 1 \imp r_2 = 1}
$
\end{theorem}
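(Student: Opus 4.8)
The plan is to argue that the release and acquire annotations force both threads of $\MPannot$ to run in program order, so that no reordering is available and $\MPannot$ coincides with its sequential form $\undersc{\MPannot}$; the triple can then be discharged by the same Owicki--Gries argument as in \refth{MPsc} and lifted back through \refth{reasoning}(i).

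First I would dispatch the left thread $x \asgn 1 \pseqRA \modR{(y \asgn 1)}$. By the first part of \refeqn{RC:rel} a release can never be brought forward, that is, $\aca \nroRC \modR{\acb}$ for every $\aca$; in particular $x \asgn 1 \nroRC \modR{(y \asgn 1)}$, so \reflaw{2actions-keep-order} gives $x \asgn 1 \pseqRA \modR{(y \asgn 1)} \refeq x \asgn 1 \scomp \modR{(y \asgn 1)}$. Symmetrically, for the right thread $\modA{(r_1 \asgn y)} \pseqRA r_2 \asgn x$ the first part of \refeqn{RC:acq} states that nothing may be reordered before an acquire, $\modA{\acb} \nroRC \acc$ for every $\acc$, whence $\modA{(r_1 \asgn y)} \nroRC r_2 \asgn x$ and \reflaw{2actions-keep-order} collapses this thread to $\modA{(r_1 \asgn y)} \scomp r_2 \asgn x$. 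As neither thread reorders internally, congruence of $\pl$ then yields $\MPannot \refeq \undersc{\MPannot}$.

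It remains to prove the triple for $\undersc{\MPannot} = (x \asgn 1 \scomp \modR{(y \asgn 1)}) \pl (\modA{(r_1 \asgn y)} \scomp r_2 \asgn x)$. Because an annotation leaves the effect of its action unchanged, $\eff{\moda{\oc}} = \eff{\aca}$, the weakest preconditions --- and hence every Owicki--Gries assertion --- are exactly those of the unannotated program, so the proof of \refth{MPsc} transfers without change: the stores commit in the order $x$ then $y$ and are read in the reverse order, so observing $y = 1$ (that is, $r_1 = 1$) forces the earlier write $x = 1$ to have taken effect, giving $r_2 = 1$. Applying \refth{reasoning}(i) to $\MPannot \refeq \undersc{\MPannot}$ lifts this triple back to $\MPannot$, completing the proof.

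I expect the reordering analysis to be routine, since both blocking facts are immediate instances of the RC exception clauses and forwarding cannot reinstate either reordering (by \refeqn{RC:fwd} forwarding preserves the release/acquire tag, so the unconditional blocks still fire, and in any case the substitutions are trivial as $x \notin \fv{1}$ and $r_1 \notin \fv{x}$). The one point that genuinely needs care is the congruence step for $\pl$: the monotonicity result \reflaw{mono-pseqc} demands the stronger augmented-trace refinement on its left operand, but because $\PARmm$ permits all reorderings its recorded reordering sets are vacuous, so ordinary $\refeq$ already implies that stronger relation and the two thread-local equivalences compose to an equivalence of the whole program.
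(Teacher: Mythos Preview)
Your proposal is correct and follows essentially the same approach as the paper: both use \refeqns{RC:rel}{RC:acq} with \reflaw{2actions-keep-order} to collapse each thread to sequential form, then invoke the Owicki--Gries argument of \refth{MPsc} after noting that annotations do not affect $\effword$. Your account is in fact more careful than the paper's, explicitly checking that forwarding cannot reinstate the blocked reorderings and justifying the congruence step for $\pl$ via the triviality of $\PARmm$'s reordering sets. One minor slip: what you call ``the first part of \refeqn{RC:acq}'' is actually the right-hand relation $\modA{\acb} \nroRC \acc$ in that display; the substance is unaffected.
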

\begin{proof} Using the definition of $\MPannot$,
~\\
\begin{tabular}{ccl}
&
	\step{
			x \asgn 1 \pseqRA \modR{(y \asgn 1)}
    		\sspace\pl\sspace
    		\modA{(r_1 \asgn y)} \pseqRA r_2 \asgn x
	}
	\\

	\trans{\refeq}
	\step{
			x \asgn 1 \scomp \modR{(y \asgn 1)}
    		\sspace\pl\sspace
    		\modA{(r_1 \asgn y)} \scomp r_2 \asgn x
	}

\end{tabular}

The equality holds by applying
\reflaw{2actions-keep-order} in each process from \refeqns{RC:rel}{RC:acq}.
Now the proof follows using the same reasoning as \refth{MPsc} (annotations have no effect on sequential semantics, only reorderings).
\end{proof}
Note that without the annotations the instructions in each process could be reordered according to $\refmm{G}$, under which conditions it is 
straightforward to find a
behaviour that contradicts $r_1 = 1 \imp r_2 = 1$.

\OMIT{
The application of \reflaw{2actions-keep-order} in each process gives the following equivalence.
\begin{equation*}
	\MP \sspace \refeq \sspace
	x \asgn 1 \scomp \modR{(y \asgn 1)}
	\sspace \pl \sspace
	\modA{(r_1 \asgn y)} \scomp r_2 \asgn x
\end{equation*}
$\MP$ has been reduced to a sequential composition ($\MP = \undersc{\MP}$), as is the intention of the release consistency model.
It is straightforward to show
$
	\htrip{x = y = 0}{\undersc{\MP}}{r_1 = 1 \imp r_2 = 1}
$
using standard methods for reasoning about concurrent sequential programs, for instance, the Owicki-Gries method \cite{OwickiGries76}
(for $\MP$ direct enumeration of the traces and their effects also gives this result).
}

\section{ARM version 8}
\labelsect{armv8}

\newcommand{\armdsb}{\T{dsb}\xspace}
\newcommand{\armdsbst}{\T{dsb.st}\xspace}
\newcommand{\armisb}{\T{isb}\xspace}


In this section we consider the latest version of ARM v8, which is simpler than earlier versions due to it being ``multi-copy atomic'' \cite{ARMv8.4}.
ARM's instruction set has artificial barriers including a ``control fence'' $\armisb \sdef \cfence$, a write barrier $\armdsbst \sdef \wwfence$, and
a full fence $\armdsb \sdef \ffence$.
\begin{mmdef}[$\ARMmm$]
\labelmm{ARMmm}
	$\aca \roA \acb 
	\ltif{$\aca \roRCSC \acb$
	}
	$
	except
\begin{eqnarray}
	&
	\aca \nroA 
	\armdsbst 
	\nroA \aca
	&
		\ltif{$\isStorea$}
	\labeleqn{A:a<wwf}
	\labeleqn{A:wwf<a}
	\\
	&
	\guardb \nroA 
	\armisb
	\nroA \aca
	&
		\ltif{$\isLoada$}
	\labeleqn{A:g<cf}
	\labeleqn{A:cf<l}
	\labeleqn{A:cf<u}
	\\
	&
	\guardb 
	\nroA
	\aca
	&
		\ltif{$\isStorea$}
	\labeleqn{A:g<u}
\end{eqnarray}
\end{mmdef}
Store fences maintain order between stores \refeqn{A:a<wwf}
(recall \refeqn{isStore}), while
control fences are blocked by branches and correspondingly block loads \refeqn{A:g<cf} (recall \refeqn{isLoad});
when taken in conjunction a control fence enforces order between loads within and before a branch, preventing
the observable effects of speculative execution.
Branches block stores, including independent stores \refeqn{A:g<u};
this is a practical consideration to do with speculating down
branches: one cannot commit stores until it is known that the branch will be taken.
Other than these exceptions, 
$\ARMmm$ behaves as $\RCSCmm$ for release/acquire annotations,%
\footnote{As mentioned in \refsect{rc-model}, ARM's \T{LDAPR} explicitly weakens the ordering between release/acquire instructions,
which can be handled by distinguishing annotations syntactically rather than within the memory model definition.
}
fundamentally behaving as $\Gmm$ (\refmm{G}).

\newcommand{\MPw}{\MP_w}
\newcommand{\MPr}{\MP_r}
\newcommand{\MPrisb}{\MP_\armisb}
\newcommand{\MPif}{\MP_{if}}

As an example of the weak nature of ARM, \ie, issuing loads before the branch condition for the load is evaluated, consider the following 
behaviour of a variant of 
the reader process of $\MP$
\refeqn{def-MP}, where the second load is guarded.
Define 
$\MPw \sdef x \asgn 1 \scomp y \asgn 1$ and 
$\MPr \sdef r_1 \asgn y \ppseqA (\If r_1 = 1 \Then r_2 \asgn x)$,
where for brevity we leave the $\ARMmm$ parameter implicit on conditionals.
\newcommand{\Mark}[1]{\underline{#1}}
\begin{theorem}
$
	\neg \htrip{x = y = 0}{\MPw \pl \MPr}{r_1 = 1 \imp r_2 = 1}
$
\end{theorem}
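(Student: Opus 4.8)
The plan is to prove a \emph{negative} Hoare triple, so I would apply \refth{reasoning}(ii): it suffices to exhibit a single reordered, interleaved, sequential program $c'$ with $\MPw \pl \MPr \refsto c'$ and $\htrip{x = y = 0}{c'}{r_1 = 1 \land r_2 = 0}$, since $r_1 = 1 \land r_2 = 0$ implies $\neg(r_1 = 1 \imp r_2 = 1)$ and the consequence rule then weakens the postcondition to $\neg q$. The behaviour I would target is the defining weakness of ARM: the load $r_2 \asgn x$ sitting in the then-branch is committed \emph{speculatively}, before the branch guard is evaluated and before the earlier load $r_1 \asgn y$, so that $r_2$ reads the initial value $0$ while $r_1$ later reads the flag $1$.

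First I would expand the conditional in $\MPr$ using \refeqn{defn-if}, giving $\MPr \refeq r_1 \asgn y \ppseqA ((\guard{r_1 = 1} \ppseqA r_2 \asgn x) \choice (\guard{r_1 \neq 1} \ppseqA \Skip))$. I then resolve the choice to its left (then) branch with \reflaw{chooseL}, lifted through the right operand of \pseqc by monotonicity (\reflaw{mono-pseqc}), and reassociate by \reflaw{pseqc-assoc}, reducing the reader to $(r_1 \asgn y \ppseqA \guard{r_1 = 1}) \ppseqA r_2 \asgn x$.

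The crux is to move $r_2 \asgn x$ to the front. I would verify the two ARM reordering facts $r_1 \asgn y \roA r_2 \asgn x$ and $\guard{r_1 = 1} \roA r_2 \asgn x$. Both hold because, absent a control fence $\armisb$, \refmm{ARMmm} falls back to $\Gmm$ for these pairs: exception \refeqn{A:g<cf} is not triggered (no $\armisb$ is present), and \refeqn{A:g<u} is not triggered ($r_2 \asgn x$ is a load, not a store). The remaining $\Gmm$ conditions hold since $x, y, r_1, r_2$ are pairwise distinct with disjoint shared reads ($x \neq y$), and forwarding leaves the load unchanged because neither preceding instruction writes $x$. Using the command-level lifting of reordering (\refeqn{roc-pseqc}) together with \reflaw{2actions-swap-order} and \reflaw{keep-order} then yields $\MPr \refsto r_2 \asgn x \scomp r_1 \asgn y \scomp \guard{r_1 = 1}$.

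Finally I would compose with the writer and fix one interleaving via \reflaw{fix-interleaving} (exactly as in the proof of \refth{SBtso}), taking $c' = r_2 \asgn x \scomp x \asgn 1 \scomp y \asgn 1 \scomp r_1 \asgn y \scomp \guard{r_1 = 1}$. A routine Hoare-logic calculation (\reflaw{htrip-asgn}, \reflaw{htrip-seqc}) gives $\htrip{x = y = 0}{c'}{r_1 = 1 \land r_2 = 0}$: the speculative load reads $x = 0$ into $r_2$, the writer then sets $x$ and $y$, and $r_1$ reads $y = 1$, so the guard $\guard{r_1 = 1}$ is feasible. \refth{reasoning}(ii) then delivers the result. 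I expect the main obstacle to be the third paragraph: rigorously justifying that the load crosses the branch guard under ARM --- confirming that \emph{no} ARM exception blocks it, only a control fence would --- and checking that resolving the choice to the true branch is consistent with committing $r_2 \asgn x$ before the guard is evaluated. The interleaving and Hoare-logic steps are then routine.
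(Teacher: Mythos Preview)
Your proposal is correct and follows essentially the same approach as the paper: expand the conditional via \refeqn{defn-if} and \reflaw{chooseL}, reorder the load $r_2 \asgn x$ first past the guard and then past $r_1 \asgn y$ using \reflaw{2actions-swap-order} (the paper cites \refmm{G} as justification, which amounts to your observation that no $\ARMmm$ exception fires), interleave so the speculative load sees the initial $x$, and conclude by \refth{reasoning}(ii). Your version is simply more explicit about why the ARM exceptions \refeqn{A:g<cf} and \refeqn{A:g<u} do not block the reordering and about the final Hoare calculation, which the paper leaves as ``straightforward''.
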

\begin{proof}
Consider the following behaviour of $\MPr$.
\\
\label{load-speculation}
\begin{minipage}[t]{0.47\textwidth}
\begin{tabular}{rl@{}l}
$\MPr \sdef$
&
	\step{
		r_1 \asgn y \ppseqA (\If r_1 = 1 \Then \Mark{r_2 \asgn x})
	}
	\\

	\trans{\refsto}
	\step{
		r_1 \asgn y \ppseqA \guard{r_1 = 1} \ppseqA \Mark{r_2 \asgn x} 
	}
	\explanation{\refeqn{defn-if}, \reflaw{chooseL}}

	\trans{\refsto}
	\step{
		r_1 \asgn y \ppseqA \Mark{r_2 \asgn x} \scomp \guard{r_1 = 1} 
	}
	\explanation{\reflaw{2actions-swap-order} by \refmm{G}}

	\trans{\refsto}
	\step{
		\Mark{r_2 \asgn x} \scomp r_1 \asgn y \scomp \guard{r_1 = 1} 
	}
	\explanation{\reflaw{2actions-swap-order} by \refmm{G}}
\end{tabular}
\end{minipage}
\\
The load of $x$ (underlined) may be reordered before the branch point, and subsequently before the load of $y$.  Even with the stores to $x$ and $y$ being
strictly ordered in $\MPw$ we can interleave this ordering so that the postcondition is invalidated, and complete the proof
by \refth{reasoning}(ii).
\end{proof}
Hence under \ARMmm conditionals do not guarantee sequential order.
Placing an $\armisb$ instruction inside the branch, before the second load, however, prevents this behaviour.
Define
$\MPrisb \sdef
		r_1 \asgn y \ppseqA (\If r_1 = 1 \Then \armisb \ppseqA \Mark{r_2 \asgn x})
$.
\begin{theorem}
$
	\htrip{x = y = 0}{\MPw \pl \MPrisb}{r_1 = 1 \imp r_2 = 1}
$
\end{theorem}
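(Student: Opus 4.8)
The plan is to reduce the whole system to a sequentially consistent program and appeal to \refth{reasoning}(i), exactly as in the proof of \refth{MPrc}. Since the writer $\MPw$ is built from $\scomp$ it already coincides with $\undersc{\MPw}$, so the only work is to show that the control fence forces the reader into sequential form, $\MPrisb \refeq \undersc{\MPrisb}$, after which congruence of $\pl$ gives $\MPw \pl \MPrisb \refeq \undersc{(\MPw \pl \MPrisb)}$ and the goal becomes the sequentially consistent triple.

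To establish $\MPrisb \refeq \undersc{\MPrisb}$ I would first expand the conditional with \refeqn{defn-if}, exposing the then-branch $\guard{r_1 = 1} \ppseqA \armisb \ppseqA r_2 \asgn x$ and the else-branch guarded by $\neg(r_1 = 1)$. The core is a contiguous chain of non-reorderings in the then-branch: $r_1 \asgn y \nroA \guard{r_1 = 1}$ by \refmm{G0} (the guard reads $r_1$, which the load writes), $\guard{r_1 = 1} \nroA \armisb$ by \refeqn{A:g<cf}, and $\armisb \nroA r_2 \asgn x$ by \refeqn{A:cf<l} since $\isLoad{r_2 \asgn x}$. Each link forbids the instruction on its right from advancing past the one on its left, so no instruction of the branch body can be brought to the front of the then-branch; combined with the rule that reordering over a choice is the intersection over both branches \refeqn{roc-choice}, nothing escapes either $\ppseqA$, and every composition behaves as $\scomp$.

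With the reader sequential, \refth{reasoning}(i) reduces the goal to $\htrip{x = y = 0}{\undersc{(\MPw \pl \MPrisb)}}{r_1 = 1 \imp r_2 = 1}$, which I would discharge by the Owicki--Gries argument of \refth{MPsc}. The control fence and the guard contribute the identity on states, so they are inert for predicate reasoning; in the then-branch the reader reads $y$ before $x$ while the writer stores $x$ before $y$, giving $r_1 = 1 \imp r_2 = 1$, and the else-branch makes the postcondition vacuous because $r_1 \neq 1$ there.

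The step I expect to be the main obstacle is arguing that the single fence link genuinely firewalls the load across the branch point rather than merely forbidding an adjacent swap. The preceding theorem showed that, without $\armisb$, the load $r_2 \asgn x$ can be hoisted past both the guard and the load of $y$; since the command-level lifting \refeqn{roc-pseqc}, \refeqn{roc-choice} only composes pairwise constraints, I must verify that $\armisb \nroA r_2 \asgn x$ already blocks $r_2 \asgn x$ from being committed ahead of any earlier instruction, so that the fence link is the indispensable one and the data-dependency and guard--fence links merely seal off the remaining slack.
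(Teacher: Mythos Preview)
Your proposal is correct and follows the paper's approach: use the non-reordering chain $r_1 \asgn y \nroA \guard{r_1=1} \nroA \armisb \nroA r_2 \asgn x$ to collapse the reader to sequential form, then discharge the triple by Owicki--Gries. The paper only displays the then-branch (via \reflaw{chooseL}) and leaves the trivial else-branch implicit, whereas you handle both; your appeal to \refeqn{roc-choice} is slightly misplaced---the choice sits on the \emph{right} of the outer $\ppseqA$, so what matters is which first steps it can emit, not what can reorder over it---but this is harmless since those first steps are exactly the two guards, both blocked by the data dependency on $r_1$.
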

\begin{proof}
Consider the following behaviour of $\MPrisb$.
\\
\begin{minipage}[t]{0.999999999\columnwidth}
\begin{tabular}{cll}
&
\multicolumn{2}{l}{
	\step{
		r_1 \asgn y \ppseqA (\If r_1 = 1 \Then \armisb \ppseqA \Mark{r_2 \asgn x})
	}
}
	\\

	\trans{\refsto}
	\step{
		r_1 \asgn y \ppseqA \guard{r_1 = 1} \ppseqA \armisb \ppseqA \Mark{r_2 \asgn x} 
	}
	\explanation{\refeqn{defn-if}; \reflaw{chooseL}}

	\trans{\refeq}
	\step{
		r_1 \asgn y \scomp \guard{r_1 = 1} \scomp \armisb \scomp \Mark{r_2 \asgn x} 
	}
	\explanation{\reflaw{2actions-keep-order} by \refeqn{A:cf<l}}
\end{tabular}
\end{minipage}
\\
The loads are strictly ordered and so the proof is completed straightforwardly using OG reasoning.
\end{proof}

\OMIT{
\newcommand{\armstoreGen}[2]{#1 \asgn #2}
\newcommand{\armstore}{\armstoreGen{x}{g}}
\newcommand{\armregUGen}[2]{#1 \asgn #2}
\newcommand{\armregU}{\armregUGen{r}{e}}

\newcommand{\guardbf}{\guard{b_1}}
\newcommand{\tstore}{x \asgn e}
\newcommand{\tload}{r_1 \asgn x}
\newcommand{\treg}{r_1 \asgn e_1}
\newcommand{\tacq}{\modA{\aca}}
\newcommand{\trel}{\modR{\aca}}

\newcommand{\guardbp}{\guard{b_2}}
\newcommand{\tstorey}{y \asgn f}
\newcommand{\tloady}{r_2 \asgn y}
\newcommand{\tregp}{r_2 \asgn e_2}
\newcommand{\tacqb}{\modA{\acb}}
\newcommand{\trelb}{\modR{\acb}}

\newcommand{\noro}{\red{\cross}}
\newcommand{\ysro}{\green{\checkmark}}

\newcommand{\xnqy}{\ysro}
\newcommand{\roxb}{\ro \acb}
\newcommand{\roaa}{\aca\ro }
\newcommand{\roab}{\aca\ro\acb }

\newcommand{\fw}{^{\mbox{\guillemetleft}}}

\newcommand{\fvX}[1]{#1}
\newcommand{\rpnb}{r_2 \notin \fvX{b_1}}
\newcommand{\rnbp}{r_1 \notin \fvX{b_2}}
\newcommand{\rpne}{r_2 \notin \fvX{e}}
\newcommand{\rpnf}{r_2 \notin \fvX{e_1} \fw}
\newcommand{\rnep}{r_1 \notin \fvX{e_2}}
\newcommand{\rnfp}{r_1 \notin \fvX{f}}

\newcommand{\mybe}{*}
\newcommand{\unkn}{?}

\newcommand{\hasG}{{\mathit{ g}}}

We instantiate \refmm{ARMmm} for specific instruction types that appear in ARM assembler in \reftable{arm-table},
following the table style found in the literature, e.g., \cite{PrimerMemoryConsistency11,MM=RO+At,ArMOR}.
These are specialisations for the sort of instructions
found in assembler, \ie, stores of values, loads of shared variables into registers, and general register
operations, along with artificial barriers and release/acquire annotations.
\begin{table}[t]
\[
\hspace{5mm}
\begin{array}{l|ccc|cccc|cc}
{}_\aca \hfill {}^\acb 
          & \armdsb & \armdsbst &\armisb  &\guardbp &\tstorey &\tloady &\tregp &\tacqb &\trelb \\
\hline
\armdsb   & \noro   & \noro     &\noro    &\noro    &\noro    &\noro   &\noro  &\noro  &\noro  \\
\armdsbst & \noro   &\noro      &\noro    &\ysro    &\noro    &\ysro   &\ysro  &\roxb  &\noro  \\
\armisb   & \noro   &\noro      &\noro    &\ysro    &\ysro    &\noro   &\ysro  &\roxb  &\noro  \\
\hline
\guardbf  & \noro   &\ysro      &\noro    &\ysro    &\noro    &\rpnb   &\rpnb  &\roxb  &\noro  \\
\tstore   & \noro   &\noro      &\ysro    &\ysro    &\xnqy    &\rpne   &\rpne  &\roxb  &\noro  \\
\tload    & \noro   &\ysro      &\ysro    &\rnbp    &\rnfp    &\ysro   &\rnep  &\roxb  &\noro  \\
\treg     & \noro   &\ysro      &\ysro    &\ysro\fw &\ysro\fw &\rpnf   &\rpnf  &\roxb  &\noro  \\
\hline
\tacq     & \noro   &\noro      &\noro    &\noro    &\noro    &\noro   &\noro  &\noro  &\noro  \\
\trel     & \noro   &\roaa      &\roaa    &\roaa    &\roaa    &\roaa   &\roaa  &\noro  &\noro
\end{array}
\]
\mbox{\footnotesize
$\fw$ Forwarding may affect the reordered operation if $r_1$ appears in $\acb$.
}
\caption{
Conditions under which $\aca \protect\roA \acb$, for $\aca$ in the column and $\acb$ in the row.
Assume 
	$x,y \in \SharedVars$,  
	$r_1,r_2 \in \LocalVars$,  
	$\sve = \svf = \sv{b_1} = \sv{b_2} = \ess$,  
	$x \neq y$, $r_1 \neq r_2$
	(in cases where $x = y$ and $r_1 = r_2$ no reordering is allowed).
	To save space we use $r \protect\notin \fvX{e}$ to abbreviate $r \protect\notin \fv{e}$.
	For annotated instruction $\moda{\oc}$ we use ``$\aca \ro$'' (resp. $\ro \acb$) to indicate an inductive application of the table.
}
\labeltable{arm-table}
\end{table}
}

\subsubsection*{Conformance.}
We validate our model using litmus tests \citep{LitmusTests,UnderstandingPOWER,Mador-Haim2010,AutoSynthLitmus,SynthMMFromLitmus,TutorialARMandPOWER}.  
ARM has released an official axiomatic model using the \T{herd} tool
\cite{HerdingCats} available online via the herdtools7 application \cite{HerdARMv8} (see \cite{ARMv8A-Manual}, Sect. B2.3).
Using the \T{diy7} tool and the official model \cite{AlglaveBlog} we generated a set of 
99,881 litmus tests covering forbidden behaviours of up to 4 processes using the
instruction types covered in \refmm{ARMmm}.
In addition we used a further 5757 litmus tests covering allowed and forbidden behaviours using the tests for an earlier version of ARM \cite{HerdingCats}
and a set covering more recent features \cite{MarangetFlatResults}.
We ran these tests using the model checking tool based on the pipeline semantics in \refsect{pipeline}.
In each case (approximately 105,000 tests) \refmm{ARMmm} agreed with the published model.

\OMIT{
Fences prevent all reorderings as with TSO \refeqns{a<f}{f<a}, while a store-only barriers $\wwfence$ (corresponding to 
ARM's \T{DMB.ST} and \T{DSB.ST} instructions) maintains order on stores but not on other instruction types \refeqns{a<wwf}{wwf<a}. 
A control fence $\cfence$ prevents speculative loads when placed between a guard and a load \refeqns{g<cf}{cf<l}.  
Guards may be reordered with other guards provided they do not both access the same shared variables \refeqn{g<g} (otherwise local coherence would be violated), 
but stores to shared variables may not come before a guard evaluation \refeqn{g<u}.
This prevents speculative execution from modifying the global state, in the event that the speculation was down the wrong branch.
An update of a local variable may be reordered before a guard provided it does not affect the guard expression and respects local coherence \refeqn{g<r}.
Guards may be reordered before updates if those updates do not affect the guard expression and local coherence is respected \refeqn{u<g}.
(Note that in ARM assembler the $\loadDistinct{e}{b}$ constraints for guards are always satisfied as guards (branch points) do not reference globals.)
Assignments may be reordered as shown in \refeqn{u<u} and discussed in \refsect{overview-reordering}.
}

\subsection{An axiomatic specification}
\labelsect{arm:axiomatic}

\newcommand{\axid}[1]{[#1]}
\newcommand{\defax}[1]{\T{#1}\xspace}
\newcommand{\axpo}{\defax{po}}
\newcommand{\axpoloc}{\defax{po\text{-}loc}}
\newcommand{\axrfe}{\defax{rfe}}
\newcommand{\axfre}{\defax{fre}}
\newcommand{\axrf}{\defax{rf}}
\newcommand{\axfr}{\defax{fr}}
\newcommand{\axrfi}{\defax{rfi}}
\newcommand{\axW}{\defax{W}}
\newcommand{\axR}{\defax{R}}
\newcommand{\axacq}{\defax{\axid{A}}}
\newcommand{\axrel}{\defax{\axid{L}}}
\newcommand{\axco}{\defax{co}}
\newcommand{\axdata}{\defax{data}}
\newcommand{\axidW}{\axid{\axW}\xspace}
\newcommand{\axidR}{\axid{\axR}\xspace}
\newcommand{\axctrl}{\T{ctrl}\xspace}

\newcommand{\axisb}{\axid{\armisb}}
\newcommand{\axdsb}{\axid{\armdsb}}
\newcommand{\axdsbst}{\axid{\armdsbst}}

\newcommand{\axcomp}{;\!\!}
\newcommand{\axchoice}{~|~}

\newcommand{\axARM}{\defax{ARM}}
\newcommand{\axRC}{\defax{RC}}
\newcommand{\axob}{\defax{ob}}

\newcommand{\axkw}[1]{\texttt{\textbf{#1}}}
\newcommand{\axlet}{\axkw{let}~\xspace}
\newcommand{\axrec}{\axkw{rec}~\xspace}

\newcommand{\axirref}{\axkw{irreflexive}~\xspace}
\newcommand{\axacyc}{\axkw{acyclic}~\xspace}
\newcommand{\axas}{~\axkw{as}~\xspace}
\newcommand{\axexternal}{\axkw{external}~\xspace}
\newcommand{\axinternal}{\axkw{internal}~\xspace}

\newcommand{\labelax}[1]{\label{hcats-ax:#1}}

\newcommand{\cfrefeqn}[1]{}
\newcommand{\cfrefeqns}[2]{}
\newcommand{\cfrefeqnsc}[3]{}

\newcommand{\axmodelname}{$\defax{ax}_{ro}$\xspace}

Perhaps the best known way of describing memory models is via axioms over global traces.  
Below we give an axiomatic model using
a straightforward translation of the reordering relationship \refmm{ARMmm}.
We refer to this model as \axmodelname.
\begin{gather*}
\begin{array}{r@{~}l}
\axlet \axARM = &
	\axidW \axcomp \axpo \axcomp \axdsbst \axcomp \axpo \axcomp \axidW
    \axchoice 
	\axctrl  \axcomp \axisb
    \axchoice 
	\\ & 
    \axisb \axcomp \axpo \axcomp \axidR
    \axchoice 
	\axctrl \axcomp \axidW
	\axchoice
    \axpo \axcomp \axdsb \axcomp \axpo
\\
\axlet \axRC = &
    \axacq \axcomp \axpo
    \axchoice 
	\axpo \axcomp \axrel
    \axchoice 
    \axrel \axcomp \axpo \axcomp \axacq
\\
\axlet \axrec \axob = &
    \axARM
    \axchoice \axRC
    \axchoice \axdata \axchoice \axdata  \axcomp \axrfi
	\axchoice
	\\ & 
    \axrfe \axchoice \axfre
    \axchoice \axco
    \axchoice \axob \axcomp \axob
\\
\multicolumn{2}{l}{
\axacyc \axpoloc \axchoice \axfr \axchoice \axco \axchoice \axrf \axas \axinternal
}
\\
\also
\multicolumn{2}{l}{
\axirref \axob \axas \axexternal
}
\end{array}
\end{gather*}
An axiomatic specification is formed from relations over event traces, typically with acyclic or irreflexive constraints on the defined relations.
Union is represented by `|' and relational composition by `;'.
The \axARM relation is essentially a straight translation from \refmm{ARMmm}, along with the fence constraint from \refmm{G}.
For instance, the \axpo relation relates instructions in textual program order, and \armdsb is the set of \armdsb instructions in the program, with the
square brackets denoting
the identity relation on that set.
Hence the constraint
    ``$\axpo \axcomp \axdsb \axcomp \axpo$''
(the last in \axARM)
states a requirement that instructions before and after a fence must appear in that order in any trace.
The remaining constraints in \axARM are similarly translated, noting
\axW and \axR are the store (write) and load (read) instructions, and \axctrl relates instructions before and after a branch point.
The \axRC relation captures release/acquire constraints (\axrel/\axacq) from \refmm{RCmm}/\refmm{RCSCmm}.
The \axob relation (observation) is recursively defined to include data dependencies, including forwarding (via $\axrfi$, ``reads-from internal''), corresponding to \refmm{G0}, and the
``reads-from'' and ``from-reads'' relations, relating loads to corresponding and earlier stores, and the global coherence order $(\axco)$ on stores.
Note that these relations arise directly from our small step operational semantics.
The definition of $\axob$ and the $\axacyc$ and $\axirref$ constraints, which govern internal (local) and external (global) views of the system,
are based on the pattern of
\cite{HerdARMv8,ARMv8.4}
-- see \cite{HerdingCats} for more details on axiomatic specifications.

The \axmodelname model (available in the supplementary material) agrees with the official model \cite{HerdARMv8} on all 100,000+ litmus tests
using \T{herd7}.
Following the lead of \cite{ARMv8.4} we give a by-hand proof that the traces of the axiomatic model in \refsect{arm:axiomatic} are the same as the traces obtained by
application of the operational semantics.
\begin{theorem}
\labelth{ax=pl}
The traces of a program $c$ allowed by \axmodelname are exactly the traces of $\plinemepc$.
\end{theorem}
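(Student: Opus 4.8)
The plan is to prove the two trace inclusions separately, using the \emph{commit order} of a pipeline run as the pivot between the operational and axiomatic worlds. A run of $\plinemepc$ fetches the instructions of $c$ in program order and commits them (possibly after forwarding) in some order; recording the committed labels yields a trace $t$ together with a total order $<_t$ on the events, namely their commit order. The central observation, which I would establish first as a lemma, is that $<_t$ is a linear extension of the locally-ordered fragment of $\axob$: whenever $\aca$ precedes $\acb$ in program order and $\aca \nroA \acb$, rule \refrule{pline-commit} forbids committing $\acb$ while $\aca$ is still in the pipeline, so $\aca <_t \acb$; and each such non-reorderable pair is exactly an edge contributed by $\axARM$ together with the \refmm{G} fence constraint, or by $\axRC$ (the release/acquire clauses of \refmm{RCmm}/\refmm{RCSCmm}). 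This correspondence is a clause-by-clause reading of \refmm{ARMmm} against the definitions of $\axARM$ and $\axRC$.

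For soundness (every trace of $\plinemepc$ is allowed by \axmodelname) I would take a pipeline trace $t$ and exhibit $<_t$ as the witness that the axiomatic constraints hold. The coherence order $\axco$ is read off as the order of same-location stores in $t$, each load is tied to the most recent preceding store to its location by $\axrf$ (with $\axfr$ derived), and $\axpoloc$ is respected because \refmm{G} (via \refmm{G0}) keeps accesses to a common shared variable in program order. Internal consistency, $\axacyc$ over $\axpoloc \axchoice \axfr \axchoice \axco \axchoice \axrf$ read internally, follows because $<_t$ orders all these edges consistently. External irreflexivity of $\axob$ follows because, by the lemma, every $\axARM$/$\axRC$ edge runs forward in $<_t$, while the remaining constituents ($\axrfe$, $\axfre$, $\axco$, and the data/forwarding edges) are generated by the very memory interactions that $<_t$ linearizes; composing forward edges cannot close a cycle, so $\axob$ is acyclic — equivalently irreflexive, since the clause $\axob \axcomp \axob$ closes it under composition.

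For completeness I would start from a trace $t$ accepted by \axmodelname, so $\axob$ is acyclic. Any linear extension of $\axob$ serves as a candidate commit order, and I would replay $c$ against it: fetch in program order and, at each step, commit the $\axob$-minimal remaining event. The key point is that this commit is always enabled by \refrule{pline-commit}, because any instruction it would have to overtake is, if genuinely non-reorderable, joined to it by an $\axARM$ or $\axRC$ edge and hence ordered before it by the linear extension; the lifting of reordering to sequences from \refsect{pipeline} then discharges the side condition $\rocmdm{\acb'}{\plt}{\acb}$. Appealing to \refth{pl=pseqc} lets me phrase the resulting object as a genuine pipeline run, closing the inclusion.

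The hard part will be forwarding. Operationally, committing a load early past a matching buffered store rewrites it (e.g.\ $r \asgn x$ becomes $r \asgn v$), changing both its instruction type and the further reorderings it admits; axiomatically this is the $\axrfi$ edge and the $\axdata \axcomp \axrfi$ term of $\axob$. I would need a subsidiary lemma showing that a forwarded commit is permitted exactly when the corresponding $\axrfi$ edge is present and value-consistent, so that the rewritten label carries the same value that $\axrf$ assigns to the load, and that these internal reads-from edges never create a cycle that $\axob$ would reject. Getting the value-level bookkeeping to line up in both directions — that the operational values realize $\axrf$/$\axco$ and, conversely, that an axiomatically prescribed read-from is reproducible by forwarding — is where the real work lies; the order-theoretic skeleton above is routine once this is in place.
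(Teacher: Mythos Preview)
Your proposal is sound and, at its core, rests on the same observation the paper uses: the clauses of $\axARM$ and $\axRC$ are a direct transcription of the cases where $\aca \nroA \acb$, so program-order pairs that the pipeline cannot reorder are exactly those joined by an edge of the axiomatic barrier relation. That correspondence is the hinge in both arguments.

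Where you differ is in structure and rigour. The paper's proof is deliberately informal (``by hand'', in the spirit of \cite{ARMv8.4}): it argues pointwise that each axiomatic clause corresponds to a reordering constraint, notes that $\axrf$/$\axfr$/$\axco$ are implicit in any operational trace, and leaves it there. You instead split the equivalence into two inclusions and organise each around the commit order $<_t$: for soundness you read $<_t$ off the pipeline run and check it linearises $\axob$; for completeness you topologically sort $\axob$ and replay it as a schedule of commits. This is a cleaner decomposition than the paper offers and would scale better to a mechanised proof. Your explicit treatment of forwarding via $\axrfi$ and the ``value bookkeeping'' caveat is also more honest than the paper's one-line appeal to the internal/external split, which sweeps the same difficulty under the rug.

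One small wrinkle: the appeal to \refth{pl=pseqc} at the end of your completeness argument is unnecessary. You have already built a pipeline run directly (fetch in program order, commit along the linear extension), so there is nothing to rephrase; \refth{pl=pseqc} relates pipelines to the $\ppseqm$ operator, which is not in play here.
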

\begin{proof}
Consider a trace $t$ of
$\plinemepc$.  This trace must be obtained by some original sequential trace $t'$ of $c$, fetched into the pipeline via \refrule{pline-fetch}, and then
reordered by successive applications of \refrule{pline-commit}.
Without loss of generality consider the case where $t'$ is fetched into the pipeline in its entirety before any commits, and is nontrivial, \ie,
contains two or more actions.  Then the pipeline is exactly $t'$ and of the form
$\plt_1 \cat \aca \cat \plt_2 \cat \acb \cat \plt_3$, with $\aca$ occurring earlier than $\acb$ in program order in $c$.
If $\aca$ and $\acb$ are related by \axARM or \axRC then they must appear in order in any axiomatic trace of \axmodelname; 
and also they cannot be
reordered using \refrule{pline-commit} (commit), which follows from 
the straightforward relationship between \axARM \& \axRC and \refmm{ARMmm} \& \refmm{RCSCmm}, respectively.
Forwarding is covered by the \axinternal/\axexternal division in axiomatic models: the internal (local)
constraints are more strict, meaning locally sequential semantics is maintained, but externally (globally) actions may appear to occur out of order.
The fundamental constraints of \refmm{G0} and \refmm{G} are captured by
$\axdata$ and $\axpoloc$, with full fences captured by 
    $\axpo \axcomp \axdsb \axcomp \axpo$.
The $\axacyc$ and $\axirref$ constraints are implicit in an operational semantics -- a trace is always strictly ordered, and in particular loads can only access
previous stores, it is not possible to access ``future'' stores.
Hence the $\axrf$, $\axfr$ and $\axco$ constraints are implicitly enforced -- these govern the interaction between loads and stores in a trace.
In summary, the pointwise description of $\ARMmm$ translates straightforwardly to axioms over traces, where the program order ($\axpo$) 
relation captures the intervening actions.
\end{proof}


\OMIT{
\begin{theorem}
The traces of $c$ allowed by the axiomatic model in \refsect{arm:axiomatic} are exactly the traces of the process $\cunderm$ obtained from the operational semantics in
\refsect{semantics}.
\end{theorem}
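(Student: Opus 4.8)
The plan is to prove the equality of the two trace sets by mutual inclusion, using as a common scaffold the observation that every trace on either side is a reordering of a single program-order trace of $c$. First I would fix a sequential trace $t'$ of $c$, produced by the sequential fragment of the semantics; by \refrule{pline-fetch} this trace is fetched into the pipeline in program order, and every trace $t$ of $\plinemepc$ then arises from some such $t'$ by finitely many applications of \refrule{pline-commit}. Dually, every axiomatic trace of \axmodelname is a linearisation of the events of some $t'$ that satisfies the $\axacyc$ and $\axirref$ constraints. Both constructions therefore impose a total order on the same events, and the theorem reduces to the claim that a total order on the events of $t'$ is reachable by legal commits iff it satisfies the axiomatic constraints.

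The core lemma is a \emph{pairwise} correspondence between the pointwise relation \refmm{ARMmm} and the ordering edges of the axiomatic model. For events $\aca,\acb$ with $\aca$ before $\acb$ in program order, I would show by a finite case analysis that $\aca \nroA \acb$ holds exactly when $\aca$ and $\acb$ are joined by an edge of $\axARM$ or $\axRC$: the store-fence clause \refeqn{A:a<wwf}, the control-fence clause \refeqn{A:g<cf}, and the branch-before-store clause \refeqn{A:g<u} are matched against the corresponding disjuncts of $\axARM$, and the release and acquire clauses \refeqn{RC:rel} and \refeqn{RC:acq} of \refmm{RCmm}, strengthened as in \refmm{RCSCmm}, against $\axRC$. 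Because \refrule{pline-commit} lifts $\mm$ to a prefix by relational composition, $\mm(\aca \cat \plt) = \mm(\aca) \comp \mm(\plt)$, a commit of $\acb$ past a whole prefix is blocked exactly when some prefix event contributes a blocking edge to $\acb$, which is precisely the pair the axiomatic order is forbidden to invert.

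Next I would discharge the constraints that are not carried by the reordering relation but enforced automatically. The data-dependency core of \refmm{G0}/\refmm{G} corresponds to $\axdata$, and coherence-per-location to $\axpoloc$. Since the operational semantics emits a strictly ordered trace in which a load can only read an already-committed store, the reads-from, from-reads and coherence relations $\axrf$, $\axfr$ and $\axco$ hold by construction, so the internal $\axacyc$ constraint is satisfied by any pipeline trace and imposes no obligation a commit sequence cannot meet. Forwarding is reconciled through the \axinternal/\axexternal division: forwarding only substitutes a value the later instruction would legitimately read, so internal (local) sequential semantics is preserved, while externally the originating store may appear to commit later --- exactly the weakening recorded by $\axrfi$ within $\axob$ that lets a forwarded load commit early in the pipeline.

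The main obstacle I anticipate is bridging the pipeline's pairwise commit test and the axiomatic model's \emph{global} $\axirref \axob$ constraint, since $\axob$ is a transitive closure whereas each commit inspects only the current prefix. For the pipeline-to-axiom direction I would observe that whenever $\acb$ commits past a prefix, no prefix event can carry a blocking edge to $\acb$ (such an $f$ would give $f \nroA \acb$ and prevent the commit), so every $\axARM$, $\axRC$ and $\axob$ edge $f \to \acb$ has $f$ emitted before $\acb$; the commit order is therefore a topological sort of $\axob$, hence $\axob$- and $\axpoloc$-acyclic, and the produced trace is axiomatic. For the converse I would emit events in the order $\prec$ of a given axiomatic trace and show each is commit-enabled: the $\prec$-minimal unemitted event $e$ has no unemitted $\axob$-predecessor (irreflexivity forces every $\axob$-edge into $e$ to precede it in $\prec$), and for every unemitted program-order-earlier $f$ we have $f \roA e$ (else the pairwise correspondence would supply an $\axARM$/$\axRC$ edge $f \to e$ forcing $f \prec e$), so by the compositional lifting of $\mm$ the event $e$ may be committed past the current prefix, reconstructing $t$ inductively. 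The delicate point throughout is the interaction of forwarding with the transitive $\axob$ edges, but it stays controlled because \refmm{ARMmm} is well-behaved (\refdefn{well-behaved}): forwarding is deterministic and yields only the value the later instruction would read, so the emerging order never acquires an edge lying outside $\axob$.
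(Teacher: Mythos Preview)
The paper's proof is a one-line composition: it simply cites \refth{ax=pl} (axiomatic traces $=$ pipeline traces) and \refth{pl=pseqc} (pipeline traces $=$ traces of $\cunderm$), both of which are already established earlier in the paper. Your proposal instead re-develops the content of \refth{ax=pl} from scratch, and in doing so follows essentially the same argument the paper gives for that theorem: fetch a sequential trace into the pipeline, match the blocking clauses of \refmm{ARMmm} against the disjuncts of $\axARM$/$\axRC$ by case analysis, observe that $\axdata$/$\axpoloc$ correspond to \refmm{G0}/\refmm{G}, and note that $\axrf$, $\axfr$, $\axco$ and the acyclicity/irreflexivity constraints hold automatically in an operational trace. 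Your treatment of the transitive-closure mismatch (pairwise commit test vs.\ global $\axob$) is more explicit than the paper's sketch and is a reasonable elaboration of the same idea.

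There is, however, a genuine gap. The theorem you are asked to prove is about $\cunderm$, the process governed by the \pseqc semantics of \refsect{semantics}, not about the pipeline process $\plinemepc$. Your entire argument is phrased in terms of pipeline traces: you fix $t'$ fetched via \refrule{pline-fetch}, reason about commits via \refrule{pline-commit}, and show the pipeline trace set coincides with the axiomatic one. Nowhere do you connect this back to the traces of $\cunderm$. That bridge is exactly \refth{pl=pseqc}, and it is not a triviality --- it requires the well-behavedness hypotheses of \refdefn{well-behaved} and an induction on traces. You should either invoke \refth{pl=pseqc} explicitly to pass from $\plinemepc$ to $\cunderm$, or, if you intend a direct argument, rework your mutual-inclusion proof so that one side is genuinely the trace set generated by Rules \refrule{pseqcA}--\refrule{pseqcB} rather than by \refrule{pline-fetch}/\refrule{pline-commit}.
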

\begin{proof}
By \refth{ax=pl} and \refth{pl=pseqc}.
\end{proof}
This is basically the proof, with some additional constraints about coherence and the other (standard) global orderings.  This shows the benefit of thinking this way.
}

\section{RISC-V}
\labelsect{riscv}

\newcommand{\riscfencerww}{{\tt fence~\T{rw,w}}\xspace}
\newcommand{\riscfencerrw}{{\tt fence~\T{r,rw}}\xspace}
\newcommand{\riscfencerr}{{\tt fence~\T{r,r}}\xspace}
\newcommand{\riscfenceww}{{\tt fence~\T{w,w}}\xspace}
\newcommand{\riscfencerwrw}{{\tt fence~\T{rw,rw}}\xspace}
\newcommand{\riscfencetso}{{\tt fence\T{.tso}}\xspace}
\newcommand{\riscfencei}{{\tt fence\T{.i}}\xspace}

The RISC-V memory model \cite{RISC-V-ISAManual2017,ArmstrongRISC-V} 
is influenced by ARM's weak ordering on loads and stores (corresponding to $\Gmm$), but has release consistency annotations using
the weaker $\RCmm$ (\refmm{RCmm}) rather than the stronger $\RCSCmm$ (\refmm{RCSCmm}).
It also defines six
different types of 
artificial barriers (more are technically possible but their use is not recommended \cite{ArmstrongRISC-V}):
a full fence given by
$
	\riscfencerwrw \sdef \ffence
$; 
a store fence given by
$
	\riscfenceww \sdef \wwfence
$ (identical to ARM's $\armdsbst$);
a corresponding load fence
$
	\riscfencerr
$;
two new types
$
	\riscfencerww 
$
and
$
	\riscfencerrw 
$
described below;
and a barrier used to mimic TSO's in-built weakening where loads can come before stores,
which we define as
$
	\riscfencetso \sdef \riscfencerrw \ppseqRV \riscfencerww
$.
Additionally RISC-V has
a barrier 
$
	\riscfencei 
$
which
has a technical specification beyond what is considered here, and so it is defined as a no-op ($\tau$).
\OMIT{
\begin{gather}
	\riscfencerwrw \sdef \ffence
	\qquad
	\riscfenceww \sdef \wwfence
	\qquad
	\riscfencerr \sdef \llfence
	\\
	\riscfencerww \sdef \loadgate 
	\qquad
	\riscfencerrw \sdef \storegate 
	\\
	\riscfencetso \sdef \storegate \ppseqRV \loadgate
	\qquad
	\riscfencei \sdef \tau
	\labeleqn{fence.i}
\end{gather}
}
\begin{mmdef}[$\RVmm$]
\labelmm{RVmm}
	$
	\aca 
	\roRV
	\acb 
	\ltiff{$\aca \roRC \acb$
	}
	$
	except
\begin{eqnarray}
	&
	\aca \nroA 
	\riscfencerr 
	\nroA \aca
	&
		\ltiff{$\isLoada$}
	\labeleqn{RV:a<llf}
	\labeleqn{RV:llf<a}
	\\
	&
	\aca \nroRV 
	\riscfencerww
	\roRV \acb
	&
		\ltiff{$\isLoadb$}
	\labeleqn{RV:loadgate}
	\labeleqn{RV:fencerww}
	\\
	&
	\aca \roRV 
	\riscfencerrw
	\nroRV \acb
	&
		\ltiff{$\isStorea$}
	\labeleqn{RV:storegate}
	\labeleqn{RV:storeJump}
	\labeleqn{RV:fencerrw}
	\\
	\also
	&
	\guardb \nroRV \aca
	&
		\ltiff{$\isStorea$}
	\labeleqn{RV:g<u}
\end{eqnarray}
\end{mmdef}
RISC-V's load fence, $\riscfencerr$, restricts ordering with loads \refeqn{RV:llf<a}, and is
is the straightforward dual of ARM's store fence ($\armdsbst$, \refeqn{A:wwf<a}).
RISC-V's $\riscfencerww$ barrier is intended to maintain order between loads and
stores and later stores only, allowing later loads to potentially come earlier;
it therefore allows reordering of loads, but blocks
everything else \refeqn{RV:loadgate}.
Similarly the $\riscfencerrw$ barrier ensures order between loads and later loads and stores, and hence can `jump'
over stores but is blocked by loads \refeqn{RV:storeJump}, which therefore are strictly ordered with later loads and stores.
Like ARM, RISC-V prevents stores from taking effect before branches are resolved \refeqn{RV:g<u} (see \cite{RISC-V-ISAManual2017}[Rule 11, A.3.8]).

\OMIT{
We show how these barriers take effect via the following derivation, where a $\riscfencetso$ sits between a store and a later store and load.
The intention is that the two stores are ordered (``total store order''), but the load of $z$ may come earlier.
Note that $\isStore{x \asgn 1}, \isStore{y \asgn 1},$ and $\isLoad{r \asgn z}$.
We let `$\ppseqc$' abbreviate `$\ppseqRV$'.
\begin{derivation}
	\step{
	x \asgn 1 \ppseqc \riscfencetso \ppseqc y \asgn 1 \ppseqc r \asgn z
	}

	\trans{\refeq}{Definition of $\riscfencetso$; \reflaw{pseqc-assoc}}
	\step{
		x \asgn 1 \ppseqc \riscfencerrw \ppseqc \riscfencerww \ppseqc y \asgn 1 \ppseqc r \asgn z
	}

	\trans{\refsto}{\reflaw{2actions-swap-order} by \refeqn{RV:storegate}}
	\step{
		\riscfencerrw \scomp x \asgn 1 \ppseqc \riscfencerww \ppseqc y \asgn 1 \ppseqc r \asgn z
	}

	\trans{\refsto}{\reflaw{2actions-swap-order} by \refeqn{RV:a<b}}
	\step{
		\riscfencerrw \scomp x \asgn 1 \ppseqc \riscfencerww \ppseqc r \asgn z \scomp y \asgn 1 
	}

	\trans{\refsto}{\reflaw{2actions-swap-order} by \refeqn{RV:loadgate}}
	\step{
		\riscfencerrw \scomp x \asgn 1 \ppseqc r \asgn z \scomp \riscfencerww \scomp y \asgn 1 
	}

	\trans{\refsto}{\reflaw{2actions-swap-order} by \refeqn{RV:a<b}}
	\step{
		\riscfencerrw \scomp r \asgn z \scomp x \asgn 1 \scomp \riscfencerww \scomp y \asgn 1 
	}
\end{derivation}
Here the load of $z$ has come before the store to $x$, while the store to $y$ is still prevented from coming before the store to $x$.
}

\subsubsection*{Conformance}
We tested our model against the litmus tests outlined in the RISC-V manual 
\cite{RISC-V-ISAManual2017} and made available online with expected results
\cite{RISCV-ConformanceTests}.
Restricting attention to those tests involving instructions we consider in \refmm{RVmm} (and \refmm{RCmm})
our tests agree with the official model in all 3937 cases, covering the $\RCmm$ behaviours and the six barrier types defined above, with 
\refmm{G} controlling interactions between stores and loads.

\OMIT{
Load fence litmuses:
SB+rfi-fence.r.rs.litmus
	- works; only has a load on the right
- --- parse?:
ISA02.litmus
	- doesn't auto parse  (but probably could be hacked: register name?)
OpVsHerd01.litmus
	- doesn't auto parse  (but probably could be hacked: register name?)
- --- Ignore:
C-Will01-Bad.litmus
	- has amod, doesn't auto parse  (but probably could be hacked)
SB+fence.rw.rw+ctrlfence.r.r.litmus
	- Exclusive

Not auto translated:
	- The Andy* ones are not translatable (use A and B as values)
	- The C-Will* ones use AMO
}

\section{Related work}
\labelsect{related-work}


There has been significant work in defining the semantics of processor-level instruction reordering since the 1980s
\cite{LamportSC,AccessBufferingInMultiprocs86,ShashaSnir88} and more recently under the umbrella of weak memory models
\cite{TamingRC,DecidableWmms,Boudol2009,Boudol2012,Jagadeesan2012,RGforTSO2010,CrarySullivan2015,Batty2015,Abe2016,SananSPARC2016,Kavanagh2018,DohertyVerifyingC11}.
To the best of our knowledge we are the first to encode the basis for instruction reordering as a parameter to the language, rather than as a parameter to the
semantics.  This has allowed us to describe relevant properties and relationships at the program level, 
based on per-process properties about
how individual cores manage their pipeline, and supporting program-structure-based reasoning
techniques.  In the literature weak memory model specifications are typically described with respect to properties of the global system.
The most well-used framework in this style is the axiomatic approach
\cite{SteinkeNutt04,AxiomaticPower,AutomaticallyComparingMCMs}, perhaps best exemplified by Alglave et al. \cite{HerdingCats}, in which many of the behaviours
of diverse processors such as ARM, IBM's POWER, and Intel's x86 are described, and common properties elucidated, and whose approach we compared with in
\refsect{arm:axiomatic}.
Another common approach to formalisation is with a semantics that is closer to the behaviour of a real microarchitecture,
\eg, \cite{UnderstandingPOWER,ModellingARMv8,ARMv8.4}
(we gave a direct semantic comparison to the operational model of \cite{x86-TSO,x86-TSO-TPHOLS} in \refsect{tso}).
In both the axiomatic and the concrete style
it is more difficult to derive abstract properties and to reason about a particular system over the structure of the program, 
however both give rise to efficient tools for model checking
\cite{VerifProblemAtig2010,AzizPowerMC,AzizStatelessPower,AzizTSO,MCWMMsMaude,MCforHWMMsKokologiannakis,KokoMCforC2018,AzizRA2019,AzizTSOPSpace}.

The Promising semantics \cite{PromisingSemantics,Promising-ARM,Promising2} is operational and can be instantiated with different memory models
(including software memory models), and a proof
framework has been developed for reasoning about programs.  Weak behaviours are governed by abstract global data structures that maintain
a partial order on events, and hence the semantics is defined and reasoning is performed with respect to these global data structures.  

\OMIT{
Our approach is similar to that of Colvin \& Smith
\cite{FM18}, but that paper defines only a simple prefixing command for actions, which is a special case of \pseqc.  That paper does not consider a general
theory for memory models (\refsect{overview}) nor consider pipelines, and does not address TSO, Release Consistency, or RISC-V (but does consider POWER),
and showed conformance for ARM against an older version without release/acquire atomics, against a much smaller set of litmus tests (approximatly  400 vs over
100,000 in this paper).  That theory is not machine-checked for validity, and only a few simple refinement rules are given.
}

This paper supersedes
\cite{FM18}, which defines only a simple prefixing command for actions (a special case of \pseqc).
That paper does not consider a general
theory for memory models (\refsect{overview}),
nor consider pipelines, 
and does not address TSO, Release Consistency, or RISC-V (but does consider POWER),
and showed conformance for ARM against an older version without release/acquire atomics, against a much smaller set of litmus tests (approximately  400 vs over
100,000 in this paper).  That theory is not machine-checked, contains only a few simple refinement rules, and does not employ Owicki-Gries reasoning.

The ``PipeCheck'' framework of Lustig, Martonosi et al. \cite{PipeCheck,PipeProof,TriCheck,CheckMate} is designed to validate that
processors faithfully implement their intended memory model, using a detailed pipeline semantics based on an axiomatic specification.
Given that our approach has an underlying link to the behaviour of pipelines it may be possible to extend our framework 
so that it can make use of those existing tools for processor validation.

Our operational approach based on
out-of-order instruction execution follows 
work such as Arvind et al. \cite{CRF-Arvind99,MM=RO+At,ConstructingAWMM}, and the development of the Release Consistency and related
models \cite{ReleaseConsistency90,WeakOrd-NewDef90,gharachorloo1991performance,AdveHill93,AdveBoehm2010,WMM-S}.
The algebraic approach we adopt to reducing programs is similar in style to the Concurrent Kleene Algebra \citep{CKA},
where sequential and parallel composition contribute to the event ordering.


\OMIT{

- General stuff, perhaps about complicated mms
- ---------------------
	- --- Other classification approaches?
	- Weak Memory Models: Balancing Definitional Simplicity and Implementation Flexibility
		- Not well cited, but should mention
	- Software Verification for Weak Memory via Program Transformation. Alglave et al.
		- should read this carefully, presumable prog. trans. is algebraic.  The paper at a scan is not clear
		- put this as a cite to future work, application to bigger programs?
	- Consider "Processor consistency", which appears to cover lack of mca
		- defined in  James R. Goodman. Cache consistency and sequential consistency. Technical Report no. 61, SCI Committee, March 1989.
		- considered in "Memory consistency and event ordering in scalable sharedmemory multiprocessors"
	- --- done
	- A formal hierarchy of weak memory models, Alglave 2012
		- put it at the start

- Litmus tests
- ---------------------
	- *a bunch of references for litmus tests can be found in the introduction, commented out
	- understanding power multiprocessors 2011, Sarkar et al, apparently defines the litmus naming convention?
	- Automated Synthesis of Comprehensive Memory Model Litmus Test Suites
		- nice paper for generating sets of litmus tests; should use this myself, maybe ask
	- Mador-Haim, S., Alur, R., Martin, M.: Generating litmus tests for contrasting memory consistency models - extended version. 
		- Technical report, Dept. of Computer Information Science, U. of Pennsylvania (2010)
		- more operational models, including non-store-atomic
	- Synthesizing memory models from framework sketches and Litmus tests
		- nice paper, filling gaps in sets of litmus tests

- Transforms (low level)
- ---------------------
	- Verification of an implementation of Tomasulo's algorithm by compositional model checking
		- Our pseq is an abstraction of tomasulo's concept, proved in the above paper
	- "Two techniques to enhance the performance of memory consistency models" \cite{TwoTechniques91}
		- introduces prefetching and speculative execution
		- note, weird: cited 300 times, apparently published at ICPP 91, but no record exists
	- Tomasulo designed IBM 360 FPU.
		- note that he came up with "register renaming" to remove some deps, might be useful in the shadow register discussion
		- might want to mention his "Re-order buffer" (*note: not sure that was his)
	- apparently K. C. Yeager. The MIPS R10000 Superscalar Microprocessor. IEEE Micro, pages 28–40, Apr. 1996.
		- says that speculative loads were added to SC

- ---------------------
- mentioned
- ---------------------
	- Steinke + Nutt 2004, A unif th. of shared memory consistency"
		- axiomatic, early stuff (not thread-local constraints)
	- Weak ordering—a new definition (1990) http://dl.acm.org/citation.cfm?id=325100
		- early paper from Adve + Hill (defines DRF0, superseded by DRF1 later)
	- cite: Shared memory consistency models: A tutorial \cite{SharedMemModelsTutorial}
		- cited 1400 times (peaked 2011, but still going ok)
	- cite: Memory consistency and event ordering in scalable shared-memory multiprocessors (1990)
		- cited 1600+ times (peaked 97)
		- defines "Release consistency"
	- Arvind's paper on RISC-V introduced WMM and WMM-S, which I guess should be investigated if published elsewhere
		- just list this with Arvind's other work I think
		- https://ieeexplore.ieee.org/abstract/document/8091252
	- show off that reordering is called "jockeying" in
		- MEMORY ACCESS BUFFERING IN MULTIPROCESSORS, (sec 2, p436) Michel Dubois, Chrlstoph Scheurich, Faye Briggs
		- this is a good reference to cite for early approaches to tackling the issues
		- talks about store buffers; it appears they are real things
	- Memory Models: A Case for Rethinking Parallel Languages and Hardware
		- should cite this 2010 paper by Adve + Boehm, just general stuff
		- could be mentioned with the DRF model by Arvind or whoevs
	- politic to cite Weak Ordering - A New Definition, Adve&Hill?
		- mentions "weak ordering"
			- this is the early version of their work, with DRF-0
	- Arvind+Maessen. MM = instruction reording(sic?) + store atomicity
		- close to us: they define instruction ordering very similarly (but less generally).  But they do not turn it into an operational semantics for pseqc
			- they, like everyone, tries to explain the behavious at the global level; we let them emerge from local reorderings
	- X. Shen, Arvind, and L. Rudolph. Commit-Reconcile& Fences (CRF): A New Memory Model for Architectsand Compiler Writers. 
		- this is focussed on local reorderings as well: see Fig 6, sect. 4
	- Efficient and correct execution of parallel programs that share memory 1988 (Shasha & Snir)
		- suggests allowing loads to proceed in parallel but still enforce SC
		- nice early reference wrt instruction ordering?
	- M. Frigo and V. Luchangco. Computation-centric memory models. 
		- wrt. hierarchies

- *** Done
- semantics, unknown class
- ---------------------
	- A Calculus for Relaxed Memory, Crary and Sullivan 2015 (check if this is opsem)
		- another semantics, for a new mm?
		- proves SC for a couple of "programming disciplines", may be worth checking these proofs
	- --- TSO
	- A Rely-Guarantee Proof System for x86-TSO, Ridge; check who cites it, try to find a good example
		- gives an RG system for TSO, nothing more to say
	- ** Read this: Constructing a weak memory model, Arvind, Muralidaran Vijayaraghavan, et al
		- uses the weakest mm!  Called GAM
		- uses a "hybrid" (my word) operational/axiomatic style, could be put in its own category
			- this says this is "generative", in the sense explain wmms in terms of uniprocessor optimisations, which is what we do
				- also "constructive"
			- goes into the list of early cites
	- A Denotational Semantics for SPARC TSO (Brookes)
		- better look at this, might use syntax like we do.  Otherwise doesn't have pseqc, and is TSO only

- *** Done
- Opsem
- ---------------------
	- Taming RA consistency
		- not promising??  Uses timestamps.  I guess its operational.  Cite wrt RC model, used for C
	- Atig M.F., Bouajjani A., Burckhardt S., Musuvathi M. (2012) What's Decidable about Weak Memory Models
		- this should go with the reordering table list of refs
		- also cite in the operational list with micro-arch feel
	- Relaxed operational semantics of concurrent programming languages, Boudol
		- has a semantics but no pseqc, not hugely algebraic, but not far off
	- Observation-Based Concurrent Program Logic for Relaxed Memory Consistency Models
		- another semantics, operational but no pseqc
	- Brijesh et al
	- "Relaxed Memory Models: an Operational Approach" 2009
		- has an op semantics, no pseqc
	- Brookes Is Relaxed, Almost!
		- TSO
		- adds buffers to the state in an op semantics, worth citing near our TSO encoding
	- ---
	- Promising-ARM/RISC-V: a simpler and faster operational concurrency model
		- promising, but still very hard, explicitly avoids reordering, instead keeping track of who has seen what, +at least 3 relns e.g.,
		  promises for writes

- **Done
- Model checking
- ---------------------
	- perhaps: HMC: Model Checking for Hardware Memory Models, Vaf
		- into an mc comparison section I guess, under maude
	- Context bounded analysis for POWER
		- Atig + Boujjani
	- Stateless Model Checking for POWER. Atig et al.
		- and...
	- Getting Rid of Store-Buffers in TSO Analysis, Atig, Boujjani
			- cite as MC work, not related to us really
	- Can We Efficiently Check Concurrent Programs Under Relaxed Memory Models in Maude? (2014)
		- has a picture of mm hierarchy taken from a wiki; cite in relation to maude encoding

- ** Unclear if useful
- Software mms
- ---------------------
	- Outlawing ghosts: avoiding out-of-thin-air results https://dl.acm.org/doi/10.1145/2618128.2618134 Boehm &Demsky
		- says they are bad, but necessary at pgm level?  Hard to outlaw [implicitly: in axiomatic specs] in language-level mms but is a given
		  in hardware mms, by design.  
	- "Generative operational semantics for relaxed mms"
		- This is java, no pseqc
	- could cite Making the java memory model safe (Lochbihler) for SC for DRF or Java mm
	- **Mentioned
	- Bridging the Gap between Programming Languages and Hardware Weak Memory Models; Vaf et al
		- more promising semantics
		- I guess this goes with future work, integrating with software wmms? 

- ---------------------
- not mentioned
- ---------------------
	- scan the 4 papers 3,7,10,11 in terms of processor optimisations, cited by the ": a Tutorial" paper.  Might need some more up-to-date ones
	  as well
	  	- not sure why I thoght that might be useful, ignore?
	- Neiger 2000, taxonomy of multiproc mem-ord models
		- can't find it, ignore
	- Collier 92, reasoning about pl arch
		- Can't get a copy, ignore
	- Frigo. The weakest reasonable memory model. Master’s thesis, MIT, Oct. 1997.
		- defines some properties, but we don't really want/need to go there
		- "just" a thesis
	- --- Why we need to be rigorous:
	- throwin a cite to "Relaxed memory models must be rigorous" Nardelli, Sewell, Sarkar, Alglave++
		- not actually published
	- D. Hill. Multiprocessors should support simplememory-consistency models (98)
		- makes the case that confusion of wmms doesn't justify their complexity for reasoning
		- but argues for SC which seems to be a lost cause?
}

\section{Conclusion}
\labelsect{conclusions}

In this paper we have formalised instruction-level parallelism (ILP), a feature of processors since the 1960s, and a major factor in the weak behaviours associated
with modern memory consistency models.  We showed how modern memory models build on generic properties of instruction reordering with respect to preservation of 
sequential semantics, calculated pointwise on instructions.  
We gave a straightforward abstract semantics of a pipeline, lifting pointwise comparison to sequences of instructions.
We also lifted the comparison to the command level,
and hence defined a program operator (\pseqc) which is behaviourally equivalent to using
a pipeline, but supports compositional reasoning about behaviours over the structure of parallel processes.  We empirically validated
the models for large sets of litmus tests for ARM and RISC, showed that the reordering semantics for TSO is equivalent to the established store buffer semantics,
and showed how sterotypical results emerge across a range of models, for instance, the store buffer pattern of TSO where loads can come before stores, the
message passing paradigm from release consistency using release/acquire flags to control interprocess communication, and load speculation from ARM.  We provide
in the supplementary material a model checker based on the pipeline semantics in Maude \cite{Maude,SOSMaude} and encoded and machine-checked the theory for the \impro language in Isabelle
\cite{IsabelleHOL,Paulson:94}.

The reasoning style in this framework is more direct than many in the literature, in that the nondeterminism due to ILP can be made explicit in the structure
of the program, and either shown to have no effect on desired properties, or a specific trace that contradicts a desired property can be elucidated.
In the literature reasoning about memory models is often with respect to orderings over global event traces, rather than per-process reorderings.
Our approach, \eg, to prepare for the application of the
Owicki-Gries method, appears to be simpler than techniques that incorporate memory model constraints directly \cite{OG-WMMs-Lahav,AlglaveCousot2017,DongolC11OG}.  
Furthermore, we could have instead chosen to apply 
rely/guarantee reasoning \cite{Jones-RG1,Jones-RG2} once the programs were reduced to concurrent sequential processes.
The reduction of commands involving \pseqc to a sequential form (\eg, \reflaws{2actions-reduce}{fence-to-seqc}) might be infeasible for large programs 
with few memory
barriers, but such programs are not typically where one is concerned with memory models:
more commonly memory models apply to
library data structures and
low-level applications where shared-variable communication is tightly controlled
\cite{LeWorkStealingPPoPP13,crossbeam-deque}.
An advantage of our language-level encoding
is that other program properties can be tackled in a familiar setting and at the program level, for instance, information flow logic for security or
progress properties \cite{InfoFlowWMMs,PseqcSpectre}, making use of existing frameworks directly.

\OMIT{
\Pseqc abstracts away from the details of the architecture and relates instruction reordering to the fundamental notions of sequential semantics, augmented with
artificial barriers to restore order as needed.
It allows reasoning on the structure of the program, transforming a program to reduce complex instruction orders to a choice over sequential orders, so that sequential
reasoning can be applied, or a particular reordering can be elucidated that invalidates some desired property (\refth{reasoning}).  
}

As future work we intend to extend the semantics to cover other features of modern processors that contribute to their memory models, for instance, POWER's
cache system that lacks multicopy atomicity, TSO's global locks, and ARM's global monitor for controlling load-linked/store-conditional instructions.
As these features are global they cannot be captured directly as behaviours of per-processor pipelines.
A proof technique for this extended framework will incorporate features from existing work, perhaps as a hybrid separating local reordering from global
behaviours
\cite{AlglaveVerifWMM2013,PromisingSemantics}.
Our relatively simple semantics for the pipeline contains only two stages, fetch and commit, neglecting in particular \emph{retirement}; this could be crucial
to include in security vulnerability analysis; for instance, it is the retirement of loads where invalid memory accesses are detected, and this in association with early load
speculation leads to the Meltdown vulnerability \cite{lip2018,SpectrePrime}.

\OMIT{

\robnote{Covered (drop):}
While being relatively abstract, \pseqc's operational behaviour can be related to real processors straightforwardly: when $c \ppseqm \acb$ executes $\acb$ first this
corresponds to (the instructions of) $c$ having been fetched into the pipeline and either $\acb$ has no dependencies on $\cmdc$ (according to $\mm$) or, if it does,
it still may be able to proceed using values that have been calculated in the pipeline (in, e.g., the reorder buffer for registers \cite{Tomasulo67} or a store buffer
for globals) and not yet written to registers or shared memory.  This is known as forwarding, and so the behaviour seen by other processors will be some modified
version $\acb'$ -- this is what is meant by $\rocmd{\acb'}{\cmdc}{\acb}{\mm}$, the key premise of the reordering operational semantics (\refrules{pseqcA}{pseqcB}).
In this sense we believe the \pseqc
operator, which generalises sequential and parallel composition, is a fundamental concept of assembler-level languages.  

\robnote{Martonosi: could we give a more detailed semantics of the pipeline -- more stages -- and show conformance with Martonosi semantics, and then use their tools... (future
work)}

\robnote{In future work might be nice to say retirement phase of pipeline handy for looking at meltdown exception throwing}



\robnote{Delete/move/covered already:}
Focusing on local reorderings has high explanatory power for many of the behaviours of weak memory models, 
but as noted above cannot describe all features of modern processors.
We argue that it is better to express reorderings thread-locally, and keep this 
separate from other (global) features.

In contrast to the axiomatic style over event traces,
we define the behaviours of models based on the syntax of instructions.  This allows reasoning on the
structure of processes, more easily permitting compositional reasoning.
\robnote{Move to conlcusions:}
However there are features and instruction types of processors which have a global effect that cannot be captured locally, for instance, 
POWER's cache coherence system, 
TSO's system lock \cite{x86-TSO-TPHOLS},
and in particular the behaviour of load-linked/store-conditional instructions (LL/SC), which, as described in the ARM reference manual
\cite{ARMv8A-Manual}(B2.9.2), require a global monitor to track modifications to individual locations.  
Any such behaviour cannot, and should not, be captured in the semantics of \pseqc, instead requiring a separate specification of the behaviour.
In the axiomatic style, however, such global behaviours are easier to capture;
for instance, the behaviour of LL/SC instructions in ARM is elegantly captured in one line which 
prohibits intervening writes between pairs of matched LL/SC instructions \cite{HerdARMv8}.

A survey of the extensive literature on 
formalisations
(\eg,
\cite{TamingRC,DecidableWmms,Boudol2009,Boudol2012,Jagadeesan2012,RGforTSO2010,CrarySullivan2015,Batty2015,Abe2016,SananSPARC2016,Kavanagh2018,DohertyVerifyingC11})
did not reveal any that
encode the reordering as a language-level operator.
Typically semantics frameworks are either specific to a single processor, with varying levels of abstraction,
or generic but defined as restrictions on the system as a whole.
We did not find any other work that covers the breadth of instructions we have with an extendable operational semantics that permits
algebraic reasoning, and the application of standard sequential reasoning techniques.
Essentially the technique we propose for establishing properties of programs is two-phase: understand and account for the allowed reorderings of programs, based on syntactic
constraints of how real pipelines decide on reorderings, and then apply established reasoning techniques to confirm or deny if the desired properties hold.
Most other research into reasoning can be characterised as dealing with reorderings by extending the assertion language in some way, embodying the nondeterminism due to reordering
within global data structures (such as event graphs).
In reality code that is influenced by weak memory models will be very low-level and restricted to data structure implementations from the library, and so the structures
and reorderings will be tightly controlled: either there is a sensible level of nondeterminism where indpendent instructions do not interfere, or there is some complex combination
of reordering and interleaving that results in incorrect behaviour.  Our argument is that in the former case it is better to use standard techniques, and in the latter it is
better to prove by contradiction, elucidating the actual trace, than to prove contradiction via complex data structures and novel tecniques.

\robnote{Above:}
There are effectively two contributions to compare: firstly, the semantics itself, and secondly, the reasoning technique.
The semantics is based on syntactic constraints on instructions and the model is embedded in the languge via the \pseqc operator.
As shown by the axiomatic proof, provided we are talking about local reorderings only then it is straightforward to link back.  
However pipelining is a major, but only one, feature of processors.  They need their own semantics to take into account.
On the reasoning side... (use above)

\robnote{Put this into "semantics" heading?}
Perhaps the most well-used approach to specifying real weak memory models is axiomatic, where the behaviour of the memory model is defined directly
in terms of the global
behaviour of the system, defining the stores from which loads may take their values \cite{SteinkeNutt04,AxiomaticPower,AutomaticallyComparingMCMs}.  
This approach is perhaps best exemplified by
Alglave et al. \cite{HerdingCats}, in which many of the behaviours of diverse processors such as ARM, IBM's POWER, and Intel's x86 were described, and
common properties elucidated.
In contrast to the axiomatic style over event traces,
we define the behaviours of models based on the syntax of instructions.  This allows reasoning on the
structure of processes, more easily permitting compositional reasoning.
\robnote{Move to conlcusions:}
However there are features and instruction types of processors which have a global effect that cannot be captured locally, for instance, 
POWER's cache coherence system, 
TSO's system lock \cite{x86-TSO-TPHOLS},
and in particular the behaviour of load-linked/store-conditional instructions (LL/SC), which, as described in the ARM reference manual
\cite{ARMv8A-Manual}(B2.9.2), require a global monitor to track modifications to individual locations.  
Any such behaviour cannot, and should not, be captured in the semantics of \pseqc, instead requiring a separate specification of the behaviour.
In the axiomatic style, however, such global behaviours are easier to capture;
for instance, the behaviour of LL/SC instructions in ARM is elegantly captured in one line which 
prohibits intervening writes between pairs of matched LL/SC instructions \cite{HerdARMv8}.

\OMIT{
Our semantics is presented in a conventional operational semantics style, where actions appear in the trace.  Unlike Plotkin-style operational
semantics \citep{Plotkin81,Plotkin}, we do not keep the state in the configurations, but as a first-order command of the language.  This style interacts well with
syntax-specific behaviours such as distinguishing between behaviours for registers or shared variables.  A similar approach is used by Owens \citep{OwensOcamllight}
and Abadi \& Harris \citep{AbadiHarris2009} in operational semantics; using the syntax of labels is also used in a denotational semantics by Brookes
\citep{BrookesSepLogic07}.
}

\OMIT{
We generalise an earlier simple prefix command to parallelized sequential composition, providing a more expressive and richer calculus of laws and behaviours;
We define TSO in this framework, and show that using \pseqc instantiated with \TSOmm is equivalent to executing sequential code with a store buffer;
We define Release Consistency 
We take into account a more recent version of ARM that ensures \emph{multicopy atomicity} and builds on concepts from Release Consistency,
and show conformance against a larger set of litmus test (over 5,000 in this paper vs. approx 350 in \cite{FM18})
We define RISC-V as an instance and show conformance to a set of 6000 litmus tests;
We show how to encode a weak memory model specification (a relation on instruction types) as an axiomatic model
We encode all definitions and prove all theorems in
in the Isabelle/HOL theorem prover \cite{IsabelleHOL}. 
%
}


\OMIT{
The model checking approach we developed exposed a bug in an algorithm in \citep{LeWorkStealingPPoPP13} in relation
to the placement of a control fence.  That paper includes a formal proof of the correctness of the algorithm based on the axiomatic model of
\citep{AxiomaticPower}.  The possible traces of the code were enumerated and validated against a set of conditions on adding and removing elements from the
deque (rather than with respect to an abstract specification of the deque).  As shown via derivations in \cite{FM18} the reordering is
straightforward to observe directly by looking at the code. The reordering relation for the ARM architecture show that the first control fence is redundant
(does not prevent any reorderings) because it can be reordered to the previous fence. Similarly the load in the branch can come before the branch point
itself (speculatively), and hence before the earlier load controlling the branch. The control fence, in its original position, does nothing to prevent this.
The semantics of \cite{HerdingCats} does not uncover this anomaly as directly because it is more complex to construct the whole-code relations, 
while operational
models that are more closely based on hardware mechanisms \cite{UnderstandingPOWER,ModellingARMv8} are more complex and obscure this relatively straightforward property.
}

\OMIT{
In this framework a memory model is a relationship on instructions, based on syntactic constraints (or semantics, e.g., \refmm{wpmm}).
This fits with the low-level decisions of hardware processors such as ARM and
POWER, but variable references are not in general maintained by compilers (for instance, $r \asgn y \times 0$ may be reduced to $r \asgn 0$, eliminating what is
syntactically a load).  Our main reordering principle \refeqn{reordering-principle} is based on semantic concerns: preserving sequential behaviour.  As such our
semantics may be applicable as a basis for understanding
the interplay of software memory models \cite{BoehmAdveC++Concurrency,HerdingCats,VafeiadisC11}, compiler optimisations and hardware memory models similar to \citep{BridgingPLsHMMs,RustMM}.

\robnote{Add cite to Refine19, FM19?}
}
}

\OMIT{
\begin{acks}
We thank Kirsten Winter and Ian Hayes for feedback on this work.  
We also thank Luc Maranget,
Peter Sewell, Jade Alglave, and Christopher Pulte
for assistance with litmus test analysis.
The work was supported by Australian Research Council Discovery Grant DP160102457.
\end{acks}

\robnote{Add Nick, Jeehoon.  Graeme.  Possibly thank anon reviewers of earlier and related papers}
}



\bibliography{biblio,colvinpubs,references}

\end{document}